\definecolor{lightblue}{rgb}{0.5,0.5,1.0}
\definecolor{darkred}{rgb}{0.5,0,0}
\definecolor{darkgreen}{rgb}{0,0.5,0}
\definecolor{darkblue}{rgb}{0,0,0.5}
\newtheorem*{rep@theorem}{\rep@title}
\newcommand{\newreptheorem}[2]{%
\newenvironment{rep#1}[1]{%
 \def\rep@title{#2 \ref{##1}. (Restated)}%
 \begin{rep@theorem}}%
 {\end{rep@theorem}}}
\theoremstyle{definition}
\newtheorem{theorem}{Theorem}[section]
\newtheorem{lemma}[theorem]{Lemma}
\newtheorem{definition}[theorem]{Definition}
\newtheorem{corollary}[theorem]{Corollary}
\DeclareMathOperator{\Aut}{Aut}
\DeclareMathOperator{\Sym}{Sym}
\DeclareMathOperator{\CFI}{CFI}
\DeclareMathOperator{\rk}{rk}
\numberwithin{equation}{section}
\numberwithin{figure}{section}
\tikzstyle{normalvertex}=[circle,fill=White,draw=Black]
\title{Benchmark Graphs for Practical Graph Isomorphism}
\author{Daniel Neuen and Pascal Schweitzer\\ 
RWTH Aachen University\\
\texttt{\{neuen,schweitzer\}@informatik.rwth-aachen.de}
}
\newcounter{claimcounter}
\newenvironment{claim}[1][]{
  
  \medskip\par\noindent%
  \ifthenelse{\equal{#1}{}}{%
    \setcounter{claimcounter}{0}\refstepcounter{claimcounter}\textit{Claim~\arabic{claimcounter}.}
  }{%
    \ifthenelse{\equal{#1}{resume}}{%
      \refstepcounter{claimcounter}\textit{Claim~\arabic{claimcounter}.}
    }{%
      \textit{Claim~#1.}
    }
  }
}{
  \par\medskip
}
\newcommand{\uend}{\hfill$\lrcorner$}
 \providecommand{\@motsacces}[4]{#4}
 \def\fixstatement#1{%
   \AtEndEnvironment{#1}{%
 
        \xdef\pat@label{\expandafter\expandafter\expandafter
             \@motsacces\csname#1\endcsname\space~\@currentlabel}}}
 \globtoksblk\prooftoks{1000}
\begin{document}

\maketitle

\begin{abstract}
The state-of-the-art solvers for the graph isomorphism problem can readily solve generic instances with tens of thousands of vertices.
Indeed, experiments show that on inputs without particular combinatorial structure the algorithms scale almost linearly.
In fact, it is non-trivial to create challenging instances for such solvers and the number of difficult benchmark graphs available is quite limited. 

We describe a construction to efficiently generate small instances for the graph isomorphism problem that are difficult or even infeasible for said solvers.  

Up to this point the only other available instances posing challenges for isomorphism solvers were certain incidence structures of combinatorial objects (such as projective planes, Hadamard matrices, Latin squares, etc.). 
Experiments show that starting from 1500 vertices our new instances are several orders of magnitude more difficult on comparable input sizes. 
More importantly, our method is generic and efficient in the sense that one can quickly create many isomorphism instances on a desired number of vertices. 
In contrast to this, said combinatorial objects are rare and difficult to generate and with the new construction it is possible to generate an abundance of instances of arbitrary size.

Our construction hinges on the multipedes of Gurevich and Shelah and the Cai-F\"{u}rer-Immerman gadgets that realize a certain abelian automorphism group and have repeatedly played a role in the context of graph isomorphism.
Exploring limits of such constructions, we also explain that there are group theoretic obstructions to generalizing the construction with non-abelian gadgets.
\end{abstract}

\section{Introduction}

The graph isomorphism problem, which asks for structural equivalence of two given input graphs, has been extensively investigated since the beginning of theoretical computer science, both from a theoretical and a practical point of view. Designing isomorphism solvers in practice is a non-trivial task. Nevertheless various efficient algorithms, namely nauty and traces~\cite{mckay}, bliss \cite{bliss}, conauto \cite{conauto}, and saucy \cite{saucy} are available as software packages. These state-of-the-art solvers for the graph isomorphism problem (or more generally graph canonization) can readily solve generic instances with tens of thousands of vertices. Indeed, experiments show that on inputs without particular combinatorial structure the algorithms scale almost linearly.
In practice, this sets the isomorphism problem aside from typical NP-complete problems which usually have an abundance of difficult benchmark instances.
The practical algorithms underlying these solvers differ from the ones employed to obtain theoretical results. Indeed, there is a big disconnect between theory and practice~\cite{DBLP:journals/dagstuhl-reports/BabaiDST15}. One could interpret Babai's recent  breakthrough, the quasipolynomial time algorithm~\cite{DBLP:conf/stoc/Babai16}, as a first step of convergence. The result implies that if graph isomorphism is  NP-complete then all problems in NP have quasi-polynomial time algorithms, which may lead one to also theoretically believe that graph isomorphism is not NP-complete.

In this paper we are interested in creating difficult benchmark instances for the graph isomorphism problem. It would be a misconception to think that for unstructured randomly generated instances graph isomorphism should be hard in practice. Quite the opposite is true. Instances that encompass sufficient randomness usually turn out to be among the easiest instances (see~\cite{nauty:traces:distribution} for extensive tests suites on various kinds of graphs and~\cite{BabErdSelSta:1980,DBLP:conf/focs/Kucera87} for theoretical arguments). In fact, it is non-trivial to create challenging instances for efficient isomorphism solvers and so far the number of difficult benchmark graphs available is quite limited. 

The prime source for difficult instances are graphs arising from combinatorial structures. In 1978, Mathon~\cite{mathon} provided a set of benchmark graphs arising from combinatorial objects such as strongly regular graphs, block designs and coherent configurations. The actual graphs that are provided, having less than 50 vertices, are small by todays standards. Nowadays, larger combinatorial objects yielding difficult graphs are known. The most challenging examples are for instance incidence graphs of projective planes, Hadamard matrices, or Latin squares~(see \cite{nauty:traces:distribution}). It is important to understand that not all such incidence structures yield difficult graphs. Indeed, typically there is an algebraic version that can readily be generated. For example to obtain a projective plane it is possible to consider incidences between one- and two-dimensional subspaces of a three-dimensional vector space over a finite field. If however one uses such an algebraic construction then the resulting graph will automatically have a large automorphism group. The practical isomorphism solvers are tuned to finding such automorphisms and to exploit them for search space contraction. On top of that, there are also theoretical reasons that lead one to believe that graphs with automorphisms make easier examples for the practical tools (see~\cite[Theorem 9]{thesis}). The non-algebraic counterparts of combinatorial structures often do not have the shortcoming of having a large automorphism group, but in contrast to the algebraic constructions they are rare and difficult to generate. 

The second source of difficult instances is based on modifications of the so called Cai-F\"{u}rer-Immerman construction~\cite{cfi}. This construction is connected to the family of  Weisfeiler-Leman algorithms, which for every integer~$k$ contains a~$k$-dimensional version that constitutes a polynomial time algorithm used to distinguish non-isomorphic graphs. The 1-dimensional variant (usually called color refinement) is a subroutine in all competitive practical isomorphism solvers. For larger~$k$, however, the~$k$-dimensional variant becomes impractical due to excessive space consumption. 
For each~$k$, Cai, F\"{u}rer and Immerman construct pairs of non-isomorphic graphs that are not distinguished by the~$k$-dimensional variant of algorithm. In 1996, the construction was
adapted by Miyazaki to explicitly show that the then current version of nauty has exponential running time in the worst case~\cite{miyazaki}. The Cai-F\"{u}rer-Immerman graphs and the Miyazaki graphs constitute families of benchmark graphs that are infeasible for simplistic algorithms for the graph isomorphism problem. Nowadays, however, most state-of-the-art solvers scale reasonably well on these instances (Figures~\ref{fig:benchmark-standard-cfi} and~\ref{fig:benchmark-miyazaki}).

\paragraph{Contribution.} We describe a construction to efficiently generate instances on any desired number of vertices for the graph isomorphism problem that are difficult or even infeasible for all state-of-the-art graph isomorphism solvers.

Experiments show the graphs pose by far the most challenging graphs to date. Already for 1500 vertices our new instances are by several orders of magnitude more difficult than all previously available benchmark graphs on comparable input sizes. More importantly, the algorithm that generates the instances is generic and efficient in the sense that one can quickly create an abundance of isomorphism instances on a desired number of vertices. We can thereby generate instances in size ranges for which no other difficult benchmark graphs are available.

Our construction, the resulting graphs of which we call shrunken multipedes, hinges on a construction of Gurevich and Shelah~\cite{DBLP:journals/jsyml/GurevichS96}. For every~$k$, they describe combinatorial structures that are rigid but cannot be distinguished by the~$k$-dimensional Weisfeiler-Leman algorithm. Guided by the intuition that rigid graphs (i.e., graphs without non-trivial automorphisms) constitute harder instances, we alter the construction to a simple efficient algorithm. In this algorithm we start with a bipartite rigid base graph that is obtained by connecting two explicit graphs (cycles with added diagonals) randomly. We then apply a suitable adaptation of the Cai-F\"{u}rer-Immerman (CFI) construction. We prove that the resulting graph is rigid asymptotically almost surely. Our experiments show that even for small sizes the graphs are already rigid with high probability. The result is a simple randomized algorithm that efficiently creates challenging instances of the graph isomorphism problem. The algorithm is generic in the sense that it can be used to create instances for any desired size range. Moreover it is possible to create many non-isomorphic graphs such that every set of two of them forms a difficult instance.

To create practical benchmark graphs, it is imperative to keep all gadget constructions as small as possible. We therefore employ two techniques to save vertices without changing the local automorphism structure of the gadgets. 
The first technique is based on linear algebra and reduces the vertices in a manner that maintains the rigidity of the graphs. The second technique bypasses certain vertices of the gadgets used in the CFI-construction.

The CFI-gadgets have repeatedly played a role in the context of graph isomorphism. Exploring limits of such constructions, we also explain that there are group theoretic obstructions to generalizing the construction with non-abelian gadgets.

Finally we show with experimental data that our benchmark instances constitute quite challenging problems for all state-of-the-art isomorphism solvers. We compare the running times on the new benchmarks with the combinatorial graphs, the CFI-graphs and the Miyazaki graphs mentioned above.

\paragraph{Theoretical bounds.} 

The benchmark graphs described in this paper are explicitly designed to pose a challenge for practical isomorphism solvers.
Since they have bounded degree they can be solved (theoretically) in polynomial time \cite{DBLP:journals/jcss/Luks82}.
Concerning individualization-refinement algorithms, which the tools mentioned above are, using a different but related construction, exponential lower bounds can be proven.
We refer to~\cite{lower-bound-paper}. 

\paragraph{Obtaining benchmark graphs.}

Our families of benchmark graphs are available for download at~\url{https://www.lii.rwth-aachen.de/research/95-benchmarks.html}\nocite{graphs}. 
However, there are several indications that there is a trade-off between a focus on easy and hard instances.
While some of our instances may constitute the most difficult instances on their respective input size yet, we want to stress the following.
From a practical view point, worst-case running time of isomorphism solvers is not the most important measure.
In many applications, the solvers have to sift through an enormous number of instances, most of which are easy. It is therefore important to perform extremely well on easy instances, and adequately on difficult instances, rather than the other way around.
We refer to the benchmark libraries of nauty and traces (\url{http://pallini.di.uniroma1.it/}), bliss (\url{http://www.tcs.hut.fi/Software/bliss/}) conauto (\url{https://sites.google.com/site/giconauto/}) as well as to the SAT-related benchmarks of saucy (\url{http://vlsicad.eecs.umich.edu/BK/SAUCY/}) for more well-rounded sets of instances in that regard. 
(The first mentioned library explicitly includes the others.)
Nevertheless, we hope that our benchmark graphs help to improve current and future isomorphism solvers.

\section{Preliminaries}
\subsection{Graphs and isomorphisms}

A \emph{graph} is a pair $G=(V,E)$ with vertex set $V = V(G)$ and edge relation $E = E(G)$.
In this work all graphs are simple, undirected graphs. The \emph{neighborhood} of~$v\in V(G)$ is denoted~$N(v)$.
A \emph{path} is a sequence $(v_1,\dots,v_\ell)$ of distinct vertices such that $\{v_i,v_{i+1}\} \in E(G)$ for all $i \in \{1,\dots,\ell-1\}$.
If additionally $\{v_1,v_{\ell}\} \in E(G)$ then the sequence $(v_1,\dots,v_\ell)$ is a \emph{cycle} of $G$.

An \emph{isomorphism} from a graph $G$ to another graph $H$ is a bijective mapping $\varphi\colon V(G) \rightarrow V(H)$ which preserves the edge relation, that is $\{v,w\} \in E(G)$ if and only if $\{\varphi(v),\varphi(w)\} \in E(H)$ for all $v,w \in V(G)$. We use the notation~$G\cong H$ to indicate that such an isomorphism exists. The graph isomorphism problem asks, given two graphs $G$ and $H$, whether there is an isomorphism from $G$ to $H$. 
An \emph{automorphism} of a graph $G$ is an isomorphism from $G$ to itself. By $\Aut(G)$ we denote the automorphisms of $G$.
A graph $G$ is \emph{rigid} (or \emph{asymmetric}) if its automorphism group $\Aut(G)$ is trivial, that is, the only automorphism of $G$ is the identity mapping.

\subsection{The Cai-F\"{u}rer-Immerman Construction}
Our constructions presented in this paper make use of a construction of Cai, F\"{u}rer and Immerman~\cite{cfi}.
In terms of the graph-isomorphism problem, they used the construction to show that for every~$k$ there is a pair of graphs not distinguished by the~$k$-dimensional Weisfeiler-Leman algorithm.
The basic building block for these graphs is the Cai-F\"{u}rer-Immerman gadget $X_3$ depicted in Figure~\ref{fig:cfi-gadget}.
This gadget has~$4$ \emph{inner vertices}~$m_1,\ldots,m_4$ and 3 pairs of~\emph{outer vertices}~$a_i,b_i$ with~$i\in \{1,2,3\}$.
The crucial property of~$X_3$ is that a bijection of the outer vertices mapping the set~$\{a_i,b_i\}$ to itself for all~$i\in \{1,2,3\}$ extends to an automorphism of~$X_3$ if and only if an even number of pairs of outer vertices is swapped.

\begin{figure}
 \centering
 \begin{tikzpicture}
  \node[style=normalvertex,label=right:{$a_1$}] (a1) at (6,2) {};
  \node[style=normalvertex,label=right:{$b_1$}] (b1) at (6,3) {};
  
  \node[style=normalvertex,label=left:{$a_2$}] (a2) at (0,0) {};
  \node[style=normalvertex,label=left:{$b_2$}] (b2) at (0,1) {};
  
  \node[style=normalvertex,label=left:{$a_3$}] (a3) at (0,4) {};
  \node[style=normalvertex,label=left:{$b_3$}] (b3) at (0,5) {};
  
  \node[style=normalvertex,label=above:{$m_1$}] (v1) at (3,4) {};
  \node[style=normalvertex,label=above:{$m_2$}] (v2) at (3,3) {};
  \node[style=normalvertex,label=above:{$m_3$}] (v3) at (3,2) {};
  \node[style=normalvertex,label=above:{$m_4$}] (v4) at (3,1) {};
  
  \path
   (v1) edge (b1)
   (v1) edge (b2)
   (v1) edge (b3)
   (v2) edge (b1)
   (v2) edge (a2)
   (v2) edge (a3)
   (v3) edge (a1)
   (v3) edge (b2)
   (v3) edge (a3)
   (v4) edge (a1)
   (v4) edge (a2)
   (v4) edge (b3);
  
 \end{tikzpicture}
 \caption{Cai-F\"{u}rer-Immerman gadget $X_3$}
 \label{fig:cfi-gadget}
\end{figure}
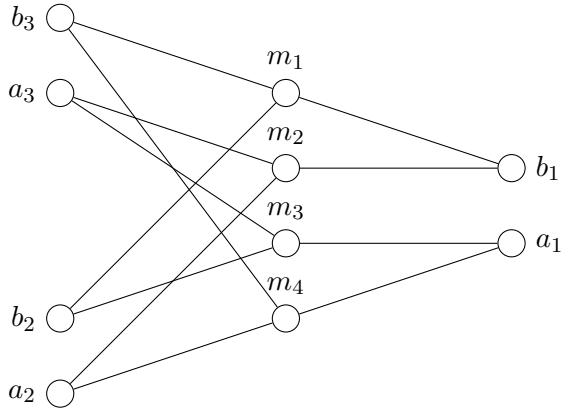

Now let $G$ be a connected 3-regular graph called the \emph{base graph} of the construction and let $T \subseteq E(G)$ be a subset of its edges.
Then the Cai-F\"{u}rer-Immerman construction applied to $G$ results in the following graph $\CFI(G,T)$.
For every vertex $v \in V(G)$ with incident edges $e_i = (v,w_i) \in E(G)$ we add a copy of $X_3$.
The inner vertices are denoted by $m_1(v),\dots,m_4(v)$.
Each pair $\{a_i,b_i\}$ is associated with the edge $e_i$ and we denote the vertices by $a(v,w_i)$ and $b(v,w_i)$.
For every edge $e = \{v,w\} \in E(G)$ with $e \in T$ we add edges $\{a(v,w),b(w,v)\}$ and $\{b(v,w),a(w,v)\}$.
If $e \notin T$ we add edges $\{a(v,w),a(w,v)\}$ and $\{b(v,w),b(w,v)\}$.
The edges $e \in T$ are called \emph{twisted} edges whereas edges $e \notin T$ are \emph{non-twisted}.

Using  the properties of the gadget $X_3$ described above one can now show the isomorphism type of~$\CFI(G,T)$ depends only on the parity of~$|T|$. That is~$\CFI(G,T) \cong \CFI(G,T')$ if and only if $|T| \equiv |T'| \mod 2$. We obtain a pair of non-isomorphic graphs $\CFI(G) := \CFI(G,\emptyset)$ and $\widetilde{\CFI}(G) := \CFI(G,\{e\})$ for some edge $e \in E(G)$.

For appropriate base graphs $G$ (more precisely graphs with sufficiently large tree width~\cite{DBLP:conf/csl/DawarR07}) these graphs are difficult to distinguish by the family of Weisfeiler-Leman algorithms. This also makes them good candidates for being difficult instances for graph isomorphism testing.
However, the practical isomorphism solvers can cope with CFI graphs reasonably well.
The main reason for this is that the underlying algorithms can take advantage of automorphisms of the input graphs to restrict their search space and CFI graphs typically have many automorphisms. 
Indeed, let $C = (v_1,\dots,v_\ell)$ be a cycle in the base graph $G$.
Then there is an automorphism of $\CFI(G)$ that exactly swaps those $a(v,w)$ and $b(v,w)$ for which $\{v,w\}$ is an edge in $C$ and leaves all other $a(v,w)$ and $b(v,w)$ vertices fixed.
Thus, if~$\ell$ is the dimension of the cycle space of~$G$ then~$\CFI(G)$ has at least~$2^{\ell}$ automorphisms.
In the next section we address this issue and describe a similar but probabilistic construction that gives with high probability graphs that have no non-trivial automorphisms.

\section{A Rigid Base Construction}\label{sec-rigid}

\subsection{Desirable properties}

Before we describe our construction we would like to give some intuition about the desirable properties that we want the final graph to have and why we think they are responsible for making the graphs difficult.

\begin{itemize}
\item (rigidity) First of all there are some arguments that may make us believe that rigid graphs are more difficult for isomorphism solvers than graphs with automorphisms.
Indeed, the graphs for which we can prove theoretical lower bounds~\cite{lower-bound-paper} are rigid.
Furthermore, for automorphism group computation, it is possible to obtain upper bounds on the running time in terms of~$|\Gamma|/|\Aut(G)|$ where~$\Gamma$ is a group known to contain all automorphisms (see~\cite{thesis} for a discussion).
However, we can also offer an intuitive explanation.
Isomorphism solvers are used in practice in search algorithms (e.g., SAT-solvers) to exploit symmetry and thereby cut off parts of the search tree.
However, the isomorphism solvers themselves also exploit this strategy in bootstrapping manner, using symmetries to cut off parts of their own search tree.

\item (small constants)
For our practical purposes it is imperative to keep the graphs small. 
We therefore need to diverge from the theoretical construction in~\cite{lower-bound-paper}.
As a crucial part of the construction we devise two methods to reduce the number of vertices while maintaining the difficulty level of the graph.
These reductions are described in the next section.

\item (simplicity) 
We strive to have a simple construction that is not only easy to understand but can also be quickly generated by a simple algorithm.
In contrast, for many other graphs coming from combinatorial constructions it takes far longer to construct the graphs than to perform isomorphism tests.
The simplicity also allows us to construct an abundance of benchmark graphs.

\item (difficult for the WL algorithm) The Weisfeiler-Leman algorithms are a family of theoretical graph isomorphism algorithms.
Rather than giving a description of the algorithm and as to why one may believe it can be used as a measure of the difficulty of a graph, we refer to~\cite{lower-bound-paper}.
\end{itemize}

\subsection{The multipede construction}

In the following we describe an explicit construction for families of similar, rigid, non-isomorphic graphs. This construction is based on the multipedes of Gurevich and Shelah~\cite{DBLP:journals/jsyml/GurevichS96} which yields finite rigid structures and it uses the construction by Cai, F\"{u}rer and Immerman~\cite{cfi}.

Let $G=(V,W,E)$ be a bipartite graph such that every vertex $v \in V$ has degree 3.
Define the multipede construction as follows.
We replace every $w \in W$ by two vertices $a(w)$ and $b(w)$.
For $w \in W$ let $F(w) = \{a(w),b(w)\}$ and for $X \subseteq W$ let $F(X) = \bigcup_{w \in X}F(w)$.
We then replace every $v \in V$ by a copy of $X_3$ and identify the vertices $a_i$ and $b_i$ with $a(w_i)$ and $b(w_i)$ where $w_1,\dots,w_3$ are the neighbors of $v$.
The resulting graph will be denoted by $R(G)$. An example of this construction is shown in Figure~\ref{fig:bipartite:cfi:constr}.
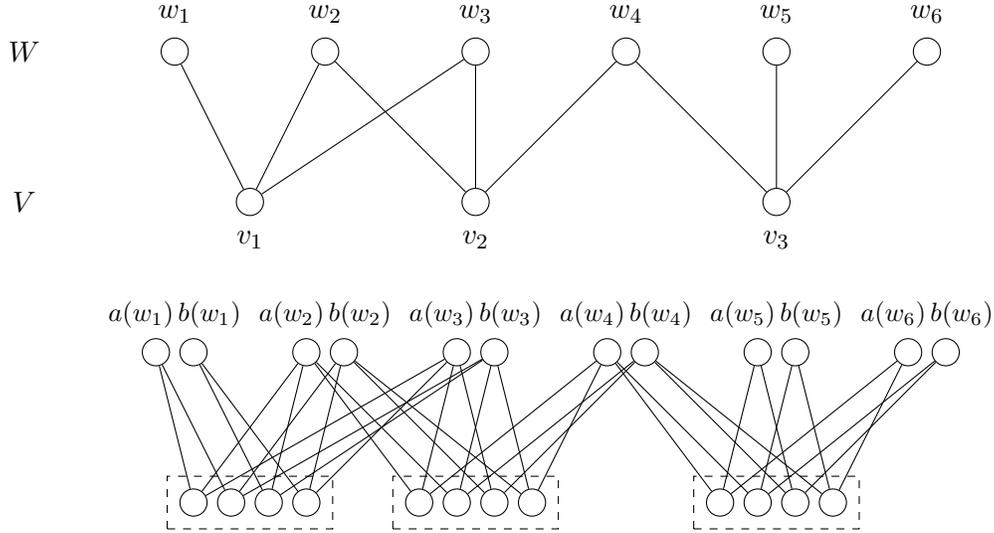
\begin{figure}
\centering
\begin{tikzpicture}
\tikzstyle{normalvertex}=[circle, draw]
		\node [style=normalvertex] (0) at (-11, -2) {};
		\node [style=normalvertex] (1) at (-8, -2) {};
		\node [style=normalvertex] (2) at (-4, -2) {};
		\node [style=normalvertex] (3) at (-12, -0) {};
		\node [style=normalvertex] (4) at (-10, -0) {};
		\node [style=normalvertex] (5) at (-8, -0) {};
		\node [style=normalvertex] (6) at (-6, -0) {};
		\node [style=normalvertex] (7) at (-2, -0) {};
		\node [style=normalvertex] (8) at (-4, -0) {};
		\node [style=normalvertex] (9) at (-11.75, -4) {};
		\node [style=normalvertex] (10) at (-12.25, -4) {};
		\node [style=normalvertex] (11) at (-10.25, -4) {};
		\node [style=normalvertex] (12) at (-9.75, -4) {};
		\node [style=normalvertex] (13) at (-8.25, -4) {};
		\node [style=normalvertex] (14) at (-7.75, -4) {};
		\node [style=normalvertex] (15) at (-11.25, -6) {};
		\node [style=normalvertex] (16) at (-11.75, -6) {};
		\node [style=normalvertex] (17) at (-10.75, -6) {};
		\node [style=normalvertex] (18) at (-10.25, -6) {};
		\node [style=normalvertex] (19) at (-8.75, -6) {};
		\node [style=normalvertex] (20) at (-8.25, -6) {};
		\node [style=normalvertex] (21) at (-7.75, -6) {};
		\node [style=normalvertex] (22) at (-7.25, -6) {};
		\node [style=normalvertex] (23) at (-6.25, -4) {};
		\node [style=normalvertex] (24) at (-5.75, -4) {};
		\node [style=normalvertex] (25) at (-4.25, -4) {};
		\node [style=normalvertex] (26) at (-3.75, -4) {};
		\node [style=normalvertex] (27) at (-2.25, -4) {};
		\node [style=normalvertex] (28) at (-1.75, -4) {};
		\node [style=normalvertex] (29) at (-4.75, -6) {};
		\node [style=normalvertex] (30) at (-4.25, -6) {};
		\node [style=normalvertex] (31) at (-3.75, -6) {};
		\node [style=normalvertex] (32) at (-3.25, -6) {};
	
		\draw (3) to (0);
		\draw (0) to (4);
		\draw (0) to (5);
		\draw (4) to (1);
		\draw (1) to (5);
		\draw (1) to (6);
		\draw (6) to (2);
		\draw (2) to (7);
		\draw (8) to (2);
		\draw (10) to (16);
		\draw (10) to (15);
		\draw (9) to (17);
		\draw (9) to (18);
		\draw (11) to (16);
		\draw (11) to (17);
		\draw (12) to (15);
		\draw (12) to (18);
		\draw (13) to (16);
		\draw (13) to (18);
		\draw (14) to (15);
		\draw (17) to (14);
		\draw (19) to (11);
		\draw (11) to (20);
		\draw (21) to (12);
		\draw (12) to (22);
		\draw (13) to (19);
		\draw (13) to (21);
		\draw (14) to (20);
		\draw (14) to (22);
		\draw (23) to (19);
		\draw (23) to (22);
		\draw (24) to (20);
		\draw (24) to (21);
		\draw (23) to (29);
		\draw (23) to (30);
		\draw (24) to (31);
		\draw (24) to (32);
		\draw (25) to (29);
		\draw (25) to (31);
		\draw (26) to (30);
		\draw (26) to (32);
		\draw (27) to (29);
		\draw (27) to (32);
		\draw (28) to (30);
		\draw (28) to (31);
	
\node at (-14,0) {$W$};

\node at (-14,-2) {$V$};

\node at (-11,-2.5) {$v_1$};

\node at (-8,-2.5) {$v_2$};

\node at (-4,-2.5) {$v_3$};

\foreach \x in {1,...,6}{
\node at (-14+2*\x,0.5) {$w_\x$};
}

\foreach \x in {0,3,7}{
\draw[dashed]  (-12.1+\x,-5.65) rectangle (-9.9+\x,-6.35);
}
\foreach \x in {1,...,6}{

\node at (-14+2*\x,-3.5) {\small{$a(w_\x)\, b(w_\x)$}};
}
\end{tikzpicture}
 \caption[The bipartite~CFI-construction]{The figure depicts a base graph~$G$ on the top and the graph~$R(G)$ obtained by applying the multipede construction on the bottom.}
 \label{fig:bipartite:cfi:constr}
\end{figure}

\begin{definition}
 Let $G=(V,W,E)$ be a bipartite graph such that every vertex $v \in V$ has degree~3.
 We say $G$ is \emph{odd} if for every $\emptyset \neq X \subseteq W$ there is some $v \in V$ such that $|X \cap N(v)|$ is odd.
\end{definition}

If $G$ is not odd we call it \emph{even} and the corresponding set $X$ providing the proof for not being odd is called a \emph{witness}.

For a graph $G$ and a vertex $v \in V(G)$ we define the \emph{second neighborhood} of~$v$ to be the set $N_G^{2}(v) = \{u \in V(G) \mid u \neq v \wedge \exists w \in V(G) \colon \{v,w\}, \{w,u\} \in E(G) \}$.

\begin{lemma}
 \label{la:rigidity-cfi-graph}
 Let $G = (V,W,E)$ be a bipartite graph, such that
 \begin{enumerate}
  \item $|N(v)| = 3$ for all $v \in V$,
  \item $G$ is odd,
  \item $G$ is rigid, and
  \item there are no distinct $w_1,w_2 \in W$, such that $N_G^{2}(w_1) = N_G^{2}(w_2)$.
 \end{enumerate}
 Then $R(G)$ is rigid.
\end{lemma}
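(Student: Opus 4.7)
The plan is to let $\varphi \in \Aut(R(G))$ and show $\varphi = \operatorname{id}$ in four stages. First, $\varphi$ respects the partition of $V(R(G))$ into inner and outer vertices: every inner vertex has degree $3$, whereas every outer vertex $a(w)$ or $b(w)$ has degree $2|N_G(w)|$, which is even; these two degrees never coincide, so $\varphi$ maps inner vertices to inner vertices and outer vertices to outer vertices.

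Second, I will show that $\varphi$ maps gadgets to gadgets and outer pairs to outer pairs, thereby inducing an automorphism $\psi$ of the base graph $G$. Each gadget at $v$ consists of four inner vertices whose only $R(G)$-neighbors lie among the six outer vertices in $F(N_G(v))$, and within the gadget the four inner vertices form a particular bipartite pattern with the three outer pairs (any two inner vertices of the same gadget share exactly one common outer neighbor, which uniquely determines an outer pair). From this local structure I can show that the four-element set of inner vertices of any gadget is mapped as a whole to the inner vertex set of another gadget, yielding a bijection $\psi_V$ on $V$. The induced image of each outer pair is then determined by the pair of inner vertices of the source gadget that meet it, giving a bijection $\psi_W$ on $W$. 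Hypothesis~4 is used here: if two distinct $w_1, w_2 \in W$ had $N_G^2(w_1) = N_G^2(w_2)$, the outer pairs $F(w_1), F(w_2)$ would be indistinguishable in $R(G)$ at the pair level, allowing a swap that does not descend to an automorphism of $G$; ruling this out ensures that $\psi_W$ is well-defined. Together $(\psi_V, \psi_W)$ forms an automorphism $\psi$ of $G$, and by hypothesis~3 we obtain $\psi = \operatorname{id}$.

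Third, since $\psi = \operatorname{id}$, $\varphi$ fixes each gadget setwise and each outer pair $\{a(w), b(w)\}$ setwise. Define $S = \{w \in W : \varphi(a(w)) = b(w)\}$. For each $v \in V$, the restriction of $\varphi$ to the gadget at $v$ is an automorphism of $X_3$ that stabilises every outer pair setwise; by the defining property of $X_3$ recalled in the preliminaries, such an automorphism swaps an even number of pairs, so $|S \cap N_G(v)|$ is even for every $v \in V$. Oddness of $G$ (hypothesis~2) then forces $S = \emptyset$, so $\varphi$ fixes every outer vertex pointwise.

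Finally, the restriction of $\varphi$ to each gadget is an automorphism of $X_3$ fixing every outer vertex. Since the four inner vertices of $X_3$ have pairwise distinct neighbourhoods among the six outer vertices, this restriction must be the identity, and hence $\varphi = \operatorname{id}$. The main obstacle is the second stage: one must carefully translate the local combinatorial structure of $R(G)$ into a genuine automorphism of the base graph $G$, and hypothesis~4 is precisely the assumption needed to prevent pair-swap pathologies in this translation. Stages one, three and four are then routine applications of degree counting, the CFI-gadget property, and the remaining two hypotheses.
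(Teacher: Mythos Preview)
Your overall four-stage plan coincides with the paper's: degree parity separates inner from outer vertices, then one produces an induced automorphism of $G$ which must be trivial, then oddness kills the swap-set $S$, and finally distinct inner neighbourhoods finish the job. Stages 1, 3 and 4 are essentially identical to the paper.

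The substantive difference is in stage 2, and here your argument has a gap. You try to establish ``gadgets map to gadgets'' first, from local structure alone, and only afterwards invoke hypothesis~4 to make $\psi_W$ well-defined. But the local property you cite --- that two inner vertices of the same gadget share exactly one common outer neighbour --- does not characterise ``same gadget'': two inner vertices from \emph{different} gadgets at $v_1,v_2\in V$ with $|N_G(v_1)\cap N_G(v_2)|\geq 1$ can also share exactly one outer neighbour. So it is not clear, from what you have written, why the image of $\{m_1(v),\dots,m_4(v)\}$ cannot be spread across several gadgets. Moreover, once one \emph{does} know gadgets map to gadgets, the map on outer pairs is automatically well-defined (it is just $\varphi$ restricted to $F(W)$), so hypothesis~4 would play no role where you place it.

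The paper resolves stage 2 in the opposite order, and this is where hypothesis~4 actually bites. One observes directly that for every $w\in W$,
\[
N_{R(G)}^{2}(a(w)) = N_{R(G)}^{2}(b(w)) = F\bigl(N_G^{2}(w)\bigr).
\]
Since $\varphi$ preserves second neighbourhoods, if $\varphi(a(w))\in F(w_1)$ and $\varphi(b(w))\in F(w_2)$ then $N_G^{2}(w_1)=N_G^{2}(w_2)$, and hypothesis~4 forces $w_1=w_2$. Thus outer pairs map to outer pairs \emph{first}; from this it is immediate that each gadget maps to a gadget, and the induced map on $V\cup W$ is an automorphism of $G$. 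Your stage 2 can be repaired simply by adopting this second-neighbourhood argument.
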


\begin{proof}
 Suppose $\psi \in \Aut(R(G))$.
 First observe that $F(W)$ is invariant under $\psi$ since every vertex in $F(W)$ has even degree whereas all other vertices have degree $3$.
 For $w \in W$ it holds that \[N_{R(G)}^{2}(a(w)) = N_{R(G)}^{2}(b(w)) = F(N_G^{2}(w)).\]
 Let $w_1,w_2 \in W$, such that $\psi(a(w)) \in F(w_1)$ and $\psi(b(w)) \in F(w_2)$.
 Then $F(N_G^{2}(w_1)) = F(N_G^{2}(w_2))$ and consequently $N_G^{2}(w_1) = N_G^{2}(w_2)$.
 By the last condition we conclude that $w_1 = w_2$.
 So for every $w \in W$ there is some $w' \in W$, such that $\psi(F(w)) = F(w')$.
 From this it easily follows that the same holds for vertices in $V$, that is for every $v \in V$ there is some $v' \in V$, such that $\{\psi(m_1(v)),\dots,\psi(m_4(v))\} = \{m_1(v'),\dots,m_4(v')\}$.
 So $\psi$ induces a permutation $\psi' \in \Sym(V(G))$ with $\psi'(w) = w'$ for all $w \in W$ and $\psi'(v) = v'$ for all $v \in V$.
 Since $\psi'$ is an automorphism of $G$ we conclude that $\psi'$ is the identity.
 So for every $w \in W$ it holds that $\psi(F(w)) = F(w)$.
 
 Now let $X = \{w \in W \mid \psi(a(w)) = b(w)\}$ and suppose $X \neq \emptyset$.
 Since $G$ is odd there is some $v \in V(G)$, such that $|X \cap N(v)|$ is odd.
 But then $\psi$ restricts to an automorphism of $X_3$ that twists an odd number of pairs $\{a_i,b_i\}$ contradicting the properties of $X_3$.
 Thus $X = \emptyset$ and every vertex in $F(W)$ is fixed under $\psi$.
 But this immediately implies that $\psi$ is the identity.
\end{proof}

In the following we will present a simple randomized algorithm to construct rigid bipartite graphs that are odd.
Let $G$ be a 3-regular graph and let $\sigma: E(G) \rightarrow E(G)$ be a random permutation of the edges of $G$.
We define the bipartite graph $B(G,\sigma) = (V_B, W_B, E_B)$ by setting
\begin{align*}
 &V_B = V(G) \times \{0,1\},\\
 &W_B = E(G),\\
 &E_B = \{\{(v,0),e\}\mid v \in e\} \cup \{\{(v,1),e\}\mid v \in \sigma(e)\}.
\end{align*}

Thus, the edges of~$G$ correspond to the vertices in the partition class~$W_B$ of~$B(G,\sigma)$.  Each such edge~$e  = \{v,w\}$ has an associated edge~$\sigma(e) = \{v',w'\}$ and~$e$ has exactly four neighbors, namely~$(v,0)$,~$(w,0)$, $(v',1)$ and~$(w',1)$. Another way of visualizing this construction is to start with two copies of~$G$, subdivide all edges in each copy and then identify the newly added vertices in the one copy with the newly added vertices in the other copy using a randomly chosen matching.

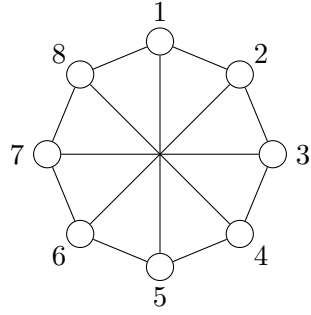
\begin{figure}
 \centering
 \begin{tikzpicture}
  \foreach \pos in {1,...,8}{
   \node[style=normalvertex] (\pos) at (45*\pos:1.5) {};
   \node at (135-45*\pos:1.9) {$\pos$};
  }
  
  \path
   (1) edge (2)
   (2) edge (3)
   (3) edge (4)
   (4) edge (5)
   (5) edge (6)
   (6) edge (7)
   (7) edge (8)
   (8) edge (1)
   (1) edge (5)
   (2) edge (6)
   (3) edge (7)
   (4) edge (8);
 \end{tikzpicture}
 \caption{The graph $G_4$}
 \label{fig:graph-g4}
\end{figure}

We define~$G_n$ to be the~\emph{$2n$-cycle with diagonals added}. More precisely, $G_n := (V_n,E_n)$ where $V_n = \{1,\dots,2n\}$ and $E_n = \{\{i,i+1 \bmod 2n\}\mid 1 \leq i \leq 2n\} \cup \{\{i,i+n\}\mid 1 \leq i \leq n\}$. See Figure~\ref{fig:graph-g4} for an example. We call the edges of the form~$\{i,i+n\}$ \emph{diagonals}.

In reference to~\cite{DBLP:journals/jsyml/GurevichS96}  we call the graphs~$R(B(G_n,\sigma))$ the multipede graphs. The entire algorithm to construct $R(B(G_n, \sigma))$ is again summarized by Algorithms \ref{alg:bipartite-construction} and \ref{alg:rigid-cfi}. It can be computed that the graph~$R(B(G_n,\sigma))$ has an average degree of~$48/11\approx 4.363$.

\begin{algorithm}
 \caption{Rigid Odd Base Graph Construction}
 \label{alg:bipartite-construction}
 \DontPrintSemicolon
 \SetKwInOut{Input}{Input}
 \SetKwInOut{Output}{Output}
 \Input{$n \in \mathbb{N}$}
 \Output{Graph $B = B(G_n,\sigma)$}
 \BlankLine
 $V(B) := V(G_n) \times \{0,1\} \cup E(G_n)$\;
 $E(B) := \emptyset$\;
 randomly choose permutation $\sigma \in \Sym(E(G_n))$\;
 \For{$v \in V(G_n) \times \{0\}$}{
   $E(B) := E(B) \cup \{ve \mid e \in E(G_n) \text{ with } v \in e\}$\;
 }
 \For{$v \in V(G_n) \times \{1\}$}{
   $E(B) := E(B) \cup \{ve \mid e \in E(G_n) \text{ with } v \in \sigma(e)\}$\;
 }
 return $B$
\end{algorithm}

\begin{algorithm}
 \caption{The multipede graph construction}
 \label{alg:rigid-cfi}
 \DontPrintSemicolon
 \SetKwInOut{Input}{Input}
 \SetKwInOut{Output}{Output}
 \Input{Bipartite graph $B=(V,W,E)$ such that every vertex $v \in V$ has degree $3$}
 \Output{Graph $G = R(B)$}
 \BlankLine
 $V(G) := \{m_i(v)\mid v \in V, i \in \{0,1\}^{3} \text { with number of ones in~$i$ is even}\} \cup \{a(w),b(w)\mid w \in W\}$\;
 $E(G) := \emptyset$\;
 \For{$v \in V$}{
  $\{w_1,w_2,w_3\} := N(v)$ \tcc*[r]{fix order on the neighbors of $v$}
  \For{$j \in \{1,2,3\} \text{ and } i \in \{0,1\}^{3}  \text{ with number of ones in~$i$ is even } $}{
   \eIf{$i_j = 0$}{
    $E(G) := E(G) \cup \{\{m_i(v),a(w_j)\}\}$ \tcc*[r]{connect $m_i(v)$ to $a(w_j)$}
   }{
    $E(G) := E(G) \cup \{\{m_i(v),b(w_j)\}\}$ \tcc*[r]{connect $m_i(v)$ to $b(w_j)$}
   }
  }
 }
 return $G$
\end{algorithm}

We can analyze this construction and show that for the base graphs $G_n$ the resulting graphs $R(B(G_n,\sigma))$ are with high probability rigid. For this, 
building on Lemma \ref{la:rigidity-cfi-graph}, we show that with high probability the graph $B(G_n,\sigma)$ is odd, rigid, and has no distinct vertices in $W_B$ with equal second neighborhoods.

\begin{theorem}\label{thm:construction:is:rigid}
 The probability that $R(B(G_n, \sigma))$ is not rigid is in $\mathcal{O}\left(\frac{\log^{2}n}{n}\right)$.
\end{theorem}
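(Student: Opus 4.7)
The plan is to apply Lemma~\ref{la:rigidity-cfi-graph} with $G := B(G_n, \sigma)$ and show that, with probability $1 - \mathcal{O}(\log^2 n/n)$, each of its four hypotheses is satisfied. Hypothesis~(1), that $|N(v)| = 3$ for every $v \in V_B$, is deterministic from the construction: $(v,0)$ is incident to the three edges of $G_n$ containing $v$ and $(v,1)$ to the three edges whose $\sigma$-image contains $v$. Thus the task reduces to showing that the remaining three properties each fail with probability $\mathcal{O}(\log^2 n/n)$, and a final union bound closes the argument.

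First I would handle oddness. Unfolding the definitions, $N_{B}((v,0)) = E(v)$ and $N_{B}((v,1)) = \sigma^{-1}(E(v))$, where $E(v) \subseteq E(G_n)$ is the $G_n$-star of $v$. Hence $B(G_n,\sigma)$ is odd iff there is no nonempty $X \subseteq E(G_n)$ with $X, \sigma(X) \in Z$, where $Z := Z(G_n) \subseteq \mathbb{F}_2^{E(G_n)}$ is the cycle space. Since $G_n$ is connected with $3n$ edges and $2n$ vertices, $\dim Z = n+1$. For any fixed nonempty $X$ with $|X|=k$, the image $\sigma(X)$ is a uniformly random $k$-subset of $E(G_n)$, so a union bound yields
\begin{equation*}
 \Pr[B(G_n,\sigma)\text{ not odd}] \;\le\; \sum_{k\ge 4} \frac{z_k^{\,2}}{\binom{3n}{k}},
\end{equation*}
where $z_k$ counts size-$k$ elements of $Z$, and the sum starts at $k = 4$ since $G_n$ has girth $4$ (e.g.\ $v_i, v_{i+1}, v_{n+i+1}, v_{n+i}$).

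The heart of the proof, and the main obstacle, is bounding this sum by $\mathcal{O}(\log^2 n/n)$. I would split the range of $k$: for small $k \le C\log n$, use a direct combinatorial bound $z_k = \mathcal{O}(n^{\lfloor k/4\rfloor})$ (each even subgraph decomposes into cycles of length at least $4$, and $G_n$ has bounded degree, so the number of $k$-edge even subgraphs is polynomial with the given exponent), giving term $\mathcal{O}(n^{\lfloor k/4\rfloor \cdot 2 - k}) = \mathcal{O}(n^{-k/2})$ summing to $\mathcal{O}(1/n^2)$; for intermediate $k$ one exploits $|Z| = 2^{n+1} \ll 2^{3n} = |\mathbb{F}_2^{E(G_n)}|$ so that $z_k^2/\binom{3n}{k}$ is exponentially small; and for $k$ near $3n$ one uses the involution $Y \mapsto E(G_n) \setminus Y$ on $Z$ (the full edge-set of the $4$-regular graph~$G_n$ lies in $Z$) to reduce to the small-$k$ case. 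The logarithmic factors arise in matching the cutoff $C\log n$ to the transition between the two regimes.

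For condition~(3), rigidity of $B(G_n,\sigma)$: any automorphism $\psi$ preserves the two sides by the degree split ($|N(v)|=3$ for $v\in V_B$ vs.\ $|N(e)|=4$ for $e\in W_B$), and the further structure of three edges per vertex of $V_B$, two coming from the $G_n$-star and the other from the $\sigma$-twisted star, forces $\psi$ to induce a permutation of $E(G_n)$ that simultaneously respects both star structures. This yields a nontrivial pair $(\alpha,\beta)\in \Aut(G_n)^2$ with an intertwining relation $\sigma\beta=\alpha\sigma$ on edges. Since $|\Aut(G_n)|$ is polynomial in $n$ (at most the dihedral action of order $4n$ together with possible diagonal-cycle swaps), a union bound over pairs $(\alpha,\beta)$, each of which constrains the random permutation $\sigma$ on a large set of edges, gives a failure probability $\mathcal{O}(1/n)$. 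Condition~(4) is handled by a similar but simpler union bound: for distinct $e_1\ne e_2 \in E(G_n)$, writing $S(e)$ for the set of edges sharing an endpoint with $e$ in $G_n$, the second $B$-neighborhood of $e$ is $S(e) \cup \sigma^{-1}(S(\sigma(e)))$, and the probability that two such sets coincide for a specific pair is $\mathcal{O}(1/n^{c})$ for a suitable $c \ge 3$; summing over the $\mathcal{O}(n^2)$ pairs gives $\mathcal{O}(1/n)$. Combining the three bounds yields the claimed $\mathcal{O}(\log^2 n/n)$ failure probability, with the oddness estimate being the binding constraint.
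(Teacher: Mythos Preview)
Your overall plan---reduce to Lemma~\ref{la:rigidity-cfi-graph} and bound the failure probability of each hypothesis separately---is exactly the paper's strategy. However, the execution has genuine gaps in two of the three probabilistic estimates.

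\medskip
\textbf{Oddness.} Your reduction to the cycle space is correct and matches Lemma~\ref{la-2-reg} in the paper. But the subsequent bound on $\sum_k z_k^2/\binom{3n}{k}$ does not go through as sketched:
\begin{itemize}
\item The claim $z_k = \mathcal{O}(n^{\lfloor k/4\rfloor})$ is not justified by ``decomposition into cycles of length $\ge 4$ in a bounded-degree graph''. A single $k$-cycle count in a degree-$3$ graph is already of order $n\cdot 3^{k}$; for $k$ near $\log n$ this polynomial factor in $n$ has exponent growing with $k$, so the asserted exponent $\lfloor k/4\rfloor$ needs a structural argument specific to $G_n$, not a generic one.
\item The ``intermediate $k$'' bound $z_k\le 2^{n+1}$ gives $z_k^2/\binom{3n}{k}\le 4^{n+1}/\binom{3n}{k}$, which is only small once $k$ is of order $n$ (e.g.\ $\binom{3n}{k}\gg 4^n$ requires roughly $k\gtrsim n/2$). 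It says nothing for $k$ between $C\log n$ and $\epsilon n$, so the two regimes do not meet.
\item The complementation $Y\mapsto E(G_n)\setminus Y$ is not available: $G_n$ is $3$-regular, not $4$-regular, so the full edge set is \emph{not} in the cycle space. (In fact $2$-regular subgraphs of a cubic graph have at most $2n$ edges, so the range ``$k$ near $3n$'' is empty anyway.)
\end{itemize}
The paper's argument avoids all of this by exploiting the specific structure of $G_n$: a $2$-regular subgraph is determined, up to a factor of $2$, by its set of diagonal edges (since the remaining cycle edges must complete the degree constraints along the $2n$-cycle). This gives $z_\ell\le \sum_k 2\binom{n}{k}$ with $k$ the number of diagonals, and because there are only $n$ diagonals versus $3n$ edges, one gets $(2/3)^\ell$-type decay in the middle range. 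Your generic cycle-space bounds do not capture this.

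\medskip
\textbf{Rigidity of $B(G_n,\sigma)$.} The degree argument separates $V_B$ from $W_B$, but it does \emph{not} separate $V(G_n)\times\{0\}$ from $V(G_n)\times\{1\}$. Your sketch only treats automorphisms that stabilise both parts, yielding the conjugacy relation $\sigma\beta=\alpha\sigma$ on edges; this is the paper's Claim~1. You are missing:
\begin{itemize}
\item automorphisms that swap the two copies $V(G_n)\times\{0\}\leftrightarrow V(G_n)\times\{1\}$ (the paper handles this via an involution count on $\sigma\cdot\psi^{E}$), and
\item automorphisms that do not respect the partition $\{V(G_n)\times\{0\},V(G_n)\times\{1\}\}$ at all (the paper rules these out by a local $8$-cycle/subdivided-$D$ argument showing such a configuration is $\mathcal{O}(1/n)$-unlikely).
\end{itemize}
Also, your sentence ``three edges per vertex of $V_B$, two coming from the $G_n$-star and the other from the $\sigma$-twisted star'' is not correct: all three neighbours of $(v,0)$ come from the $G_n$-star of $v$, and all three neighbours of $(v,1)$ from the twisted star.

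\medskip
\textbf{Second neighbourhoods.} The overall $\mathcal{O}(1/n)$ bound is right, but the per-pair probability is not uniformly $\mathcal{O}(1/n^3)$: for the $\mathcal{O}(n)$ pairs $e_1,e_2$ with $|N(e_1)\cap N(e_2)|=2$ it is only $\mathcal{O}(1/n^2)$, and one must use that there are only linearly many such pairs. This is a minor point compared to the two gaps above.
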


The lengthy proof of the theorem can be found in Appendix~\ref{app:proof:of:rigidity:theorem}.

While it is difficult to extract hidden constants in the theorem, our experiments show that even for small $n$ the probability that~$R(B(G_n, \sigma))$ is not rigid is close to~$0$.
Let us also remark that we can create non-isomorphic graphs by switching the neighborhoods of the vertices~$a(w)$ and~$b(w)$ within one of the gadgets when creating~$R(B(G_n,\sigma))$ from~$B(G_n,\sigma)$.
Since the graph is rigid, doing so for different subsets~$W'\subseteq W$ we 
obtain many graphs that pose difficult isomorphism instances.

\section{The shrunken multipedes}
\label{sec:reduce-vertices}

As we described before we are interested in keeping the number of vertices of our construction small.
In this section we describe two methods to reduce the number of vertices of the multipede graphs while, according to our experiments, essentially preserving the difficulty.

\subsection{A linear algebra reduction}\label{subsec:reduce-linalg}

For a bipartite graph $G = (V,W,E)$ let $A_G \in \mathbb{F}_2^{V \times W}$ be the matrix with $A_{v,w} = 1$ if and only if $vw \in E(G)$ and let $\rk(A)$ be the $\mathbb{F}_2$-rank of $A$.

\begin{lemma}
 Let $G=(V,W,E)$ be a bipartite graph.
 Then $G$ is odd if and only if $\rk(A_G) = |W|$.
\end{lemma}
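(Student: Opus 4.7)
The plan is to translate the combinatorial condition defining "odd" into a linear-algebraic statement over $\mathbb{F}_2$. The key observation is that subsets $X \subseteq W$ correspond bijectively to vectors $x \in \mathbb{F}_2^W$ via the indicator function $x = \mathbf{1}_X$, with nonempty subsets corresponding to nonzero vectors.

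Under this correspondence, I would compute the $v$-th coordinate of $A_G x$ as
\[
(A_G \mathbf{1}_X)_v \;=\; \sum_{w \in W} A_{v,w} (\mathbf{1}_X)_w \;=\; \sum_{w \in X \cap N(v)} 1 \;=\; |X \cap N(v)| \bmod 2,
\]
where the sum is taken in $\mathbb{F}_2$. Hence the existence of some $v \in V$ with $|X \cap N(v)|$ odd is equivalent to $A_G \mathbf{1}_X \neq 0$.

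The "odd" condition therefore reads: for every nonzero $x \in \mathbb{F}_2^W$, $A_G x \neq 0$, i.e., $A_G$ has trivial kernel. Since $A_G$ is a matrix with $|W|$ columns over a field, trivial kernel is equivalent to having full column rank, namely $\rk(A_G) = |W|$. Combining both implications yields the desired equivalence.

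There is no real obstacle in this proof; the content is a straightforward dictionary between $\{0,1\}$-vectors over $\mathbb{F}_2$ and subsets of $W$, together with the definition of matrix-vector multiplication. The only thing to be careful about is that every vector in $\mathbb{F}_2^W$ arises as an indicator vector of some subset, so quantifying over nonempty $X \subseteq W$ genuinely matches quantifying over nonzero $x \in \mathbb{F}_2^W$.
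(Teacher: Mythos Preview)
Your proof is correct and follows essentially the same approach as the paper's: both identify subsets $X\subseteq W$ with their indicator vectors in $\mathbb{F}_2^W$, observe that $(A_G\mathbf{1}_X)_v \equiv |X\cap N(v)|\bmod 2$, and conclude that oddness is equivalent to $A_G$ having trivial kernel, i.e.\ full column rank. The only cosmetic difference is that the paper writes out the two directions separately by contradiction, whereas you state the equivalence directly.
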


\begin{proof}
 For the forward direction suppose $x \in \mathbb{F}_2^{W}$ such that $A_Gx = 0$. Set $X = \{w \in W \mid x_w = 1\}$.
 Suppose towards a contraction that $X \neq \emptyset$.
 Since $G$ is odd there is some $v \in V$ such that $|N(v) \cap X|$ is odd.
 But then $(A_G)_vx = 1$ where $(A_G)_v$ is the $v$-th row of $A_G$.
 This is a contraction and thus, $\{x\in \mathbb{F}_2^{W} \mid A_Gx = 0\} = \{0\}$.
 So $\rk(A_G) = |W|$.
 
 Suppose $\rk(A_G) = |W|$.
 Then $\{x\in \mathbb{F}_2^{W} \mid A_Gx = 0\} = \{0\}$.
 Let $\emptyset \neq X \subseteq W$ and let $x \in \mathbb{F}_2^{W}$ be the vector with $x_w = 1$ if and only if $w \in X$.
 Since $x \notin \{x\in \mathbb{F}_2^{W} \mid A_Gx = 0\}$ there is some $v \in V$ such that $(A_G)_vx = 1$.
 But this means that $|N(v) \cap X|$ is odd.
\end{proof}

\begin{corollary}
 \label{cor:reduce-size-odd}
 Let $G=(V,W,E)$ be an odd bipartite graph.
 Then there is some $V' \subseteq V$ with $|V'| \leq |W|$ such that the induced subgraph $G[V' \cup W]$ is odd.
\end{corollary}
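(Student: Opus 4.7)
The plan is to use the preceding lemma to translate the oddness condition into a statement about the $\mathbb{F}_2$-rank of $A_G$ and then extract a small row basis.

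First I would invoke the lemma to obtain $\rk(A_G) = |W|$. Since the rank equals the number of columns, the rows of $A_G$ (indexed by $V$) span an $|W|$-dimensional subspace of $\mathbb{F}_2^{W}$, so by standard linear algebra there exists a subset $V' \subseteq V$ of exactly $|W|$ rows that is linearly independent. I would set this $V'$ as my candidate witness.

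Next I would observe that the matrix $A_{G[V' \cup W]}$ is precisely the submatrix of $A_G$ obtained by keeping the rows in $V'$; this is immediate from the definition of the induced subgraph, since restricting to $V' \cup W$ deletes only rows of $A_G$ (the $W$-part is unchanged). By construction this submatrix has rank $|W|$, so applying the lemma in the other direction to $G[V' \cup W]$ yields that $G[V' \cup W]$ is odd. Finally $|V'| = |W| \leq |W|$ gives the desired size bound.

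There is no real obstacle here: the corollary is essentially a restatement of the fact that a matrix of full column rank contains a square invertible submatrix of rows. The only point worth double-checking is that adjacency matrix restriction and induced subgraph agree, but since $W$ is fully retained this is automatic.
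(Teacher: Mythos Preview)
Your argument is correct and is exactly the intended one: the paper states the corollary without proof, immediately after the rank characterization lemma, and remarks that Gaussian elimination computes such a $V'$ in polynomial time---which is precisely your row-basis extraction. There is nothing to add.
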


Moreover, using Gaussian elimination, we can compute such a set in polynomial time.
Now suppose $B(G_n,\sigma)$ is odd.
Then we can use the previous corollary to compute an induced subgraph $B^{*}\coloneqq B^{*}(G_n,\sigma)$ of $B(G_n,\sigma)$ which is odd and has fewer vertices.
Applying the rigid base construction to $B^{*}$ the resulting graph has $|V(R(B^{*}))| = 4\cdot 3 \cdot n + 2\cdot 3 \cdot n = 18n$  vertices. The average degree has decreased to~4.
In comparison, $|V(R(B(G_n,\sigma)))| = 4\cdot 4 \cdot n + 2\cdot 3 \cdot n = 22n$.

\subsection{Bypassing the outer vertices}\label{subsec:reduce-outer}

Consider the CFI-construction when applied to a 3-regular base graph on~$n$ vertices.
An easy trick is to identify the $a$ and $b$ vertices for each edge of the base graph instead of connecting them by an edge.
More precisely, let $e = \{v,w\}$ be an edge in the base graph.
Then we can identify $a(v,w)$ with $a(w,v)$ and $b(v,w)$ with $b(w,v)$ in case of a non-twisted connection or identify $a(v,w)$ with $b(w,v)$ and $b(v,w)$ with $a(w,v)$ in case of a twisted connection.
This way the number of vertices can be reduced from $10n$ to $7n$. 
This can further be improved by removing all $a$ and $b$ vertices altogether and connecting inner vertices $m_i(v)$ to $m_j(w)$ if both are connected to either $a(v,w)$ or $b(v,w)$.
In this form the construction only requires $4n$ vertices.

A similar technique can also be applied to the rigid base construction. For this we bypass all vertices of degree~$8$ by directly joining their neighbors and then removing all such vertices.
Given a bipartite base graph~$G$ we denote by~$R^*(G)$ the graph obtained form~$R(G)$ by bypassing the outer vertices.
When reducing the graphs in that fashion one has to be aware that combinatorial structure of the graph might get lost.
In particular it can be the case that vertices of degree~$4$ (in the base graph) had the same neighborhood before they were removed, leading to inhomogeneity in the graph $R^*(G)$.
In fact our experiments show that there is wider spread concerning the difficulty of the graphs $R^*(G)$ than the graphs $R(G)$.
In other words, the rigidity and difficulty of those graphs depends more heavily on the choice of the matching~$\sigma$ than for the rigid base construction. However, the graphs $R^*(G)$ turn out to be more difficult than those obtained from the multipede construction (see Section~\ref{sec:benchmark}).

\subsection{Applying both reductions}

We can combine the two vertex reduction techniques presented in this section by first applying Corollary \ref{cor:reduce-size-odd} to the base graph and then bypassing the outer vertices.
The number of vertices in the combined reduction decreases to $12 n$. It is not very difficult to see that the graphs have an average degree of at most~24, but the average degree varies among graphs on the same number of vertices since there may be multiple bypass edges that would have the same endpoints. 

Our experiments show that the combination of the two reductions (Section~\ref{sec:benchmark}) yields the most difficult graphs.
We call the resulting graphs $R^*(B^*(G_n,\sigma))$ the \emph{shrunken multipedes}.

\section{Cai-F\"{u}rer-Immerman gadgets for other groups}\label{sec:other:groups}

The constructions described in the previous sections revolve around the particular CFI-gadget depicted in Figure~\ref{fig:cfi-gadget}.
On the outer vertices, the automorphism group of that CFI-gadget induces the set of all permutations that swap an even number of pairs.
This corresponds to the subgroup of~$(\mathbb{Z}_2)^3$ of elements~$(g_1,g_2,g_3)$ with~$g_1 g_2 g_3 = 1$.
A natural idea to push the CFI-construction to its limits is to encode other groups by the use of other gadgets. 

Indeed, it is not difficult to see that for every permutation group~$\Delta$ acting on a set~$\Omega$ there is a graph gadget~$X_{\Delta}$ with a vertex set containing~$\Omega$ as $\Aut(X_{\Delta})$-invariant subset such that~$\Aut(X_{\Delta})\cong \Delta$ and~$\Aut(G)|_{\Omega} = \Delta$. (For a description of such a construction see for example \cite[Lemma 16]{DBLP:conf/swat/OtachiS14}.) In other words, every permutation group~$\Delta$ can be \emph{realized} by a graph gadget~$X_{\Delta}$.
As explained before, the original CFI-gadget realizes a group that is a subdirect product of~$\mathbb{Z}_2^3$, which is in particular an abelian group.
In analogy to our previous terminology we call the vertices in~$\Omega$ the outer vertices and the vertices in~$V(X_\Delta)\setminus \Omega$ the inner vertices.
Note that for our purpose of creating benchmark graphs, we are interested in keeping the number of inner vertices small. 

In our constructions,~$\Omega$ naturally decomposes into sets~$C_1,C_2,C_3,\ldots ,C_t$ of equal size~$n$ as the classes of the outer vertices.
We think of these sets as colored with different colors and as usual we will only consider color preserving isomorphisms, that is permutations that map each vertex to a vertex of the same color. If we insert a gadget~$X_{\Delta}$ with~$\Omega = C_1\cup C_2\cup \dots C_t$ we obtain a subgroup of~$\Sym(C_1)\times \Sym(C_2)\times \dots\times \Sym(C_t)$, the direct product of symmetric groups.

For simplicity, in the rest of this section we focus on the case~$t=3$. Thus we consider colored graph gadgets~$X_{\Delta}$ such that~$C_1\cup C_2\cup C_3\subseteq V(X_{\Delta})$ and such that the automorphism group~$\Aut(X_{\Delta})$ of~$X_{\Delta}$ induces on~$C_1\cup C_2\cup C_3$ a certain subgroup~$\Delta \leq \Sym(C_1)\times \Sym(C_2)\times \Sym(C_3)$.

Once we have a suitable gadget~$X_{\Delta}$ we can use it to obtain a generalized CFI-construction that we describe next.
For a~$3$-regular base graph~$G$, we define the graph $X(G, \Delta, \gamma)$ with~$\gamma\in \Sym(\{1,\ldots,n\})$ as follows.
Each vertex $v$ of $G$ is replaced by a copy of $X_{\Delta}$ where the $a_{i,j}$, $1 \leq j \leq n$, correspond to one of the three incident edges.
For an edge $(v,w)$ in $G$ we denote these vertices by $a(v,w)_j$, $1 \leq j \leq n$.
Finally, we connect vertices $a(v,w)_j$ and $a(w,v)_j$ for all $1 \leq j \leq n$ and all but one edges $\{v,w\}$ in $G$ whereas for the last edge we connect $a(v,w)_j$ and $a(w,v)_{\gamma(j)}$ for $1 \leq j \leq n$.

This brings us to the question, what types of gadgets are suitable for our purpose of creating hard benchmark graphs.

Various results in the context of algorithmic group theory can lead one to believe that small groups and also abelian groups may be easier to handle algorithmically than large or non-abelian groups (\cite{DBLP:conf/stacs/BabaiQ12,DBLP:journals/jcss/Luks82,DBLP:conf/mfcs/ZaidGGP14}).
Thus, one may wonder whether more difficult benchmark graphs arise when employing graph gadgets inducing more complicated abelian or even non-abelian automorphism groups.
We now explore these thoughts. We start with several examples.

\subsection{Examples for more general groups}
\label{sec:other-groups-constructions}

\paragraph{General abelian groups.}
Let $\Gamma \leq S_n$ be an abelian permutation group on~$n$ elements.
Then~$\Delta \coloneqq \{(a,b,c) \in \Gamma^{3}\mid abc = 1\}$ is a subgroup of~$S_n \times S_n\times S_n$.
Here~$S_n$ denotes the symmetric group of the set~$\{1,\ldots,n\}$.

We can describe an explicit gadget $X_\Delta$ that induces this group~$\Delta$ for the case in which~$\Gamma$ is transitive.
For~$i\in \{1,2,3\}$ we define~$C_i = \{1,\ldots,n\}$ and let the disjoint union~$C_1\cup C_2 \cup C_3$ be the outer vertices of the gadget.
For every~$(g_1,g_2,g_3) = g\in \Delta$ we have an inner vertex~$m_g$. Then the set of inner vertices is~$M = \{m_g\mid g\in \Delta\}$.
For~$a_i \in C_i$ and~$m_{(g_1,g_2,g_3)}\in M$ we add an edge~$\{a_i,m_{(g_1,g_2,g_3)}\}$ if~$g_i$ maps the element~$1$ to~$a_i$, (i.e.,~$g_i(1) = a_i$.)
For $\Gamma = \mathbb{Z}_2$ we recover the CFI-gadget. The gadget for the group~$\Gamma = \mathbb{Z}_3$ is shown in Figure~\ref{fig:z3:gadget}.
Let us point out that for $\Gamma = \mathbb{Z}_k$ these gadgets were used in \cite{gi-modkl-hard} to show hardness of graph isomorphism for certain complexity classes.

\tikzstyle{normalvertex}=[circle,draw]

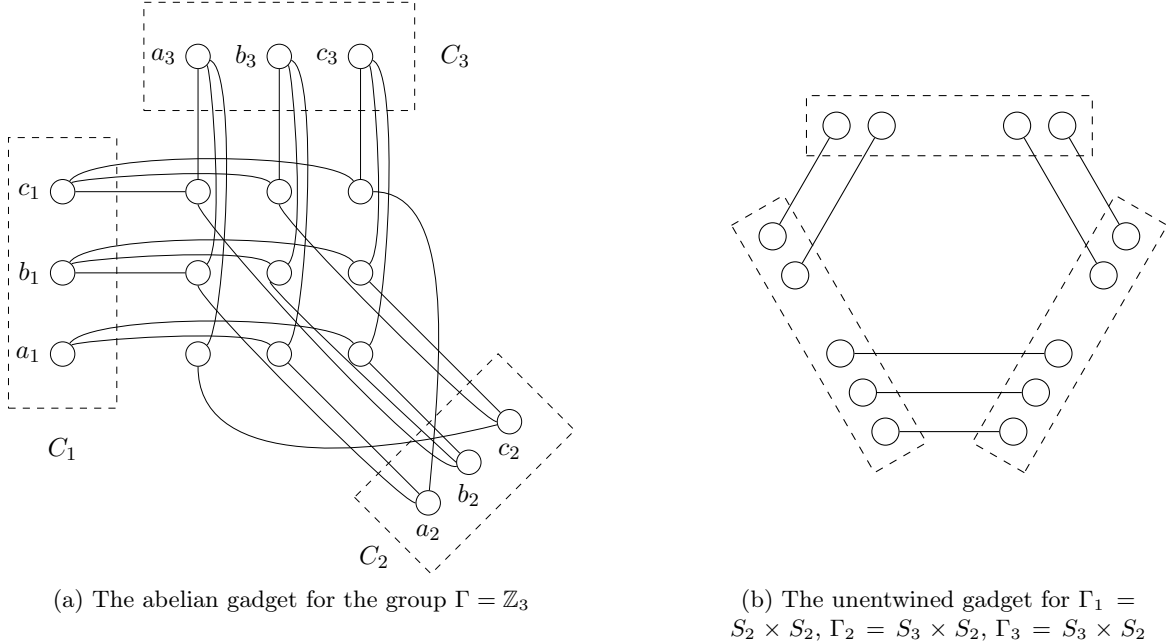
\begin{figure}
 \centering
 \captionsetup[subfloat]{ justification=Centering}
 \subfloat[The abelian gadget for the group $\Gamma = \mathbb{Z}_3$\label{fig:z3:gadget}]{\scalebox{0.9}{ 
 \begin{tikzpicture}[scale =  0.8]
  \node[style=normalvertex,label=left:{$a_1$}] (a1) at (-1.5,3) {};
  \node[style=normalvertex,label=left:{$b_1$}] (b1) at (-1.5,4.5) {};
  \node[style=normalvertex,label=left:{$c_1$}] (c1) at (-1.5,6) {};
  
  \node[style=normalvertex,label=below:{$a_2$}] (a2) at (5.25,0.25) {};
  \node[style=normalvertex,label=below:{$b_2$}] (b2) at (6,1) {};
  \node[style=normalvertex,label=below:{$c_2$}] (c2) at (6.75,1.75) {};
  
  \node[style=normalvertex,label=left:{$a_3$}] (a3) at (1,8.5) {};
  \node[style=normalvertex,label=left:{$b_3$}] (b3) at (2.5,8.5) {};
  \node[style=normalvertex,label=left:{$c_3$}] (c3) at (4,8.5) {};
  
  \node[style=normalvertex,label=below left:{}] (v1) at (1,6) {};
  \node[style=normalvertex,label=below left:{}] (v2) at (2.5,6) {};
  \node[style=normalvertex,label=below left:{}] (v3) at (4,6) {};
  
  \node[style=normalvertex,label=below left:{}] (v4) at (1,4.5) {};
  \node[style=normalvertex,label=below left:{}] (v5) at (2.5,4.5) {};
  \node[style=normalvertex,label=below left:{}] (v6) at (4,4.5) {};
  \node[style=normalvertex,label=below left:{}] (v7) at (1,3) {};
  \node[style=normalvertex,label=below left:{}] (v8) at (2.5,3) {};
  \node[style=normalvertex,label=below left:{}] (v9) at (4,3) {};
  
  \draw  (a3) edge (v1);
  \draw  (b3) edge (v2);
  \draw  (c3) edge (v3);
  \draw  (c1) edge (v1);
  \draw  (b1) edge (v4);
  
  \draw (a3) .. controls (1.25,8.25) and (1.5,5) .. (v4);
  \draw (a3) .. controls (1.75,8.25) and (1.5,3.5) .. (v7);
  \draw (b3) .. controls (2.75,8.25) and (3,5) .. (v5);
  \draw (b3) .. controls (3.25,8.25) and (3,3.5) .. (v8);
  \draw (c3) .. controls (4.25,8.25) and (4.5,5) .. (v6);
  \draw (c3) .. controls (4.75,8.25) and (4.5,3.5) .. (v9);
  
  \draw (c1) .. controls (-1.25,6.25) and (2,6.5) .. (v2);
  \draw (c1) .. controls (-1,6.75) and (3.5,6.75) .. (v3);
  \draw (b1) .. controls (-1.25,4.75) and (2,5) .. (v5);
  \draw (b1) .. controls (-1,5.25) and (3.5,5.25) .. (v6);
  \draw (a1) .. controls (-1.25,3.25) and (2,3.5) .. (v8);
  \draw (a1) .. controls (-1,3.75) and (3.5,3.75) .. (v9);
  
  \draw  (c2) edge (v6);1
  \draw  (b2) edge (v9);
  \draw  (a2) edge (v8);
  
  \draw (v5) .. controls (2.25,4.25) and (5.75,1) .. (b2);
  \draw (v1) .. controls (1,5.5) and (5.25,0.75) .. (b2);
  
  \draw (v4) .. controls (1,4) and (4.75,0.25) .. (a2);
  \draw (v2) .. controls (2.5,5.5) and (6.25,1.75) .. (c2);
  \draw (v3) .. controls (5.75,6) and (5.5,3) .. (a2);
  \draw (v7) .. controls (1,1) and (3.75,1) .. (c2);
  
  \draw[dashed]  (-2.5,7) rectangle (-0.5,2);
  \node at (-1.5,1.25) {$C_1$};
  
  \draw[dashed]  (0,9.5) rectangle (5,7.5);
  \node at (5.75,8.5) {$C_3$};
  
  \draw[dashed, rotate = 45]  (3,-2.5) rectangle (6.75,-4.5);
  \node at (4.25,-0.75) {$C_2$};
 \end{tikzpicture}
 }}
 \quad \quad \quad \quad
 \captionsetup[subfloat]{justification=Centering}
 \subfloat[The unentwined gadget for $\Gamma_1 = S_2 \times S_2$, $\Gamma_2 = S_3 \times S_2$, $\Gamma_3 = S_3 \times S_2$\label{no:3:dependence:gadget}]{
 \begin{tikzpicture}
  \node[style=normalvertex] (u1) at (0:1.7) {};
  \node[style=normalvertex] (u2) at (0:2.3) {};
  
  \node[style=normalvertex] (u3) at (0:4.1) {};
  \node[style=normalvertex] (u4) at (0:4.7) {};
  
  \foreach \x in {1,2,4,5,6}{
   \node[style=normalvertex] (v\x) at (300:1.1+0.6*\x) {};
   \node[style=normalvertex] (w\x) at ($(6.4,0)+(240:1.1+0.6*\x)$) {};
  }
  
  \draw (u1) to (v1);
  \draw (u2) to (v2);
  \draw (u3) to (w2);
  \draw (u4) to (w1);
  
  \draw (v4) to (w4);
  \draw (v5) to (w5);
  \draw (v6) to (w6);
  
  \draw[dashed] (1.3,-0.4) rectangle (5.1,0.4);
  \draw[dashed,shift={(300:1.3)},rotate=300] (0,-0.4) rectangle (3.8,0.4);
  \draw[dashed,shift={($(6.4,0)+(240:5.1)$)},rotate=60] (0,-0.4) rectangle (3.8,0.4);
  \begin{scope}
   \path[use as bounding box] (0,0) rectangle (2,-6);
  \end{scope}
 \end{tikzpicture}
 }
 \caption{Gadgets for groups other than~$\mathbb{Z}_2$.}
\end{figure}

Applying the gadgets to suitable base graphs, this gives us a generalization of the rigid base construction to arbitrary abelian groups. 
However the difficulty of the graphs does not increase for the algorithms tested here as we will see in Section \ref{sec:benchmark}.

Since in various contexts abelian groups are algorithmically easier to handle  than non-abelian groups, we would like to generalize the rigid base construction also for non-abelian groups.

\paragraph{Semidirect products.}
Our next example is that of a semidirect product. Let~$\Gamma = N \rtimes H\leq S_n$ be a semidirect product with an abelian normal subgroup $N$.
For example~$\Gamma$ could be the dihedral group~$D_{n}$.
Then~$\Delta \coloneqq \{(ah, bh, ch) \in \Gamma^{3}\mid a,b,c \in N,\,h \in H,\,abc = 1\}$ is a subgroup of~$S_n\times S_n\times S_n$.

Analogous to the abelian case we can build a gadget $X_\Delta$ realizing the group~$\Delta$.
For~$n\geq 3$, we describe an explicit construction for the case in which~$\Gamma$ is the dihedral group~$D_{n}$ of order~$2n$ (i.e. the automorphism group of an~$n$-cycle).

As before, we set~$C_i = \{1,\ldots,n\}$ and the disjoint union~$C_1\cup C_2\cup C_3$ forms the set of outer vertices.
For every~$(g_1,g_2,g_3) = g\in \Delta$ we have an inner vertex~$m_g$. Then the set of inner vertices is~$M = \{m_g\mid g\in \Delta\}$.
For~$a_i \in C_i$ and~$m_{(g_1,g_2,g_3)}\in M$ we add an edge~$\{a_i,m_{(g_1,g_2,g_3)}\}$ if~$g_i$ maps one of the elements in~$\{1,2\}$ to~$a_i$, i.e.,~$g_i(1) = a_i$ or~$g_i(2) = a_i$.
With this construction every inner vertex has a twin and we keep only one copy.

Applying the gadget construction to suitable base graphs we obtain a family of new benchmark graphs that we call the \emph{dihedral construction}.

\paragraph{Unentwined gadgets.}
As a last example suppose~$\Gamma_1 = H_2 \times H_3$,~$\Gamma_2 = H_1\times H_3$ and~$\Gamma_3 = H_1\times H_2$ with arbitrary finite groups~$H_i$. 
Then the group~$\Delta \coloneqq \{((h_2,h_3),(h_1,h_3),(h_1,h_2))\}$ is a subgroup of~$S_n\times S_n\times S_n$. 
Figure~\ref{no:3:dependence:gadget} shows a gadget that realizes such a group for~$H_1\cong H_2\cong S_2$ and~$H_3 \cong S_3$.
We call these gadgets unentwined since they only create a pairwise interconnection between the classes. In particular, the gadgets are not beneficial for our cause since they do not force any sort of interplay between all three classes~$C_1$,~$C_2$ and~$C_3$.

\subsection{Group theoretic restrictions on the gadgets}

Suppose we want to realize a gadget~$X$ that has a certain automorphism behavior on the set of outer vertices~$C_1\cup C_2 \cup C_3$, where we assume the three classes~$C_i$ as invariant (i.e., they are color classes).
As discussed before, every group~$\Delta \leq \Sym(C_1)\times \Sym(C_2)\times \Sym(C_3)$ can be realized by such a gadget. 

Our intuition is that the isomorphism solvers can locally analyze a bounded size gadget and thus implicitly determine the automorphism group~$\Delta = \Aut(X)|_{C_1\cup C_2 \cup C_3}$ of such a gadget.
The goal must therefore be to encode the difficulty in the global dependency across the entire graph rather than in a local gadget. 

Note that~$\Delta \leq \pi_1(\Delta) \times \pi_2(\Delta) \times \pi_3(\Delta)$, where~$\pi_i$ is the projection on the~$i$-th component.
We say that~$\Delta$ is a \emph{subdirect product} of~$\pi_1(\Delta) \times \pi_2(\Delta) \times \pi_3(\Delta)$.
To understand what kind of gadgets can be constructed we should therefore investigate subdirect products.
We call such a group~$\Delta\leq \pi_1(\Delta) \times \pi_2(\Delta) \times \pi_3(\Delta)$ \emph{2-factor injective} if each projection onto two of the factors is injective and \emph{2-factor surjective} if each of these projections is surjective. 

Intuitively, 2-factor injectivity says that two components determine the third.
Thus when two different gadgets that are not 2-factor injective are attached (via parallel edges say) this may create local automorphism in the resulting graph, which we are trying to avoid. We are therefore interested in creating 2-factor injective gadgets.
In fact, it is possible to restrict ourselves to 2-factor injective gadgets altogether, since we can quotient out the elements responsible for non-injectivity.
(Formally we take the quotient of~$\Delta$ by the kernel of the projection onto two components.) 

In general, if~$\Gamma_1$ is an abelian group then the group~$\Delta := \{(a,b,c) \in (\Gamma_1)^{3}\mid abc = 1\}$, that was described as first example in the previous section, is~$2$-factor injective and~$2$-factor surjective.
It turns out that this is the only type of group with these properties. 

\begin{lemma}[\cite{group:theory:paper}]\label{inj:and:surj}
 Let $\Gamma = \Gamma_1\times \Gamma_2\times \Gamma_3$ be a finite group and~$\Delta$ a subdirect product of~$\Gamma$ that is~$2$-factor surjective and~$2$-factor injective.
 Then~$\Gamma_1$,~$\Gamma_2$ and~$\Gamma_3$ are isomorphic abelian groups and~$\Delta$ is isomorphic to the subgroup of~$\Gamma_1^3$ given by~$\{(a,b,c) \in (\Gamma_1)^{3}\mid abc = 1\}$, which in turn is isomorphic to~$(\Gamma_1)^{2}$ as an abstract group.
\end{lemma}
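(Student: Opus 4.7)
The plan is to exploit that $\pi_{12}\colon \Delta \to \Gamma_1\times \Gamma_2$ is a bijection (the conjunction of $2$-factor injectivity and $2$-factor surjectivity). This yields a unique function $f\colon \Gamma_1\times \Gamma_2 \to \Gamma_3$ with $\Delta = \{(a,b,f(a,b)) : a\in \Gamma_1,\, b\in \Gamma_2\}$, and the subgroup property of $\Delta$ becomes the identity $f(aa',bb') = f(a,b)\,f(a',b')$. Specializing with $a'=1$ and then with $b=1$ shows that $\phi(a):=f(a,1)$ and $\psi(b):=f(1,b)$ are group homomorphisms into $\Gamma_3$, that $f(a,b) = \phi(a)\psi(b)$, and---applying the identity in the opposite order---that $\phi(a)$ and $\psi(b)$ commute in $\Gamma_3$ for all $a,b$.

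Next I would use the bijectivity of $\pi_{13}$ to show that $\psi$ itself is a bijection. Surjectivity: for every $c\in \Gamma_3$ there is $b\in \Gamma_2$ with $(1,b,c)\in\Delta$, forcing $\psi(b)=c$. Injectivity: if $\psi(b)=\psi(b')$ then $(1,b,\psi(b))$ and $(1,b',\psi(b'))\in\Delta$ share their $\pi_{13}$-image, so $b=b'$. By the symmetric argument with $\pi_{23}$, also $\phi$ is a bijection. Since $\phi(\Gamma_1)=\psi(\Gamma_2)=\Gamma_3$ and these two images commute elementwise, $\Gamma_3$ is abelian, and hence so are $\Gamma_1$ and $\Gamma_2$, all isomorphic to $\Gamma_3$ via $\phi$ and $\psi$.

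To identify $\Delta$ with $\{(a,b,c)\in\Gamma_1^3 : abc = 1\}$, I would transport the three coordinates to $\Gamma_1$ via the coordinate-wise map $(g_1,g_2,g_3)\mapsto (g_1,\,\phi^{-1}\psi(g_2),\,\phi^{-1}(g_3)^{-1})$. This is a group isomorphism $\Gamma_1\times\Gamma_2\times\Gamma_3 \to \Gamma_1^3$ precisely because $\Gamma_1$ is abelian (so that inversion in the third entry is a homomorphism). Its restriction to $\Delta$ sends $(a,b,\phi(a)\psi(b))$ to $(a,\,\phi^{-1}\psi(b),\,(a\cdot\phi^{-1}\psi(b))^{-1})$, whose three entries multiply to~$1$, and as $(a,b)\mapsto(a,\phi^{-1}\psi(b))$ is a bijection onto $\Gamma_1^2$ this produces exactly $\{(x,y,(xy)^{-1}) : x,y\in\Gamma_1\}$. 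The final abstract-group isomorphism $\Delta\cong\Gamma_1^2$ is then immediate from $\Delta\cong\Gamma_1\times\Gamma_2$. The main obstacle is establishing the bijectivity of $\phi$ and $\psi$: without it the commuting-images argument fails to force abelianness of $\Gamma_3$, and without that abelianness the final normal-form identification collapses.
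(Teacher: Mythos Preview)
The paper does not actually prove this lemma; it only cites it from an external reference and provides no argument of its own. So there is nothing to compare against, and the relevant question is simply whether your proof stands on its own.

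It does. Every step checks out: the homomorphism identity $f(aa',bb')=f(a,b)f(a',b')$ specialized in the two orders $(a'=1,b=1)$ and $(a=1,b'=1)$ gives $f(a,b)=\phi(a)\psi(b)=\psi(b)\phi(a)$; your use of $2$-factor surjectivity of $\pi_{13}$ to hit $(1,c)$ and of $2$-factor injectivity to separate $(1,b,\psi(b))$ from $(1,b',\psi(b'))$ cleanly yields bijectivity of $\psi$, and symmetrically for $\phi$; and the coordinate change in the last paragraph is an honest isomorphism because inversion is a homomorphism on the abelian group $\Gamma_1$. One small cosmetic point: write the third coordinate of your transport map as $\phi^{-1}(g_3^{-1})$ (or $\bigl(\phi^{-1}(g_3)\bigr)^{-1}$) to avoid the ambiguous-looking $\phi^{-1}(g_3)^{-1}$; the two readings coincide here, but the notation invites a stumble.
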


Thus, if we want to move beyond abelian groups, we cannot require 2-factor surjectivity.
In general, however, it can be shown using group theoretic means that every 2-factor injective group~$\Delta \leq \pi_1(\Delta) \times \pi_2(\Delta) \times \pi_3(\Delta)$ is obtained by a combination of the three example constructions described in the previous subsection.
Indeed, in \cite{group:theory:paper} it is in particular shown that if one wants an entwined gadget then it is necessary to use an abelian group of the form described in Lemma~\ref{inj:and:surj}.
One can then take a finite extension similar to the semidirect product construction described above.
We refer to~\cite{group:theory:paper} for more details.
Overall we conclude that there are some group theoretic obstructions to using non-abelian gadgets, and are left with the intuition that the original construction leads to the most difficult examples.

\section{Experimental Results}
\label{sec:benchmark}

In the following we detail experimental results for the constructions presented in this paper.
The experiments were each performed on single node of a compute cluster with 2.00 GHz Intel Xeon X5675 processors.
We always set a time limit of three hours (i.e., $10800$ seconds) and the memory limit to $4$ GB.
Every single instance was processed once, but multiple instances were generated for each possible number of vertices with the same construction.
We evaluated the following isomorphism solvers: \texttt{Bliss} version 0.72,  \texttt{Nauty/Traces} version 25r9, \texttt{Saucy} version 3.0, and \texttt{Conauto} version 2.03.

All our instances consist of two graphs and the task for the algorithms was to check for isomorphism.
For \texttt{Bliss} and \texttt{Nauty/Traces} this means that both graphs are canonized and then the canonical forms are compared for equality.
\texttt{Conauto} directly supports isomorphism testing and for \texttt{Saucy}, which only supports automorphism group computation, we compute the automorphism group of the disjoint union to check whether there is an isomorphism.
In each series we performed the graphs in increasing number of vertices with a time limit, which implied that once timeouts are reached repeatedly only a handful of further examples are computed.

We want to stress at this point that while the memory limit is irrelevant for \texttt{Bliss}, \texttt{Nauty}, \texttt{Saucy} and \texttt{Conauto}, it is significant for \texttt{Traces}.
In fact, in many larger instances \texttt{Traces} reaches the memory limit before reaching the time limit.
For the sake of readability we do not distinguish between those two cases and in our figures runs that reached the memory limit are displayed as reaching the time limit.  We refer to~\cite{mckay} for details on the memory usage of \texttt{Traces}.

The benchmark results are subdivided into three parts.
In the first part we consider the multipede graphs and their shrunken versions (i.e., the constructions from Sections~\ref{sec-rigid} and~\ref{sec:reduce-vertices}).
In the second part we consider the construction that employs gadgets for other groups (the construction from Section~\ref{sec:other:groups}).
This section concludes with a comparison to other available benchmark graphs.

We conclude with some remarks on our intuition for the benchmarks. While all constructions pose difficult challenges for the solvers, the ones that employ the more involved gadgets do not seem to yield more difficult examples.
A reason for this effect could lie in the size of the gadgets.
If the basic gadget is too large, asymptotic difficulty may emerge only for graphs with a number of vertices drastically larger than what could be tested.
In any case, all our constructions yield efficient methods to generate difficult isomorphism instances even with the size of the vertex set in regimes where no other difficult instances are available.

\subsection{Shrunken Multipedes}

Figure~\ref{fig:benchmark-rigid-base} shows the running times of the various isomorphism solvers on the multipede graphs described in Section~\ref{sec-rigid} without any reductions.
We observe that the multipede construction yields graphs which become infeasible for all solvers already for a few thousand vertices. Similarly, Figure~\ref{fig:benchmark-rigid-base:reduced} shows the running times on shrunken multipedes, i.e., the graphs to which the two node reductions of Section~\ref{sec:reduce-vertices} have been applied. In comparison we observe that similar running times to the unreduced graphs are obtained already on graphs that have roughly half the number of vertices. While the shrunken version results in more difficult graphs one can also see that there is a significantly larger fluctuation (presumably depending on~$\sigma$) among the graphs on a fixed number of vertices.

\begin{figure}[H]
 \centering
 \begin{tikzpicture}[scale =0.9]
  \begin{axis}[grid = both, xlabel = {number of vertices}, ylabel = {$10^4$ sec}, legend entries = {Bliss, Traces, Nauty, Saucy, Conauto},
               legend style = {at = {(0.05, 0.9)}, anchor = north west}, extra y ticks={10800}, extra y tick labels={timeout}, extra tick style={grid=major}, cycle list name = color list]
   \addplot+[only marks] table{benchmarks/z2-bliss.dat};
   \addplot+[only marks] table{benchmarks/z2-traces.dat};
   \addplot+[only marks] table{benchmarks/z2-nauty.dat};
   \addplot+[only marks] table{benchmarks/z2-saucy.dat};
   \addplot+[only marks] table{benchmarks/z2-conauto.dat};
  \end{axis}
 \end{tikzpicture}
 \begin{tikzpicture}[scale =0.9]
  \begin{axis}[ xmax=3200,  ymode = log, grid = both, xlabel = {number of vertices}, ylabel = {sec},
               legend style = {at = {(0.05, 0.9)}, anchor = north west}, 
               cycle list name = color list]
   \addplot+[only marks] table{benchmarks/z2-bliss.dat};
   \addplot+[only marks] table{benchmarks/z2-traces.dat};
   \addplot+[only marks] table{benchmarks/z2-nauty.dat};
   \addplot+[only marks] table{benchmarks/z2-saucy.dat};
   \addplot+[only marks] table{benchmarks/z2-conauto.dat};
  \end{axis}
 \end{tikzpicture}
 \caption[Experiments for multipedes]{Performance of various algorithms on $R(B(G_n,\sigma))$ for random permutations $\sigma$ in linear (left) and logarithmic scale (right).}
 \label{fig:benchmark-rigid-base}
\end{figure}
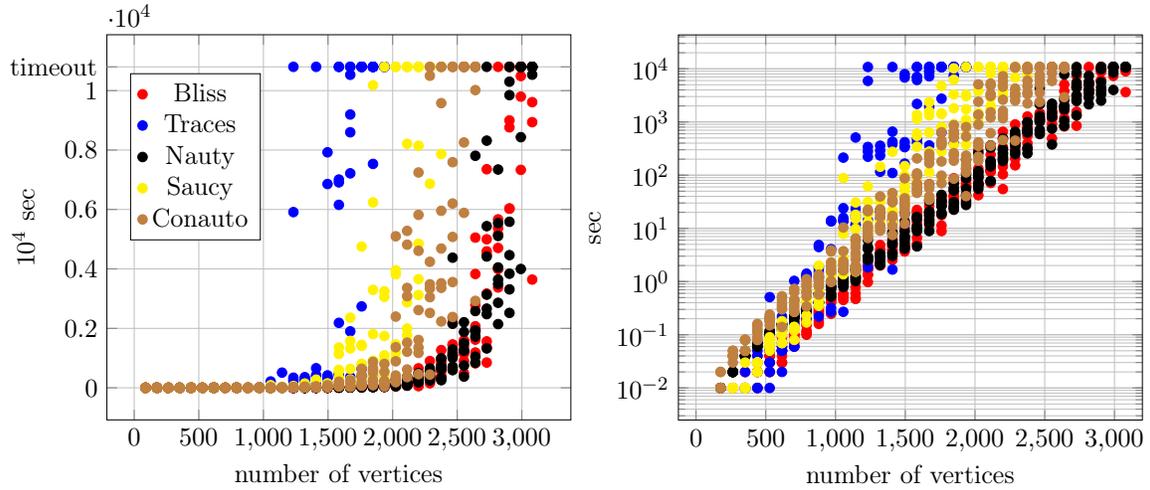

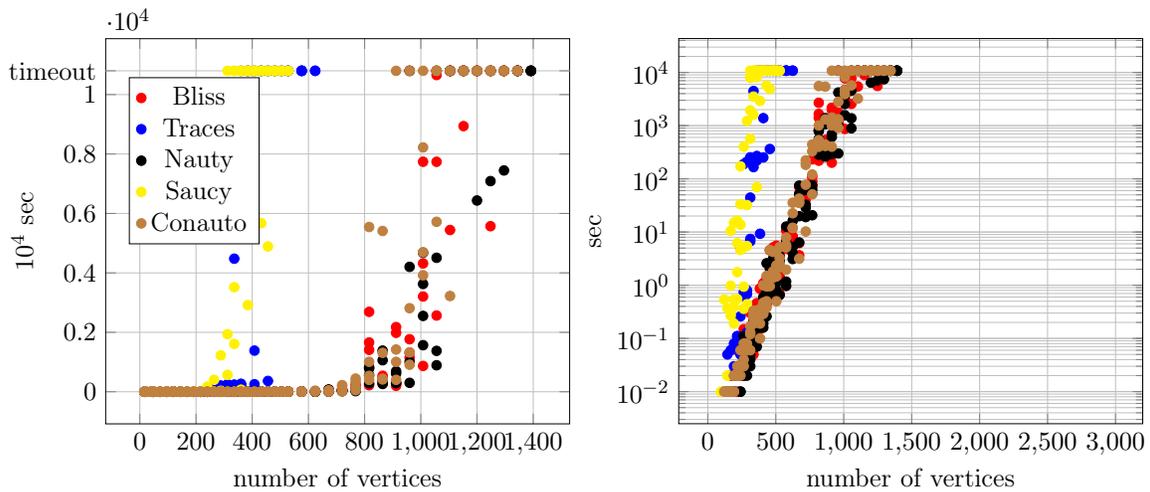
\begin{figure}[H]
 \centering
 \begin{tikzpicture}[scale =0.9]
  \begin{axis}[grid = both, xlabel = {number of vertices}, ylabel = {$10^4$ sec}, legend entries = {Bliss, Traces, Nauty, Saucy, Conauto},
               legend style = {at = {(0.05, 0.9)}, anchor = north west}, extra y ticks={10800}, extra y tick labels={timeout}, extra tick style={grid=major}, cycle list name = color list]
   \addplot+[only marks] table{benchmarks/t2-bliss.dat};
   \addplot+[only marks] table{benchmarks/t2-traces.dat};
   \addplot+[only marks] table{benchmarks/t2-nauty.dat};
   \addplot+[only marks] table{benchmarks/t2-saucy.dat};
   \addplot+[only marks] table{benchmarks/t2-conauto.dat};
  \end{axis}
 \end{tikzpicture}
 \begin{tikzpicture}[scale =0.9]
  \begin{axis}[xmax=3200, ymode = log, grid = both, xlabel = {number of vertices}, ylabel = {sec}, cycle list name = color list]
   \addplot+[only marks] table{benchmarks/t2-bliss.dat};
   \addplot+[only marks] table{benchmarks/t2-traces.dat};
   \addplot+[only marks] table{benchmarks/t2-nauty.dat};
   \addplot+[only marks] table{benchmarks/t2-saucy.dat};
   \addplot+[only marks] table{benchmarks/t2-conauto.dat};
  \end{axis}
 \end{tikzpicture}
 \caption[Experiments for shrunken multipedes]{Performance of various algorithms on the shrunken multipedes $R^*(B^*(G_n,\sigma))$ (both reduction techniques applied) for random permutations $\sigma$ in linear (left) and logarithmic scale (right).}
 \label{fig:benchmark-rigid-base:reduced}
\end{figure}

The next series of experiments highlights the effects of the two methods for reducing the number of vertices outlined in Section~\ref{sec:reduce-vertices}.
Figure~\ref{fig:benchmark-shrunken} shows the running times on the graphs in which only the reduction of outer vertices has been applied and a comparison to the unreduced graphs.
Similarly, Figure~\ref{fig:benchmark-basis-cfi} shows the running times of the graphs in which
the linear algebra based reduction has been applied and also a comparison to the unreduced graphs.
Finally, Figure~\ref{fig:benchmark-one-red-vs-two} shows a comparison of the effect of using one of the reductions compared to using both. Overall we conclude that both reductions yield significantly smaller instances without making the inputs easier and a combination of the two reductions yields the best results.

\pgfplotscreateplotcyclelist{mycolorliststar}{%
red,every mark/.append style={fill=red!40!black,red!50!black},mark=star\\%
blue,every mark/.append style={fill=blue!80!black,blue!50!black},mark=star\\%
black,every mark/.append style={fill=black!80!black,black!50!black},mark=star\\%
yellow,every mark/.append style={fill=yellow!80!black,yellow!80!black},mark=star\\%
brown,every mark/.append style={fill=brown!80!black,brown!80!black},mark=star\\%
}

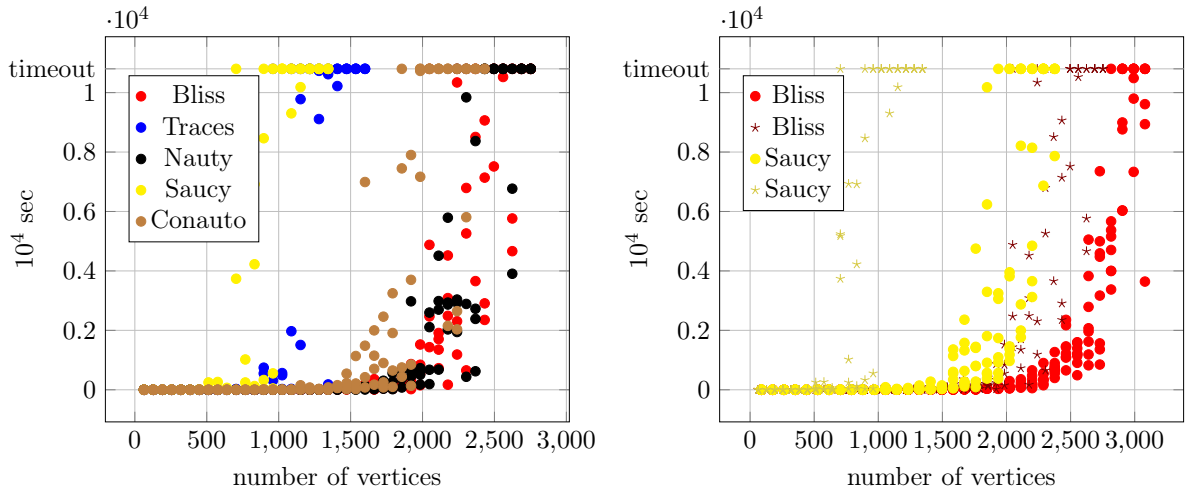
\begin{figure}[H]
 \centering
 \begin{tikzpicture}[scale = 0.9]
  \begin{axis}[grid = both, xlabel = {number of vertices}, ylabel = {$10^4$ sec}, legend entries = {Bliss, Traces, Nauty, Saucy, Conauto},
               legend style = {at = {(0.05, 0.9)}, anchor = north west}, extra y ticks={10800}, extra y tick labels={timeout}, extra tick style={grid=major}, cycle list name = color list]
   \addplot+[only marks] table{benchmarks/s2-bliss.dat};
   \addplot+[only marks] table{benchmarks/s2-traces.dat};
   \addplot+[only marks] table{benchmarks/s2-nauty.dat};
   \addplot+[only marks] table{benchmarks/s2-saucy.dat};
   \addplot+[only marks] table{benchmarks/s2-conauto.dat};
  \end{axis}
 \end{tikzpicture}
 \begin{tikzpicture}[scale = 0.9]
  \begin{axis}[grid = both, xlabel = {number of vertices}, ylabel = {$10^4$ sec}, legend entries = {Bliss, Bliss, Saucy, Saucy,a,b,c},
               legend style = {at = {(0.05, 0.9)}, anchor = north west}, extra y ticks={10800}, extra y tick labels={timeout}, extra tick style={grid=major}]
   \pgfplotsset{cycle list name = color list}
   \addplot+[only marks] table{benchmarks/z2-bliss.dat};
   \pgfplotsset{cycle list name=mycolorliststar}
   \pgfplotsset{cycle list shift=-1}
   \addplot+[only marks] table{benchmarks/s2-bliss.dat};
   \pgfplotsset{cycle list name=color list}
   \pgfplotsset{cycle list shift=+1}
   \addplot+[only marks] table{benchmarks/z2-saucy.dat};
   \pgfplotsset{cycle list name=mycolorliststar}
   \pgfplotsset{cycle list shift=0}
   \addplot+[only marks] table{benchmarks/s2-saucy.dat};
  \end{axis}
 \end{tikzpicture}
 \caption[Experiments for reduced base construction]{Performance of the various algorithms on the reduced graph $R^{*}(B(G_n,\sigma))$ (left) obtained by bypassing the outer vertices (see Subsection \ref{subsec:reduce-outer}).
          Also shown is a comparison for Bliss and Saucy on $R(B(G_n,\sigma))$ and $R^{*}(B(G_n,\sigma))$ (right, `$*$' indicates experiments for reduced graphs).}
 \label{fig:benchmark-shrunken}
\end{figure}

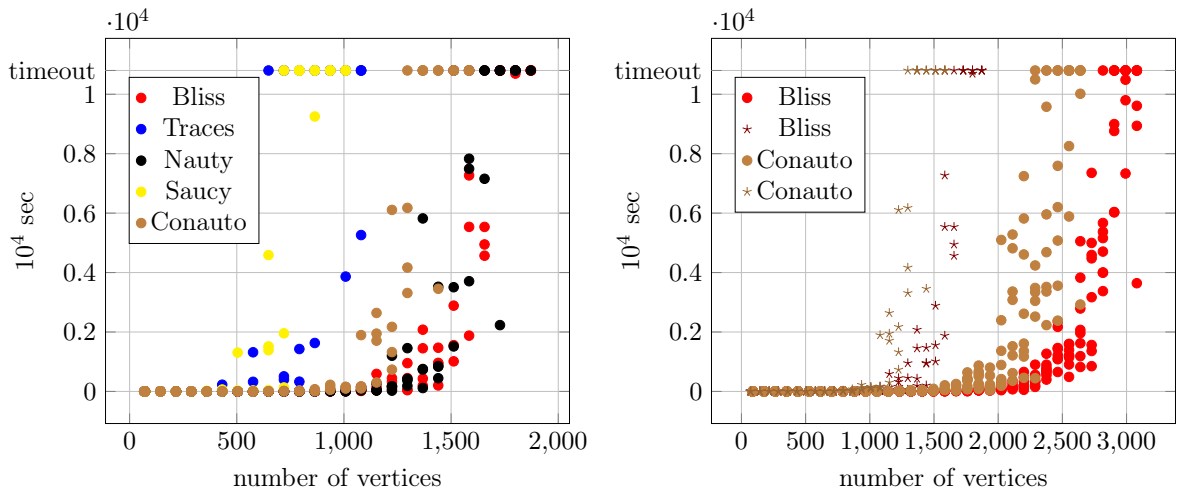
\begin{figure}[H]
 \centering
 \begin{tikzpicture}[scale =0.9]
  \begin{axis}[grid = both, xlabel = {number of vertices}, ylabel = {$10^4$ sec}, legend entries = {Bliss, Traces, Nauty, Saucy, Conauto},
               legend style = {at = {(0.05, 0.9)}, anchor = north west}, extra y ticks={10800}, extra y tick labels={timeout}, extra tick style={grid=major}, cycle list name = color list]
   \addplot+[only marks] table{benchmarks/r2-bliss.dat};
   \addplot+[only marks] table{benchmarks/r2-traces.dat};
   \addplot+[only marks] table{benchmarks/r2-nauty.dat};
   \addplot+[only marks] table{benchmarks/r2-saucy.dat};
   \addplot+[only marks] table{benchmarks/r2-conauto.dat};
  \end{axis}
 \end{tikzpicture}
 \begin{tikzpicture}[scale = 0.9]
  \begin{axis}[grid = both, xlabel = {number of vertices}, ylabel = {$10^4$ sec}, legend entries = {Bliss, Bliss, Conauto, Conauto},
               legend style = {at = {(0.05, 0.9)}, anchor = north west}, extra y ticks={10800}, extra y tick labels={timeout}, extra tick style={grid=major}]
   \pgfplotsset{cycle list name = color list}
   \addplot+[only marks] table{benchmarks/z2-bliss.dat};
   \pgfplotsset{cycle list name=mycolorliststar}
   \pgfplotsset{cycle list shift=-1}
   \addplot+[only marks] table{benchmarks/r2-bliss.dat};
   \pgfplotsset{cycle list name=color list}
   \pgfplotsset{cycle list shift=+2}
   \addplot+[only marks] table{benchmarks/z2-conauto.dat};
   \pgfplotsset{cycle list name=mycolorliststar}
   \pgfplotsset{cycle list shift=+1}
   \addplot+[only marks] table{benchmarks/r2-conauto.dat};
  \end{axis}
 \end{tikzpicture}
 \caption[Experiments for reduced base construction]{Performance of the various algorithms on the reduced graph $R(B^*(G_n,\sigma))$ (left) obtained by the linear algebra reduction (see Subsection \ref{subsec:reduce-linalg}).
          Also shown is a comparison for Bliss and Conauto on $R(B(G_n,\sigma))$ and $R(B^{*}(G_n,\sigma))$ (right, `$*$' indicates experiments for reduced graphs).}
 \label{fig:benchmark-basis-cfi}
\end{figure}

\begin{figure}[H]
 \centering
 \begin{tikzpicture}[scale = 0.9]
  \begin{axis}[grid = both, xlabel = {number of vertices}, ylabel = {$10^4$ sec}, legend entries = {Bliss, Bliss, Conauto, Conauto},
               legend style = {at = {(0.05, 0.9)}, anchor = north west}, extra y ticks={10800}, extra y tick labels={timeout}, extra tick style={grid=major}]
   \pgfplotsset{cycle list name = color list}
   \addplot+[only marks] table{benchmarks/r2-bliss.dat};
   \pgfplotsset{cycle list name=mycolorliststar}
   \pgfplotsset{cycle list shift=-1}
   \addplot+[only marks] table{benchmarks/t2-bliss.dat};
   \pgfplotsset{cycle list name=color list}
   \pgfplotsset{cycle list shift=+2}
   \addplot+[only marks] table{benchmarks/r2-conauto.dat};
   \pgfplotsset{cycle list name=mycolorliststar}
   \pgfplotsset{cycle list shift=+1}
   \addplot+[only marks] table{benchmarks/t2-conauto.dat};  
  \end{axis}
 \end{tikzpicture}
 \begin{tikzpicture}[scale = 0.9]
  \begin{axis}[grid = both, xlabel = {number of vertices}, ylabel = {$10^4$ sec}, legend entries = {Bliss, Bliss, Conauto, Conauto},
               legend style = {at = {(0.05, 0.9)}, anchor = north west}, extra y ticks={10800}, extra y tick labels={timeout}, extra tick style={grid=major}]
   \pgfplotsset{cycle list name = color list}
   \addplot+[only marks] table{benchmarks/s2-bliss.dat};
   \pgfplotsset{cycle list name=mycolorliststar}
   \pgfplotsset{cycle list shift=-1}
   \addplot+[only marks] table{benchmarks/t2-bliss.dat};
   \pgfplotsset{cycle list name=color list}
   \pgfplotsset{cycle list shift=+2}
   \addplot+[only marks] table{benchmarks/s2-conauto.dat};
   \pgfplotsset{cycle list name=mycolorliststar}
   \pgfplotsset{cycle list shift=+1}
   \addplot+[only marks] table{benchmarks/t2-conauto.dat};
  \end{axis}
 \end{tikzpicture}
 \caption[Experiments for reduced base construction]{Performance comparison for Bliss and Conauto for $R(B^*(G_n,\sigma))$ and $R^{*}(B^*(G_n,\sigma))$ (left) and for $R^*(B(G_n,\sigma))$ and $R^{*}(B^*(G_n,\sigma))$ (right) (`$*$' indicates experiments for reduced graphs).}
 \label{fig:benchmark-one-red-vs-two}
\end{figure}
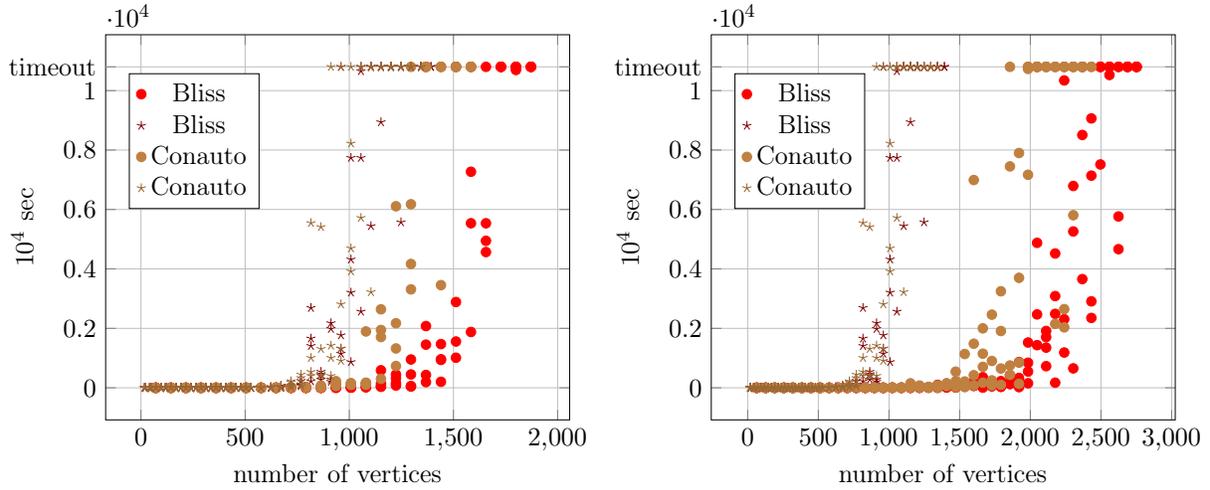

\subsection{Other groups}

In this second part of the section, we give some experiments for the rigid base construction on other groups (Figure~\ref{fig:benchmark-other-groups}).
We show here the performance of \texttt{Bliss} in order to compare the different constructions, but we remark that the behavior of the other programs is similar.
Overall, we observe in this part that the original construction based on $\mathbb{Z}_2$ yields the most difficult graphs.

\begin{figure}[H]
 \centering
 \begin{tikzpicture}[scale =0.9]
  \begin{axis}[grid = both, xlabel = {number of vertices}, ylabel = {$10^4$ sec}, legend entries = {$\mathbb{Z}_2$, $\mathbb{Z}_3$, $\mathbb{Z}_4$},
               legend style = {at = {(0.05, 0.9)}, anchor = north west}, at = {(0.5\linewidth,0)}, extra y ticks={10800},
               extra y tick labels={timeout}, extra tick style={grid=major},cycle list name = exotic]
   \addplot+[only marks] table{benchmarks/z2-bliss.dat};
   \addplot+[only marks] table{benchmarks/z3-bliss.dat};
   \addplot+[only marks] table{benchmarks/z4-bliss.dat};
  \end{axis}
 \end{tikzpicture}
 \begin{tikzpicture}[scale =0.9]
  \begin{axis}[grid = both, xlabel = {number of vertices}, ylabel = {$10^4$ sec}, legend entries = {$D_3$, $D_4$},
               legend style = {at = {(0.05, 0.9)}, anchor = north west}, extra y ticks={10800},
               extra y tick labels={timeout}, extra tick style={grid=major},cycle list name = exotic]
   \addplot+[only marks] table{benchmarks/d3-bliss.dat};
   \addplot+[only marks] table{benchmarks/d4-bliss.dat};
  \end{axis}
 \end{tikzpicture}
 \caption[Experiments with gadgets inducing other groups]{Performance of Bliss on the rigid base construction with various abelian groups $\mathbb{Z}_k$ (left) and the dihedral construction $D_k$ (right) as described in Section \ref{sec:other-groups-constructions}.}
 \label{fig:benchmark-other-groups}
\end{figure}
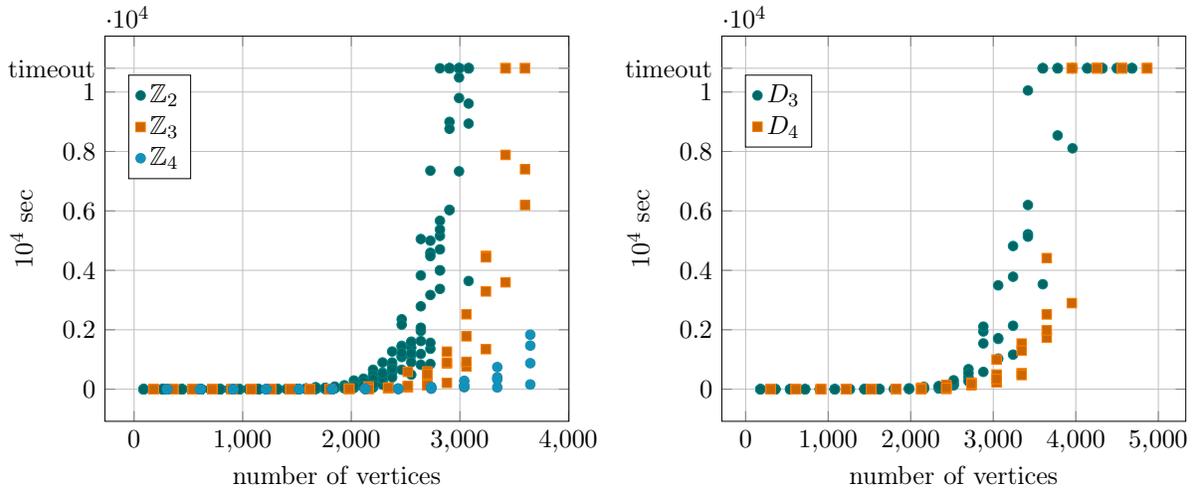

\subsection{Comparison with other benchmarks}

We compare the running times to benchmark graphs that previously existed.
Since the existing benchmark graphs typically do not come in non-isomorphic pairs of graphs we always take two copies, for each of which we randomly permute its vertices, and then perform an isomorphism test for the two copies.
This way, the results obtained for existing benchmark graphs are comparable to the results presented in the previous sections.
 
\paragraph{Comparison with other benchmark graphs.}  Concerning other benchmark graphs let us start by saying that the number of available graphs that one might call difficult in practice is quite limited.
Of the benchmark libraries mentioned in the introduction, the benchmark library available at \url{http://pallini.di.uniroma1.it/} is very comprehensive and explicitly includes all the other benchmark libraries we mentioned.
At that website an extensive suite of running times for all algorithms on the benchmark library is given.
All our running times on the benchmark graphs seem to be essentially proportional to the results presented at the web page.
Thus, the graphs that appear as most difficult on the web page are also the ones with the largest running time for us.
We provide comparisons for our graphs to the families of most difficult graphs from the library.
For a description of the graphs we refer to \url{http://pallini.di.uniroma1.it}.

\begin{itemize}

\item Combinatorial graphs of Gordan Royle. (\textsf{combinatorial})

The \textsf{combinatorial} family of combinatorial graphs collected by Gordan Royle is a collection of graphs that are challenging for the isomorphism solvers.
It consists of 12 graphs, some of which are quite large. We tested all algorithms on each graph, Table~\ref{tab:combin} summarizes the results.
The graphs are ordered by the number of vertices.
Some running times of the best algorithm on shrunken multipede graphs on a similar number of vertices are provided as comparisons up to 1500 vertices. (Beyond that there are only timeouts.)

\begin{table}
 \centering
 \scalebox{0.95}{
 \begin{tabular}{| l| r || r | r | r | r | r ||| r | r |}
  \hline
  \multicolumn{7}{|c|||}{combinatorial graphs} & \multicolumn{2}{c|}{$R^{*}(B^*(G_n,\sigma))$}\\
  \hline
  \multicolumn{1}{|c|}{graphs} & vertices & Bliss    & Traces & Nauty   & Saucy   & Conauto & vertices & Nauty   \\\hline
  X2                           & 756      & 656.46   & 0.04   & 484.62  & 290.00  & 0.30    & 720      & 19.1    \\\hline
  Pg25h1  \hfill (P1)          & 1302     & 10537.50 & 4.92   & -       & -       & 73.48   & 1296     & 7446.96 \\\hline
  flag6   \hfill (P2)          & 1514     & -        & 101.15 & -       & -       & 6419.26 & 1344     & -       \\\hline
  KM7LG                        & 2752     & -        & 1.41   & -       & -       & 16.81   &          &         \\\hline
  X1                           & 3276     & 144.49   & 0.33   & 2739.36 & -       & 1.21    &          &         \\\hline 
  X0                           & 3650     & 5640.84  & 14.77  & -       & -       & 8.64    &          &         \\\hline
  BG4                          & 3650     & 4221.71  & 6.07   & -       & 8965.94 & 4.40    &          &         \\\hline
  Fi9                          & 7300     & -        & 6.50   & -       & -       & 317.20  &          &         \\\hline
  Pg64     \hfill (P3)         & 8322     & -        & 902.26 & -       & -       & -       &          &         \\\hline
  FTWKB11                      & 15984    & -        & 35.24  & -       & -       & 1961.99 &          &         \\\hline
  PM11LG                       & 15984    & -        & 94.27  & -       & -       & -       &          &         \\\hline
  NCKL900K                     & 900000   & 10.09    & 5.91   & 13.82   & 15.03   & -       &          &         \\\hline
 \end{tabular}}
 \caption{Running times in seconds for the various algorithms on the \textsf{combinatorial} family and comparisons for shrunken multipedes}\label{tab:combin}
\end{table}

\item Projective Planes of order 25 and 27 (\textsf{pp})

The family \textsf{pp} contains Desarguesian and non-Desarguesian (i.e., algebraic and non-algebraic) projective planes of orders up to 27. We have only tested the larger planes of order 25 (1302 vertices) and order 27 (1514 vertices). Figure~\ref{fig:benchmark-pp} shows the results. (Note that three projective planes are contained in the \textsf{combinatorial} family). For comparison: the fastest running time on a shrunken multipede on 1296 vertices was 7446.96 seconds and on shrunken multipedes with 1344 vertices timeout occurs for all algorithms.

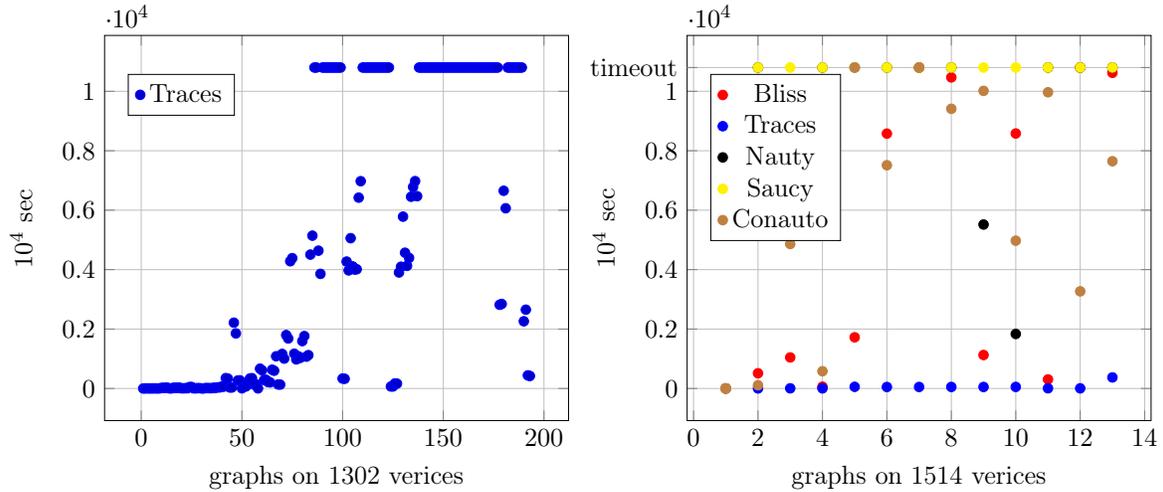
\begin{figure}[H]
 \centering
 \begin{tikzpicture}[scale = 0.9]
  \begin{axis}[grid = both, xlabel = {graphs on 1302 verices}, ylabel = {$10^4$ sec}, legend entries = {Traces},
               legend style = {at = {(0.05, 0.9)}, anchor = north west}, extra tick style={grid=major}]
   \addplot+[only marks] table{benchmarks/pp25-traces.dat};
  \end{axis}
 \end{tikzpicture}
 \begin{tikzpicture}[scale = 0.9]
  \begin{axis}[grid = both, xlabel = {graphs on 1514 verices}, ylabel = {$10^4$ sec}, legend entries = {Bliss, Traces, Nauty, Saucy, Conauto},
               legend style = {at = {(0.05, 0.9)}, anchor = north west}, extra y ticks={10800}, extra y tick labels={timeout}, extra tick style={grid=major}, cycle list name = color list]
   \addplot+[only marks] table{benchmarks/pp27-bliss.dat};
   \addplot+[only marks] table{benchmarks/pp27-traces.dat};
   \addplot+[only marks] table{benchmarks/pp27-nauty.dat};
   \addplot+[only marks] table{benchmarks/pp27-saucy.dat};
   \addplot+[only marks] table{benchmarks/pp27-conauto.dat};
  \end{axis}
 \end{tikzpicture}
 \caption{Running times on projective planes of order 25 (left) and order 27 (right).}
 \label{fig:benchmark-pp}
\end{figure}

\item Non-disjoint unions of tripartite graphs (\textsf{tnn})

The family \textsf{tnn} contains graphs which have so called color-components, which some of the algorithms are tuned to exploit. Figure~\ref{fig:benchmark-tnn} shows the running times and comparisons to running times on shrunken multipedes.

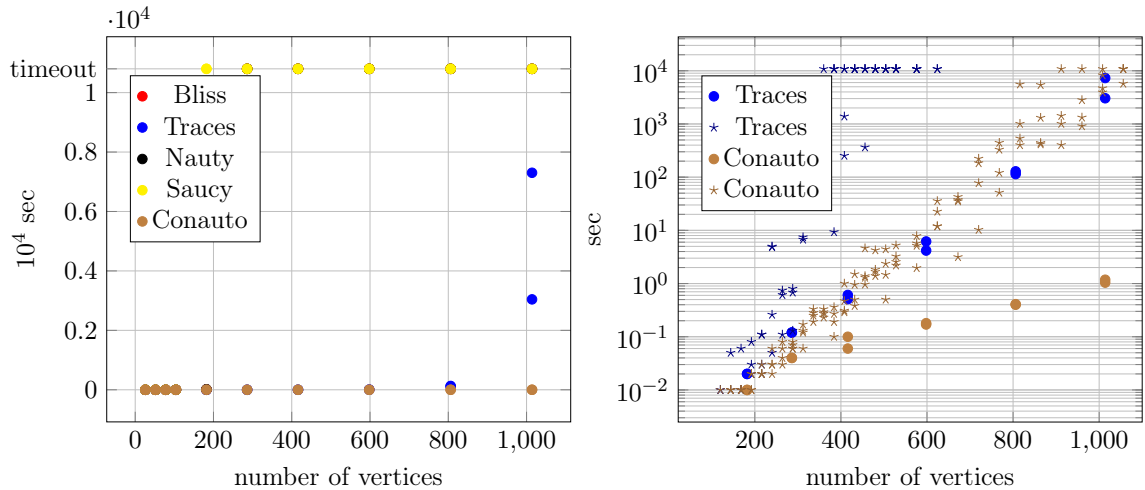
\begin{figure}[H]
 \centering
 \begin{tikzpicture}[scale = 0.9]
  \begin{axis}[grid = both, xlabel = {number of vertices}, ylabel = {$10^4$ sec}, legend entries = {Bliss, Traces, Nauty, Saucy, Conauto},
               legend style = {at = {(0.05, 0.9)}, anchor = north west}, extra y ticks={10800}, extra y tick labels={timeout}, extra tick style={grid=major}, cycle list name = color list]
   \addplot+[only marks] table{benchmarks/tnn-bliss.dat};
   \addplot+[only marks] table{benchmarks/tnn-traces.dat};
   \addplot+[only marks] table{benchmarks/tnn-nauty.dat};
   \addplot+[only marks] table{benchmarks/tnn-saucy.dat};
   \addplot+[only marks] table{benchmarks/tnn-conauto.dat};
  \end{axis}
 \end{tikzpicture}
 \begin{tikzpicture}[scale = 0.9]
  \begin{axis}[xmax=1100, grid = both,ymode = log, xlabel = {number of vertices}, ylabel = {sec}, legend entries = {Traces, Traces, Conauto, Conauto},
               legend style = {at = {(0.05, 0.9)}, anchor = north west}, extra tick style={grid=major}, cycle list name = color list]
   \pgfplotsset{cycle list shift=+1}
   \addplot+[only marks] table{benchmarks/tnn-traces.dat};
   \pgfplotsset{cycle list name=mycolorliststar}
   \pgfplotsset{cycle list shift=0}
   \addplot+[only marks] table{benchmarks/t2-traces.dat};
   \pgfplotsset{cycle list name = color list}
   \pgfplotsset{cycle list shift=+2}
   \addplot+[only marks] table{benchmarks/tnn-conauto.dat};
   \pgfplotsset{cycle list name=mycolorliststar}
   \pgfplotsset{cycle list shift=+1}
   \addplot+[only marks] table{benchmarks/t2-conauto.dat};
  \end{axis}
 \end{tikzpicture}
 \caption{Running times on unions of tripartite graphs (\textsf{tnn}) for the various algorithms (left) and a comparison with shrunken multipedes (right, `$*$' indicates experiments on the shrunken multipede construction).}
 \label{fig:benchmark-tnn}
\end{figure}

\item Cai-F\"{u}rer-Immerman graphs  (\textsf{cfi})

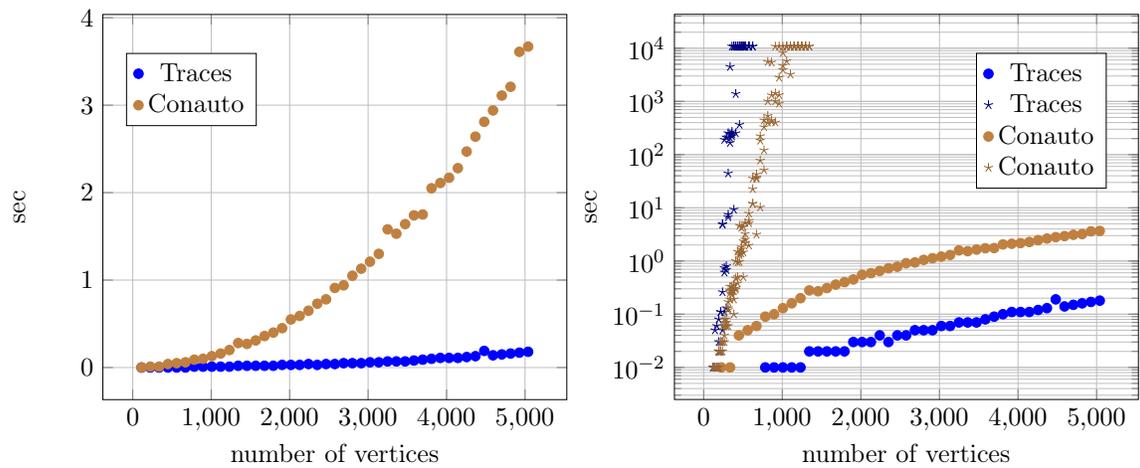
\begin{figure}[H]
 \centering
 \begin{tikzpicture}[scale = 0.9]
  \begin{axis}[grid = both, xlabel = {number of vertices}, ylabel = {sec}, legend entries = {Traces, Conauto},
               legend style = {at = {(0.05, 0.9)}, anchor = north west}, cycle list name = color list]
    \pgfplotsset{cycle list shift=+1}            
   \addplot+[only marks] table{benchmarks/c2-traces.dat};
    \pgfplotsset{cycle list shift=+3}
   \addplot+[only marks] table{benchmarks/c2-conauto.dat};
  \end{axis}
 \end{tikzpicture}
 \begin{tikzpicture}[scale = 0.9]
  \begin{axis}[grid = both,ymode = log, xlabel = {number of vertices}, ylabel = {sec}, legend entries = {Traces, Traces, Conauto, Conauto},
               legend style = {at = {(0.65, 0.9)}, anchor = north west}, extra tick style={grid=major}, cycle list name = color list]
   \pgfplotsset{cycle list shift=+1}
   \addplot+[only marks] table{benchmarks/c2-traces.dat};
   \pgfplotsset{cycle list name=mycolorliststar}
   \pgfplotsset{cycle list shift=0}
   \addplot+[only marks] table{benchmarks/t2-traces.dat};
   \pgfplotsset{cycle list name = color list}
   \pgfplotsset{cycle list shift=+2}
   \addplot+[only marks] table{benchmarks/c2-conauto.dat};
   \pgfplotsset{cycle list name=mycolorliststar}
   \pgfplotsset{cycle list shift=+1}
   \addplot+[only marks] table{benchmarks/t2-conauto.dat};
  \end{axis}
 \end{tikzpicture}
 \caption[Comparison for rigid base construction and the CFI-construction]{Performance of some algorithms on the CFI-construction (left). Note that the scale is in seconds not~$10^4$ seconds as for most other plots.
          Also shown is a comparison for Traces and Conauto between the CFI-construction and the rigid base construction (right, `$*$' indicates experiments on the shrunken multipede construction).}
 \label{fig:benchmark-standard-cfi}
\end{figure}

\item Miyazaki graphs (\textsf{mz-aug2})

\begin{figure}[H]
 \centering
 \begin{tikzpicture}[scale = 0.9]
  \begin{axis}[grid = both, xlabel = {number of vertices}, ylabel = {sec}, legend entries = {Bliss, Traces},
               legend style = {at = {(0.05, 0.9)}, anchor = north west}, cycle list name = color list]
   \addplot+[only marks] table{benchmarks/mz-aug2-bliss.dat};
   \addplot+[only marks] table{benchmarks/mz-aug2-traces.dat};
  \end{axis}
 \end{tikzpicture}
 \begin{tikzpicture}[scale = 0.9]
  \begin{axis}[grid = both,ymode = log, xlabel = {number of vertices}, ylabel = {sec}, legend entries = {Bliss, Bliss, Traces, Traces},
               legend style = {at = {(0.05, 0.9)}, anchor = north west}, extra tick style={grid=major}, cycle list name = color list]
   \pgfplotsset{cycle list shift=+0}
   \addplot+[only marks] table{benchmarks/mz-aug2-bliss.dat};
   \pgfplotsset{cycle list name=mycolorliststar}
   \pgfplotsset{cycle list shift=-1}
   \addplot+[only marks] table{benchmarks/t2-bliss.dat};
   \pgfplotsset{cycle list name = color list}
   \pgfplotsset{cycle list shift=-1}
   \addplot+[only marks] table{benchmarks/mz-aug2-traces.dat};
   \pgfplotsset{cycle list name=mycolorliststar}
   \pgfplotsset{cycle list shift=-2}
   \addplot+[only marks] table{benchmarks/t2-traces.dat};
  \end{axis}
 \end{tikzpicture}
 \caption{Running times of the augmented Miyazaki graphs (\textsf{mz-aug2}) for bliss and traces and a comparison with times for shrunken multipedes (right, `$*$' indicates experiments on the shrunken multipedes construction).}
 \label{fig:benchmark-miyazaki}
\end{figure}
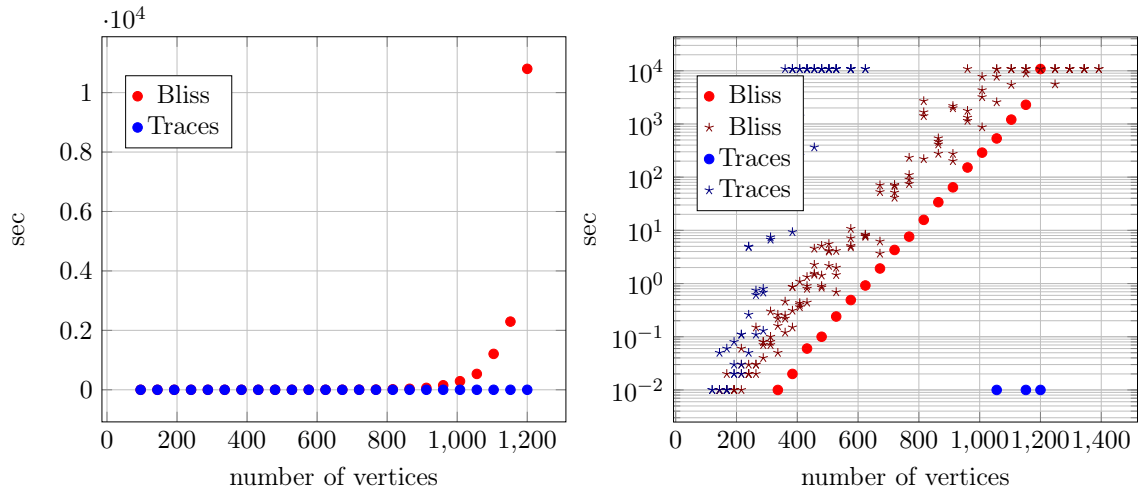

\end{itemize}

\section{Discussion}

We conclude with a discussion of the experimental results. Let us first address the question of variance in the experiments. 
 
\paragraph{Variance.} 
For each graph we tested each algorithm only once. Not all of the algorithms are deterministic and even for deterministic algorithms, there is also the question whether permuting the input graphs has any influence on the running times.
Table~\ref{tab:variance} shows experiments in which we ran permutations of the same graph several times for each of the algorithms.
Only saucy exhibits a non-negligible variance in some of the runs.
However, even that variance is negligible in comparison to the exponential scaling of the running times on the benchmark graphs as the number of vertices increases. 
 
\begin{table}
 \centering
 \scalebox{0.85}{ 
 \begin{tabular}{| c || r | r | r | r | r |}
  \hline
  \multicolumn{6}{|c|}{$R(B^*(G_n,\sigma))$}\\
  \hline
  $n$  & Bliss  & Traces & Nauty  & Saucy & Conauto\\
  \hline
       & 0.01   & 0.07   & 0.02   & 0.79  & 0.07\\
       & 0.01   & 0.07   & 0.02   & 3.12  & 0.07\\
  432  & 0.01   & 0.07   & 0.02   & 34.37 & 0.07\\
       & 0.01   & 0.07   & 0.02   & 0.41  & 0.07\\
       & 0.01   & 0.07   & 0.03   & 0.39  & 0.07\\
  \hline     
       & 1.03   & 915.63 & 3.25   & -     & 64.85\\
       & 1.03   & 892.14 & 3.23   & -     & 65.59\\
  864  & 1.02   & 953.45 & 3.28   & -     & 57.22\\
       & 1.03   & 934.35 & 3.28   & -     & 54.40\\
       & 1.03   & 882.44 & 3.28   & -     & 70.34\\
  \hline
       & 116.72 & -      & 20.15  & -     & 983.81\\
       & 124.24 & -      & 20.35  & -     & 948.19\\
  1224 & 116.51 & -      & 20.23  & -     & 901.45\\
       & 115.26 & -      & 20.29  & -     & 1092.45\\
       & 122.42 & -      & 20.01  & -     & 1423.21\\
  \hline
       & 3523.82& -      & 5527.81& -     & -\\
       & 2493.47& -      & 5532.19& -     & -\\
  1656 & 2670.06& -      & 5517.05& -     & -\\
       & 3358.18& -      & 5534.84& -     & -\\
       & 4156.87& -      & 5560.01& -     & -\\
  \hline
 \end{tabular}}
 \scalebox{0.85}{
 \begin{tabular}{| c || r | r | r | r | r |}
  \hline
  \multicolumn{6}{|c|}{shrunken multipedes}\\
  \hline
  $n$  & Bliss  & Traces & Nauty  & Saucy  & Conauto\\
  \hline
       & 0.04   & 174.83 & 0.02   & -      & 0.05\\
       & 0.03   & 180.07 & 0.01   & -      & 0.05\\
  288  & 0.03   & 174.79 & 0.01   & 0.22   & 0.04\\
       & 0.03   & 174.41 & 0.01   & 0.22   & 0.04\\
       & 0.03   & 174.14 & 0.01   & 0.22   & 0.04\\
  \hline     
       & 13.32  & -      & 1.04   & -      & 5.08\\
       & 13.15  & -      & 0.61   & -      & 5.13\\
  576  & 13.43  & -      & 0.43   & -      & 5.82\\
       & 11.90  & -      & 0.57   & -      & 5.12\\
       & 10.13  & -      & 0.53   & -      & 5.05\\
  \hline
       & 508.76 & -      & 276.88 & -      & 462.12\\
       & 358.57 & -      & 278.75 & -      & 403.85\\
  864  & 342.34 & -      & 278.43 & -      & 424.83\\
       & 487.37 & -      & 279.75 & -      & 524.89\\
       & 403.34 & -      & 279.46 & -      & 408.64\\
  \hline
       & -      & -      & -      & -      & -\\
       & -      & -      & -      & -      & -\\
  1152 & -      & -      & -      & -      & -\\
       & -      & -      & -      & -      & -\\
       & -      & -      & -      & -      & -\\
  \hline
 \end{tabular}}
 \caption{Repeated runs on the same input graphs show that the running time of algorithms has a low variance on the benchmark graphs.}\label{tab:variance}
\end{table}
 
\paragraph{Input size.}
The experiments we present all order the graphs according to the number of vertices. However, it is a reasonable question whether one should rather consider the number of edges of the input graphs as the input size.
Indeed, in practice one may be interested in dealing with sparse graphs, and the various algorithms are in particular tuned for this case.
While all our graphs are sparse in the sense that they only have a linear number of edges, 
the average degree does vary. For shrunken multipedes we provided the coarse bound of~24 for the average degree. But the multipedes have an average degrees of 4.363, while the graphs obtained by only applying the linear algebra reduction ($R(B^{*}(G_n,\sigma))$) have an average degree of 4. It seems thus that when input size is measured in terms of the number of edges the graphs $R(B^{*}(G_n,\sigma))$ are the most difficult.
  
\paragraph{Conclusion.}
Overall we conclude that the new construction constitutes a simple algorithm that yields the most difficult benchmark graphs to date and experimentally we observe exponential behaviour in terms of the running times of practical isomorphism solvers.

\section*{Acknowledgments}
We thank Luis N\'{u}\~nez Chiroque, Tommi Junttila, Petteri Kaski,  Brendan McKay, Adolfo Piperno, and Jos\'{e} Luis L\'{o}pez-Presa for numerous discussions related to graph isomorphism and for feedback on our benchmark instances. This work was funded by the excellence initiative of the German federal and state governments.

\bibliographystyle{abbrv}
\bibliography{literature}

\newpage
\begin{appendix}
\noindent{\huge{\textbf{Appendix}}}
\section{Proof of Theorem~\ref{thm:construction:is:rigid}}\label{app:proof:of:rigidity:theorem}

This section is devoted to proving that with high probability the graphs obtained with the multipede construction are rigid.

\begin{reptheorem}{thm:construction:is:rigid}
The probability that $R(B(G_n, \sigma))$ is not rigid is in $\mathcal{O}\left(\frac{\log^{2}n}{n}\right)$.
\end{reptheorem}

\subsection{Basic inequalities}
\label{app:inequal}

To bound probabilities we use various basic inequalities that we summarize next. 
As usual, for $n,k \in \mathbb{N}$, $k \leq n$, the \emph{binomial coefficient} $\binom{n}{k} = \frac{n!}{k! \cdot (n-k)!}$ is the number of $k$-element subsets of an $n$-element set.

\begin{lemma}
 For $n_1,n_2,k_1,k_2 \in \mathbb{N}$ with $k_1 \leq n_1$ and~$k_2\leq n_2$, it holds that
 \begin{equation}
  \label{eq:binom-mult}
  \binom{n_1}{k_1}\cdot\binom{n_2}{k_2} \leq \binom{n_1+n_2}{k_1+k_2}.
 \end{equation}
\end{lemma}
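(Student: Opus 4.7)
The plan is to give a short combinatorial proof, since the inequality is really just saying that a particular class of subsets embeds into a larger one. I would first interpret both sides as counting subsets: the left-hand side counts pairs $(A,B)$ with $A \subseteq \{1,\dots,n_1\}$, $|A|=k_1$ and $B \subseteq \{n_1+1,\dots,n_1+n_2\}$, $|B|=k_2$, while the right-hand side counts $(k_1+k_2)$-element subsets of $\{1,\dots,n_1+n_2\}$. The map $(A,B)\mapsto A\cup B$ is clearly injective (since $A$ and $B$ are recoverable from $A\cup B$ by intersecting with the two halves of the ground set), and its image consists of exactly those $(k_1+k_2)$-subsets whose intersection with $\{1,\dots,n_1\}$ has size $k_1$. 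Hence the left count is bounded above by the right count, which is the desired inequality.

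If a purely algebraic proof is preferred, I would instead invoke Vandermonde's identity
\[
\binom{n_1+n_2}{k_1+k_2} \;=\; \sum_{j=0}^{k_1+k_2}\binom{n_1}{j}\binom{n_2}{k_1+k_2-j},
\]
observe that every summand is a non-negative integer, and isolate the single term $j=k_1$, which equals $\binom{n_1}{k_1}\binom{n_2}{k_2}$. Dropping the remaining non-negative terms yields the bound.

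There is no real obstacle here; the only minor point to keep in mind is the edge cases (e.g.\ $k_1=0$ or $k_2=0$, or $k_1+k_2>n_1+n_2$), but under the hypothesis $k_1\le n_1$ and $k_2\le n_2$ both interpretations remain valid and the embedding argument goes through without modification. I would present the combinatorial version for brevity, since it makes the inequality transparent and avoids invoking Vandermonde's identity as a black box.
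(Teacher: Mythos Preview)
Your proof is correct; both the injection argument and the Vandermonde approach are valid and standard. The paper itself states this lemma without proof (treating it as an elementary fact), so there is nothing to compare against---your combinatorial argument would serve perfectly well if a proof were to be supplied.
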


\begin{lemma}
 \label{la:two-binoms-bound}
 For $n_1,n_2,k \in \mathbb{N}$ with $k \leq n_1 \leq n_2$ it holds that 
 \begin{equation}
  \label{eq:binom-inverse}
  \binom{n_1}{k} \cdot \binom{n_2}{k}^{-1} \leq \left(\frac{n_1}{n_2}\right)^{k}.
 \end{equation}
\end{lemma}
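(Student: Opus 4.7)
The plan is to expand the ratio of binomial coefficients as a telescoping product and then compare it termwise with the product expansion of $(n_1/n_2)^k$. Using the factorial definition, I would first write
\[
\frac{\binom{n_1}{k}}{\binom{n_2}{k}} = \frac{n_1!\,(n_2-k)!}{n_2!\,(n_1-k)!} = \prod_{i=0}^{k-1}\frac{n_1-i}{n_2-i},
\]
so that the inequality reduces to showing
\[
\prod_{i=0}^{k-1}\frac{n_1-i}{n_2-i} \;\leq\; \prod_{i=0}^{k-1}\frac{n_1}{n_2}.
\]

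Next I would establish the termwise bound $\frac{n_1-i}{n_2-i} \leq \frac{n_1}{n_2}$ for every $i \in \{0,1,\dots,k-1\}$. Cross-multiplying (which is valid since $k \leq n_1 \leq n_2$ guarantees all quantities are positive) turns this into $n_2(n_1-i) \leq n_1(n_2-i)$, which simplifies to $i\,n_1 \leq i\,n_2$; this is immediate from $n_1 \leq n_2$ and $i \geq 0$. Intuitively, subtracting the same nonnegative quantity from both numerator and denominator of a proper fraction can only decrease it, so shifting from $n_1/n_2$ to $(n_1-i)/(n_2-i)$ is monotone in the right direction.

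Multiplying the $k$ termwise inequalities together yields the desired bound. The only mild subtlety is ensuring that all the factors $n_2 - i$ remain positive, which follows from $i \leq k-1 < k \leq n_1 \leq n_2$, so no term ever vanishes or changes sign. There is no real obstacle here; the entire argument is a one-line calculation once the ratio is written as a product, and the hypothesis $n_1 \leq n_2$ is used exactly to make the termwise comparison go through.
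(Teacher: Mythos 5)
Your proof is correct and follows essentially the same route as the paper: both expand the ratio of binomial coefficients as $\prod_{i=0}^{k-1}\frac{n_1-i}{n_2-i}$ and then compare termwise with $\frac{n_1}{n_2}$. You simply spell out the cross-multiplication step that the paper leaves implicit.
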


\begin{proof}
 We have
 \[\binom{n_1}{k} \cdot \binom{n_2}{k}^{-1} = \frac{\prod_{i=0}^{k-1}(n_1 - i)}{\prod_{i=0}^{k-1}(n_2 - i)} = \prod_{i=0}^{k-1}\left(\frac{n_1-i}{n_2 -i}\right).\]
 Furthermore $(n_1 - i)\cdot(n_2-i)^{-1} \leq n_1n_2^{-1}$ for all $i \leq n_1$ which gives the desired bound.
\end{proof}

\begin{lemma}
 For $n \in \mathbb{N}$ it holds that
 \begin{equation}
  \label{eq:binom-3n-n}
  \binom{3n}{n} \geq \frac{6^{n}}{2\sqrt{n}}.
 \end{equation}
\end{lemma}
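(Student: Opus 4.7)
My plan is to prove the inequality by induction on $n\geq 1$. For the base case $n=1$ one has $\binom{3}{1}=3$ and $6^{1}/(2\sqrt{1})=3$, so equality holds.

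For the inductive step, I would compare the ratio of successive left-hand sides with that of the right-hand sides. A direct computation gives
\[
\frac{\binom{3(n+1)}{n+1}}{\binom{3n}{n}}=\frac{(3n+1)(3n+2)(3n+3)}{(n+1)(2n+1)(2n+2)}=\frac{3(3n+1)(3n+2)}{2(n+1)(2n+1)},
\]
while the right-hand side satisfies $\bigl(6^{n+1}/(2\sqrt{n+1})\bigr)/\bigl(6^{n}/(2\sqrt{n})\bigr)=6\sqrt{n/(n+1)}$. Invoking the induction hypothesis, the step therefore reduces to verifying
\[
\frac{3(3n+1)(3n+2)}{2(n+1)(2n+1)}\ \geq\ 6\sqrt{\tfrac{n}{n+1}}.
\]
Both sides being positive, I would square and clear denominators to obtain the equivalent polynomial inequality
\[
\bigl((3n+1)(3n+2)\bigr)^{2}\ \geq\ 16\,n\,(n+1)\,(2n+1)^{2}.
\]

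Expanding yields $81n^{4}+162n^{3}+117n^{2}+36n+4$ on the left and $64n^{4}+128n^{3}+80n^{2}+16n$ on the right, with difference $17n^{4}+34n^{3}+37n^{2}+20n+4$, a polynomial with only positive coefficients, hence strictly positive for every $n\geq 0$. This closes the induction.

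There is no real obstacle: the constant $6$ appearing in the statement is considerably smaller than the true exponential growth rate $27/4$ of $\binom{3n}{n}^{1/n}$, so the inductive step has an ample margin. The only mildly delicate point is that the $\sqrt{n}$ factor forces one to square the inequality, which is harmless here since both sides are positive; an alternative route via Stirling's formula would give a far stronger asymptotic bound but would require effective error terms to be converted into an inequality valid for every $n\geq 1$, so the direct induction above is both shorter and entirely elementary.
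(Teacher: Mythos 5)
Your induction is correct: the base case $n=1$ holds with equality, the ratio computations are right, and after squaring and clearing denominators the resulting polynomial inequality $\bigl((3n+1)(3n+2)\bigr)^2 \geq 16n(n+1)(2n+1)^2$ does reduce to the manifestly nonnegative $17n^4+34n^3+37n^2+20n+4 \geq 0$. The paper proceeds differently: it first invokes Stirling's approximation to get $\binom{2n}{n} \geq 2^{2n-1}/\sqrt{n}$, and then applies its Lemma~\ref{la:two-binoms-bound} with $n_1=2n$, $n_2=3n$, $k=n$ to obtain $\binom{3n}{n} \geq \binom{2n}{n}\,(3/2)^n \geq 6^n/(2\sqrt{n})$. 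The paper's route is shorter on the page but imports the central-binomial lower bound as a known fact; your induction is fully self-contained and elementary, at the cost of the polynomial expansion. Both are sound; yours is arguably cleaner in a context where one does not want to appeal to Stirling, and the paper's is slightly more modular since it reuses a lemma it has already stated.
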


\begin{proof}
 Using Sterling's approximation we get that \[\binom{2n}{n} \geq \frac{2^{2n-1}}{\sqrt{n}}.\]
 Combining this with Lemma \ref{la:two-binoms-bound} we obtain the desired bound.
\end{proof}

\begin{lemma}
 For $q < 1$ and $n \in \mathbb{N}$ it holds that \[\sum_{k=0}^{n}q^{k}k^{2} = \frac{q(q+1)}{(1-q)^{3}} - \frac{q^{n+1}}{(1-q)^{3}}\left(n^{2}(1-q)^{2} + 2n(1-q) + q + 1\right).\]
\end{lemma}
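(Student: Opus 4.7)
\smallskip
\noindent\textit{Proof proposal.}
The identity is a standard finite-sum closed form; I would derive it by differentiating the geometric series twice with respect to $q$, or equivalently verify it by induction on $n$. Both approaches are routine; the only real work is the algebraic bookkeeping needed to match the exact form stated on the right-hand side.

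The plan by differentiation: start from the finite geometric sum
\[
S_0(q,n) \;=\; \sum_{k=0}^{n} q^{k} \;=\; \frac{1-q^{n+1}}{1-q}.
\]
Differentiating with respect to $q$ and multiplying by $q$ yields $S_1(q,n) := \sum_{k=0}^n k q^k = q\,\tfrac{d}{dq}S_0(q,n)$, which after simplification takes the form
\[
S_1(q,n) \;=\; \frac{q}{(1-q)^2} \;-\; \frac{q^{n+1}}{(1-q)^2}\bigl(n(1-q)+1\bigr).
\]
Differentiating once more and again multiplying by $q$ gives the target sum $S_2(q,n) = q\,\tfrac{d}{dq}S_1(q,n)$. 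After taking the derivative, one obtains three groups of terms: a ``closed'' piece coming from $q/(1-q)^2$, which simplifies to $q(q+1)/(1-q)^3$; and two ``tail'' pieces involving $q^{n+1}$, whose coefficients combine into a single polynomial in $n$ over $(1-q)^3$.

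The main (and essentially only) obstacle will be collecting the tail contributions so that they match the precise expression $n^{2}(1-q)^{2} + 2n(1-q) + q + 1$ displayed in the statement. I would carry this out by writing the tail coefficient as $-q^{n+1}/(1-q)^3$ times a polynomial, expanding the product rule application $\tfrac{d}{dq}[q^{n+1}(n(1-q)+1)/(1-q)^2]$, and then regrouping in descending powers of $(1-q)$; the $n^2(1-q)^2$ term arises from differentiating $q^{n+1}$ against the factor $n(1-q)$, the $2n(1-q)$ term from the cross contributions, and the final $q+1$ from the combination of the constant pieces together with the extra factor of $(1-q)^{-1}$ absorbed into the common denominator.

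As an alternative (and perhaps cleaner) sanity check, I would verify the identity by induction on $n$. The base case $n=0$ reduces to $0 = q(q+1)/(1-q)^3 - q(n^2\cdot\text{stuff})/(1-q)^3$ evaluated at $n=0$, which is immediate. For the inductive step, after subtracting the formula at $n-1$ from the formula at $n$ one must show the difference equals $n^2 q^n$; this boils down to the polynomial identity
\[
\bigl(n{-}1\bigr)^{2}(1{-}q)^{2} + 2(n{-}1)(1{-}q) + (q{+}1) \;-\; q\bigl(n^{2}(1{-}q)^{2} + 2n(1{-}q) + q{+}1\bigr) \;=\; -\,n^{2}(1-q)^{3},
\]
which can be checked by expanding both sides as polynomials in $q$ with coefficients in $n$. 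Either route delivers the stated closed form.
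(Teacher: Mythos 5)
Your primary route---writing $\sum_{k=0}^n q^k k^2$ as $q\,\frac{d}{dq}\bigl(q\,\frac{d}{dq}\sum_{k=0}^n q^k\bigr)$, substituting the finite geometric sum, and differentiating twice---is exactly the paper's proof, and your intermediate expression for $S_1(q,n)$ is correct. One small slip in your alternative induction route: the polynomial identity should read
\[
(n{-}1)^{2}(1{-}q)^{2} + 2(n{-}1)(1{-}q) + (q{+}1) \;-\; q\bigl(n^{2}(1{-}q)^{2} + 2n(1{-}q) + q{+}1\bigr) \;=\; n^{2}(1-q)^{3}
\]
with a \emph{positive} sign on the right (since the difference $\text{RHS}(n)-\text{RHS}(n-1)$ equals $\frac{q^n}{(1-q)^3}\cdot n^2(1-q)^3 = n^2q^n$, as desired); with the $-n^2(1-q)^3$ you wrote, the induction would produce $-n^2q^n$ and fail.
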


\begin{proof}
 We have \[\sum_{k=0}^{n}q^{k}k^{2} = \sum_{k=0}^{n} q\cdot\frac{d}{dq}\left(q\cdot\frac{d}{dq}q^{k}\right) = q\cdot\frac{d}{dq}\left(q\cdot\frac{d}{dq}\sum_{k=0}^{n} q^{k}\right) = q\cdot\frac{d}{dq}\left(q\cdot\frac{d}{dq}\frac{1 - q^{n+1}}{1-q}\right), \] and differentiating the right side proves the equation.
\end{proof}

\begin{corollary}
 For $q < 1$ and $n,\ell \in \mathbb{N}$ with $\ell < n$ it holds that
 \begin{equation}
  \label{eq:sum}
  \sum_{k=\ell+1}^{n}q^{k}k^{2} \leq \frac{q^{\ell+1}}{(1-q)^{3}}\left(\ell^{2}(1-q)^{2} + 2\ell(1-q) + q + 1\right).
 \end{equation}
\end{corollary}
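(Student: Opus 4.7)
The plan is to derive the corollary as an immediate consequence of the preceding lemma by splitting the partial sum and dropping a non-negative tail. Concretely, I would write
\[
\sum_{k=\ell+1}^{n} q^{k} k^{2} \;=\; \sum_{k=0}^{n} q^{k} k^{2} \;-\; \sum_{k=0}^{\ell} q^{k} k^{2},
\]
apply the closed form from the preceding lemma to each of the two sums on the right, and observe that the two copies of the leading term $\frac{q(q+1)}{(1-q)^{3}}$ cancel. What remains is exactly
\[
\sum_{k=\ell+1}^{n} q^{k} k^{2} \;=\; \frac{q^{\ell+1}}{(1-q)^{3}}\bigl(\ell^{2}(1-q)^{2} + 2\ell(1-q) + q + 1\bigr) - \frac{q^{n+1}}{(1-q)^{3}}\bigl(n^{2}(1-q)^{2} + 2n(1-q) + q + 1\bigr).
\]

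To turn this identity into the stated upper bound, I would argue that the second, subtracted term is non-negative, since throughout the appendix $q$ is a probability and hence $q \in [0,1)$. In that range, $q^{n+1}\ge 0$, $(1-q)^{3}>0$, and each summand of $n^{2}(1-q)^{2} + 2n(1-q) + q + 1$ is non-negative. Dropping this tail term therefore preserves the inequality and yields the claim.

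There is essentially no obstacle here: the only thing to be cautious about is the range of $q$. If one wanted to allow $q \in (-1,1)$, the sign of $q^{n+1}$ could flip, so one would instead bound the partial sum by triangle inequality. Since the intended application concerns probabilities, the statement is used with $q\in[0,1)$ and no such refinement is needed.
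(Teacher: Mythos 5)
Your proof is correct and it is precisely the route the paper has in mind: the corollary is stated without proof immediately after the lemma, so it is meant to follow by exactly the telescoping decomposition you give, canceling the two $\frac{q(q+1)}{(1-q)^3}$ terms and then dropping the non-negative tail term $\frac{q^{n+1}}{(1-q)^{3}}\bigl(n^{2}(1-q)^{2} + 2n(1-q) + q + 1\bigr)$. Your caveat about the sign of $q$ is also well placed: as literally stated the hypothesis is only $q<1$, for which $q^{n+1}$ could be negative, but in the single application within the paper one has $q = 2/3 \in [0,1)$, so the dropped term is indeed non-negative and no refinement is needed.
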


\begin{lemma}
 For $n \in \mathbb{N}$, $n \geq 1$, it holds that
 \begin{equation}
  \label{eq:sqrt}
  \sqrt[n]{n} = n^{1/n} < 2.
 \end{equation}
\end{lemma}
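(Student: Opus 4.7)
The claim $n^{1/n} < 2$ is equivalent, after raising both sides to the $n$-th power, to the elementary inequality $n < 2^{n}$ for every integer $n \geq 1$. My plan is simply to verify this latter inequality, since the map $x \mapsto x^{n}$ is strictly monotone on the positive reals.

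The cleanest route is induction on $n$. For the base case $n=1$ we have $1 < 2 = 2^{1}$. For the inductive step, assume $n < 2^{n}$ for some $n \geq 1$; then using $n \geq 1$ we get $n+1 \leq 2n < 2 \cdot 2^{n} = 2^{n+1}$, which closes the induction. Alternatively, one can argue non-inductively via the binomial expansion
\[
2^{n} = (1+1)^{n} = \sum_{k=0}^{n}\binom{n}{k} \;\geq\; \binom{n}{0} + \binom{n}{1} = 1 + n \;>\; n,
\]
valid for every $n \geq 1$. Either argument gives $n < 2^{n}$, and taking $n$-th roots (both sides positive) yields $n^{1/n} < 2$, as claimed.

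There is essentially no obstacle here; the statement is a textbook inequality included only as a lemma because it will be invoked in the probability estimates of Appendix~\ref{app:proof:of:rigidity:theorem}. I would simply present one of the two arguments above in a couple of lines.
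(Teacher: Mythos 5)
Your proof is correct, but it takes a genuinely different route from the paper. You reduce the claim to the elementary inequality $n < 2^{n}$ and dispatch it by induction (or, alternatively, via the binomial theorem), which is entirely discrete and requires no analysis. The paper instead uses a calculus argument: it studies the real function $f(x) = x^{1/x} = e^{(\ln x)/x}$, computes $f'(x) = x^{1/x-2}(1-\ln x)$, observes that $f$ is strictly decreasing for $x > e$, and then verifies $f(2) = \sqrt{2} < 2$ and $f(3) = \sqrt[3]{3} < 1.45 < 2$ directly, which covers all integers $n \geq 2$ (and $n=1$ trivially). Your approach is more elementary and self-contained; the paper's monotonicity argument gives slightly more information (it shows $n^{1/n}$ is eventually decreasing, not just bounded by 2), but that extra information is not used. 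Either argument suffices for the application in Appendix~\ref{app:proof:of:rigidity:theorem}.
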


\begin{proof}
 Let $f(x) = \sqrt[x]{x} = e^{\frac{\ln x}{x}}$ where $\ln$ is the natural logarithm with base $e$.
 Then the derivative is \[f'(x) = e^{\frac{\ln x}{x}} \frac{1- \ln x}{x^{2}} = x^{\frac{1}{x} - 2}(1 - \ln x).\]
 Hence $f'(x) < 0$ and $f$ is monotonically decreasing for all $x > e$.
 Furthermore $f(2) = \sqrt{2} < 2$ and $f(3) = \sqrt[3]{3} < 1.45 < 2$.
\end{proof}

\subsection{The probability for rigidity}

We call a subset of the edges~$C\subseteq E$ of some graph~$G$ \emph{2-regular (in~$G$)} if the graph induced by the edges is 2-regular. (I.e., every vertex of~$G$ is incident with exactly~$0$ or~$2$ edges of~$C$.)

\begin{lemma}
 \label{la-2-reg}
 $B(G,\sigma)$ is even if and only if there is a 2-regular set of edges~$C \subseteq E(G)$ in~$G$ such that $\sigma(C)$ is also 2-regular in $G$.
\end{lemma}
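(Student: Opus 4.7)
The plan is to unfold both sides of the equivalence into conditions on $E(G)$ and match them directly. Recall $B(G,\sigma)$ is even precisely when there exists a non-empty $X\subseteq W_B = E(G)$ such that for every $u\in V_B$ one has $|X\cap N(u)|$ even. I will show this witness set $X$ is exactly the set $C$ required on the right-hand side (and that the empty set is excluded by the non-emptiness of the witness on the left).

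First I would split the condition on $u$ using the bipartition $V_B = V(G)\times\{0\}\,\cup\, V(G)\times\{1\}$. For $u=(v,0)$, the definition of $E_B$ gives $N((v,0)) = \{e\in E(G): v\in e\}$, the three edges of $G$ incident with $v$. Since $G$ is $3$-regular, $|X\cap N((v,0))|$ is even if and only if $X$ contains $0$ or $2$ of these three edges, i.e., the degree of $v$ in the subgraph $(V(G),X)$ is either $0$ or $2$. Imposing this for every $v\in V(G)$ is precisely the statement that $X$ is $2$-regular in $G$.

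Next, for $u=(v,1)$, the neighborhood is $N((v,1)) = \{e\in E(G): v\in \sigma(e)\} = \sigma^{-1}(\{e'\in E(G): v\in e'\})$. Because $\sigma$ is a bijection on $E(G)$, we have
\[
|X\cap N((v,1))| \;=\; |\sigma(X)\cap\{e'\in E(G): v\in e'\}|,
\]
i.e., the number of edges of $\sigma(X)$ incident with $v$. The same $3$-regularity argument as before then shows that $|X\cap N((v,1))|$ is even for every $v$ if and only if $\sigma(X)$ is $2$-regular in $G$.

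Chaining the two reductions, the existence of a non-empty witness $X\subseteq E(G)$ certifying evenness of $B(G,\sigma)$ is equivalent to the existence of a non-empty $C:=X\subseteq E(G)$ such that both $C$ and $\sigma(C)$ are $2$-regular in $G$, giving both directions simultaneously. The proof is essentially a chain of equivalences; the only care needed is the bookkeeping on the $\{1\}$-side, i.e., translating the preimage under $\sigma$ into a condition on $\sigma(X)$, so I do not foresee a real obstacle.
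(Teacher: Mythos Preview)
Your proof is correct and follows essentially the same approach as the paper: both identify the witness set $X$ with $C$ and use $3$-regularity of $G$ to translate the parity conditions at $(v,0)$ and $(v,1)$ into $2$-regularity of $X$ and $\sigma(X)$, respectively. Your version is slightly more explicit about the bijection bookkeeping on the $\{1\}$-side and about non-emptiness, and you package both directions as a single chain of equivalences rather than arguing them separately, but the underlying argument is identical.
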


\begin{proof}
 Let $C \subseteq E(G)$ be a 2-regular subset of~$G$ such that $\sigma(C)$ is also 2-regular in~$G$. We set $X = C$.
 Then $|X \cap N(v)|$ is even for every $v \in V(G) \times \{0\}$, since $C$ is 2-regular, and even for every $v \in V(G) \times \{1\}$, since $\sigma(C)$ is 2-regular.
 
 Conversely, suppose that~$X\subseteq W_B$ is a witness that~$G$ is even. Then for every~$v\in  V(G) \times \{0\}$ by definition~$|X\cap N(v)|$ is even. This implies that an even number of edges in~$X$ are incident with~$v$. Since~$G$ is three regular we conclude that either 2 or 0 edges of~$X$ are incident with~$v$ and thus~$X$ is 2-regular in~$G$. Analogously we conclude that~$\sigma(X)$ is 2-regular in~$G$.
\end{proof}

\begin{lemma}
 \label{la:prob-odd}
 The probability that $B(G_n, \sigma)$ is not odd is in $\mathcal{O}\left(\frac{log^{2}n}{n}\right)$.
\end{lemma}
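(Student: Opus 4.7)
My plan is to apply Lemma~\ref{la-2-reg} to reduce the event ``$B(G_n,\sigma)$ is not odd'' to the existence of a non-empty 2-regular edge set $C\subseteq E(G_n)$ whose image $\sigma(C)$ is also 2-regular in $G_n$. Let $r_k$ denote the number of 2-regular subsets of $E(G_n)$ of size exactly $k$. Because $\sigma$ is a uniform random permutation of the $3n$ edges of $G_n$, for every fixed $k$-subset $C$ the image $\sigma(C)$ is uniformly distributed over the $k$-subsets of $E(G_n)$, and hence $\Pr[\sigma(C)\text{ is 2-regular}]=r_k/\binom{3n}{k}$. Summing over all non-empty 2-regular $C$ via a union bound, and using that the girth of $G_n$ is $4$ (so $r_k=0$ for $k\leq 3$), yields
\[
 \Pr[B(G_n,\sigma)\text{ not odd}]\;\leq\;\sum_{k\geq 4}\frac{r_k^{\,2}}{\binom{3n}{k}}.
\]

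The next step is to bound $r_k$ by a structural analysis of 2-regular subsets of $E(G_n)$. Any 2-regular $C$ is determined by the set $D$ of diagonals it contains (with $|D|=d$) together with, when $d\geq 1$, a single binary choice: the $2d$ endpoints of $D$ become ``marked'' vertices on the $2n$-cycle and partition it into $2d$ arcs, and the degree-$2$ condition forces the cycle-edge part of $C$ to consist of exactly the odd-indexed or exactly the even-indexed arcs, each taken in full. The two resulting sizes sum to $2n+2d$, so there are only $2^{n+1}$ 2-regular subsets in total, and $r_k$ is controlled by counting placements of $d$ diagonals whose ``short side'' (the alternating arc-sum equal to $k-d$) can be realized. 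Since each diagonal pairs vertices at cycle-distance exactly $n$, demanding a short side of small total length severely restricts the positions of the $d$ diagonals.

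I would then split the union bound at a threshold $k_0=\Theta(\log n)$. For $k\leq k_0$ I would bound $r_k$ by an explicit polynomial $p_k(n)$ using the rigidification argument above, and combine with the estimate $\binom{3n}{k}\geq (3n/k)^{k}$ to show that each individual term contributes at most $\mathcal{O}(\log n / n)$; aggregating over the $\Theta(\log n)$ values of $k$ in this range then produces the claimed $\mathcal{O}(\log^2 n/n)$ bound. For $k>k_0$ I would use the crude bound $r_k\leq 2^{n+1}$ together with the lower bound $\binom{3n}{n}\geq 6^n/(2\sqrt n)$ from~(\ref{eq:binom-3n-n}) and the geometric tail estimate~(\ref{eq:sum}) to show that the large-$k$ tail is super-polynomially small and hence absorbed in the bound.

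The main obstacle will be the small-$k$ regime. The naive bound $r_k\leq 2\binom{n}{\lfloor k/2\rfloor}$ is already too loose once $k$ approaches $\log n$; one must exploit the antipodal pairing induced by the diagonals of $G_n$ to argue that realizing a 2-regular subset of size $k$ squeezes the short-side arc-lengths into a narrow configuration, so that $r_k$ is only polynomial in $n$ with an explicit dependence on $k$. Calibrating this estimate so that the $\Theta(\log n)$ small-$k$ contributions and the tail together fit within $\mathcal{O}(\log^2 n/n)$, rather than an extra factor of $\log n$ or $n$, is where the bulk of the technical work will lie.
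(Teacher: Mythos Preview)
Your overall approach---reduce via Lemma~\ref{la-2-reg}, union-bound over 2-regular edge sets of each size, and count such sets via the ``set of diagonals plus one binary choice'' structure---is precisely the paper's strategy, and your structural observations are all correct.

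There is, however, a real gap in your treatment of the range $k>k_0$. The crude bound $r_k\le 2^{n+1}$ yields $r_k^{2}/\binom{3n}{k}\le 4^{n+1}/\binom{3n}{k}$, and for $k$ just above $\Theta(\log n)$ the denominator $\binom{3n}{k}$ is only of order $n^{\Theta(\log n)}$, i.e.\ subexponential; the ratio therefore blows up exponentially in $n$. Neither the lower bound on $\binom{3n}{n}$ nor the estimate~(\ref{eq:sum}) rescues this, because you have produced no factor $q^{k}$ with $q<1$ to feed into~(\ref{eq:sum}). The crude bound is only effective once $k$ is linear in $n$ (say $k>n$), where $\binom{3n}{k}\ge\binom{3n}{n}\ge 6^{n}/(2\sqrt n)$ dominates $4^{n+1}$.

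The paper closes this gap by using a \emph{three-way} split $[4,2\lfloor\log n\rfloor]$, $(2\lfloor\log n\rfloor,n]$, $(n,2n]$. In the middle interval it exploits exactly the diagonal-counting bound you already have: with $d$ diagonals there are at most $2\binom{n}{d}$ choices, so after pairing the diagonal counts of $C$ and of $\sigma(C)$ and applying~(\ref{eq:binom-mult}) and~(\ref{eq:binom-inverse}) one obtains a term of order $k^{2}\binom{2n}{k}/\binom{3n}{k}\le k^{2}(2/3)^{k}$, to which~(\ref{eq:sum}) now legitimately applies. The crude $2^{n+1}$ bound is reserved for $k>n$. Once you insert this middle range your plan goes through; your small-$k$ analysis (short-side arc-lengths force nearby diagonal pairs) is essentially the paper's bound $n(d,k)\le 2(4n\log n)^{d/2}$ for $k\le 2\log n$.
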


\begin{proof}
 Let $\sigma$ be a random permutation of $E(G_n)$.
 Using Lemma \ref{la-2-reg} it suffices to estimate the probability that there is some 2-regular edge set $C \subseteq E(G)$ such that $\sigma(C)$ is also 2-regular in~$G_n$.
 Note that every 2-regular edge set has at least~$4$ and at most~$2n$ edges and it also has at most half as many diagonals as other edges.
 For~$0\leq k \leq n$ and~$4\leq \ell \leq 2n$ let $n(k,\ell)$ be the number of 2-regular edge sets~$C \subseteq E(G)$ such that $|C| = \ell$ and $C$ contains exactly $k$ diagonal edges of $G_n$.
 Note that if we fix $k$ diagonal edges of $G_n$ then there are at most two 2-regular subgraphs containing exactly these diagonal edges.
 So $n(k,\ell) \leq 2 \cdot \binom{n}{k}$.
 For $\ell \leq 2\log n$ we can improve this bound to $n(k,\ell) \leq 2n^{k/2}(4\log n)^{k/2} = 2 \cdot (4n\log n)^{k/2}$.
 This can be seen from the fact that the diagonal edges can be paired so that each pair is of distance at most $2\log n$ (in particular~$k$ must be even).
 
 The probability that for particular two sets~$C$ and~$C'$ with~$|C| = |C'| = \ell$ a random permutation~$\sigma$ satisfies~$\sigma(C) = C'$ is~$\binom{3n}{\ell}^{-1}$.
 Thus, we can bound the probability $p$ that there is some 2-regular $C \subseteq E(G)$ such that $\sigma(C)$ is also 2-regular by
 \[p \leq \sum_{\ell = 4}^{2n}\sum_{k=0}^{\lfloor\frac{\ell}{2}\rfloor} \sum_{k'=0}^{\lfloor\frac{\ell}{2}\rfloor} n(k,\ell) \cdot n(k',\ell) \cdot \binom{3n}{\ell}^{-1}.\]
 To analyze the right side of the equation we divide  the first sum into three parts ranging from~$4$ to~$2\lfloor \log n \rfloor$ from~$2\lfloor \log n \rfloor+1$ to~$n$ and from~$n+1$ to~$2n$.
 For the first part we have
 \begin{align*}
       &\;\sum_{\ell = 4}^{2\lfloor \log n \rfloor}\sum_{k=0}^{\lfloor\frac{\ell}{2}\rfloor} \sum_{k'=0}^{\lfloor\frac{\ell}{2}\rfloor} n(k,\ell) \cdot n(k',\ell) \cdot \binom{3n}{\ell}^{-1}\\
  \leq &\;\sum_{\ell = 4}^{2\lfloor \log n \rfloor}\sum_{k=0}^{\lfloor\frac{\ell}{2}\rfloor} \sum_{k'=0}^{\lfloor\frac{\ell}{2}\rfloor} 4\cdot (4n\log n)^{(k+k')/2} \cdot \binom{3n}{\ell}^{-1}\\
  \leq &\;\sum_{\ell = 4}^{2\lfloor \log n \rfloor}4\left(\frac{\ell}{2} + 1\right)^{2}\cdot (4n \log n)^{\ell/2} \cdot \binom{3n}{\ell}^{-1}\\
  \leq &\;\sum_{\ell = 4}^{2\lfloor \log n \rfloor} 4\left(\frac{\ell}{2} + 1\right)^{2}\cdot (4n\log n)^{\ell/2}\cdot\frac{\ell^\ell}{(3n)^{\ell}}\\
  \leq &\;\sum_{\ell = 4}^{2\lfloor \log n \rfloor}\frac{4 \cdot \ell^{\ell+2}\cdot (2\log n)^{\ell/2}}{(3n)^{\ell/2}}\\
  \leq &\;\sum_{\ell = 4}^{2\lfloor \log n \rfloor}\frac{4 \cdot (2\log n)^{2\ell}}{(3n)^{\ell/2}}\\
  \leq &\;\frac{4 \cdot (2\log n)^{8}}{(3n)^{2}} \cdot \sum_{\ell = 0}^{2\lfloor \log n \rfloor - 4}\frac{(2\log n)^{2\ell}}{(3n)^{\ell/2}}\\
  =    &\;\mathcal{O}\left(\frac{\log^{2}n}{n}\right).
 \end{align*}
 For the second part we get
 \begin{align*}
       &\;\sum_{\ell = 2\lfloor \log n \rfloor + 1}^{n}\sum_{k=0}^{\lfloor\frac{\ell}{2}\rfloor} \sum_{k'=0}^{\lfloor\frac{\ell}{2}\rfloor} n(k,\ell) \cdot n(k',\ell) \cdot \binom{3n}{\ell}^{-1}\\
  \leq &\;\sum_{\ell = 2\lfloor \log n \rfloor + 1}^{n}\sum_{k=0}^{\lfloor\frac{\ell}{2}\rfloor} \sum_{k'=0}^{\lfloor\frac{\ell}{2}\rfloor} 2\binom{n}{k} \cdot 2\binom{n}{k'} \cdot \binom{3n}{\ell}^{-1}\\
  \stackrel{(\ref{eq:binom-mult})}{\leq} &\;\sum_{\ell = 2\lfloor \log n \rfloor + 1}^{n}\sum_{k=0}^{\lfloor\frac{\ell}{2}\rfloor} \sum_{k'=0}^{\lfloor\frac{\ell}{2}\rfloor} 4\binom{2n}{k+k'} \cdot \binom{3n}{\ell}^{-1}\\
  \leq &\;\sum_{\ell = 2\lfloor \log n \rfloor + 1}^{n}\sum_{k=0}^{\lfloor\frac{\ell}{2}\rfloor} \sum_{k'=0}^{\lfloor\frac{\ell}{2}\rfloor} 4\binom{2n}{\ell} \cdot \binom{3n}{\ell}^{-1}\\
  \leq &\;\sum_{\ell = 2\lfloor \log n \rfloor + 1}^{n} 4\left(\frac{\ell}{2} + 1\right)^{2} \cdot \binom{2n}{\ell} \cdot \binom{3n}{\ell}^{-1}\\
  \stackrel{(\ref{eq:binom-inverse})}{\leq} &\;\sum_{\ell = 2\lfloor \log n \rfloor + 1}^{n} 4\left(\frac{\ell}{2} + 1\right)^{2} \cdot \left(\frac{2}{3}\right)^{\ell}\\
  \leq &\;\sum_{\ell = 2\lfloor \log n \rfloor + 1}^{n} 4\ell^{2} \cdot \left(\frac{2}{3}\right)^{\ell}\\
  \stackrel{(\ref{eq:sum})}{\leq} &\;\left(\frac{2}{3}\right)^{2\lfloor \log  n \rfloor + 1}\left(48\lfloor \log n \rfloor^{2} + 144\lfloor \log n \rfloor + 180\right)\\
  \leq &\;\frac{1}{n}\left(48\lfloor \log n \rfloor^{2} + 144\lfloor \log n \rfloor + 180\right)\\
  =    &\;\mathcal{O}\left(\frac{\log^{2}n}{n}\right).
 \end{align*}
 Finally for the last part of the sum we obtain
 \begin{align*}
       &\;\sum_{\ell = n + 1}^{2n}\sum_{k=0}^{\lfloor\frac{\ell}{2}\rfloor} \sum_{k'=0}^{\lfloor\frac{\ell}{2}\rfloor} n(k,\ell) \cdot n(k',\ell) \cdot \binom{3n}{\ell}^{-1}\\
  \leq &\;\sum_{\ell = n + 1}^{2n}\sum_{k=0}^{\lfloor\frac{\ell}{2}\rfloor} \sum_{k'=0}^{\lfloor\frac{\ell}{2}\rfloor} 2\binom{n}{k} \cdot 2\binom{n}{k'} \cdot \binom{3n}{\ell}^{-1}\\
  \stackrel{(\ref{eq:binom-mult})}{\leq} &\;\sum_{\ell = n + 1}^{2n}\sum_{k=0}^{\lfloor\frac{\ell}{2}\rfloor} \sum_{k'=0}^{\lfloor\frac{\ell}{2}\rfloor} 4\binom{2n}{k+k'} \cdot \binom{3n}{\ell}^{-1}\\
  \leq &\;\sum_{\ell = n + 1}^{2n}\sum_{k=0}^{\ell} 4\cdot\frac{\ell}{2} \cdot \binom{2n}{k} \cdot \binom{3n}{\ell}^{-1}\\
  \leq &\;\sum_{\ell = n + 1}^{2n} 2\ell \cdot 2^{2n} \cdot \binom{3n}{\ell}^{-1}\\
  \leq &\;4n^{2}\cdot2^{2n}\cdot\binom{3n}{n}^{-1}\\
  \stackrel{(\ref{eq:binom-3n-n})}{\leq} &\;4n^{2}\cdot2^{2n}\cdot\frac{2\sqrt{n}}{6^{n}}\\
  =    &\;8n^{2}\sqrt{n}\cdot\left(\frac{2}{3}\right)^{n}\\
  =    &\;\mathcal{O}\left(\frac{\log^{2}n}{n}\right).
 \end{align*}
\end{proof}

Now we have shown that~$B(G_n, \sigma)$ is with high probability odd, we also need to show that the graph~$B(G_n, \sigma)$ is with high probability rigid. Let us first observe that the automorphism group of~$G_n$ is the same as the automorphism group of the cycle~$C_{2n}$.

\begin{lemma}
 \label{le-aut}
 For $n \geq 4$, the automorphism group of the graph $G_n$ is \[\Aut(G_n) = D_{2n} = \langle(1,\dots,2n), (1,2n)\dots(n,n+1)\rangle\] the dihedral group of order~$2n$.
\end{lemma}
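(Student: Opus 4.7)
The plan is to prove both inclusions separately. The containment $D_{2n}\leq \Aut(G_n)$ is a direct verification: the rotation $\rho\colon i\mapsto i+1\pmod{2n}$ preserves every cycle edge and sends the diagonal $\{i,i+n\}$ to $\{i+1,i+n+1\}$, which is again a diagonal; the reflection $\tau\colon i\mapsto 2n+1-i$ clearly preserves cycle edges and sends $\{i,i+n\}$ to $\{2n+1-i,n+1-i\}$, whose two entries differ by $n$ and therefore form a diagonal.

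For the reverse containment, the key step is to show that every $\varphi\in \Aut(G_n)$ preserves the partition of $E(G_n)$ into cycle edges and diagonals. Once this is established, $\varphi$ restricts to an automorphism of the underlying $2n$-cycle $C_{2n}$, and since $\Aut(C_{2n})$ is generated by $\rho$ and $\tau$, we conclude $\varphi\in D_{2n}$.

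To distinguish cycle edges from diagonals by a graph-theoretic invariant, I would count the number of $4$-cycles of $G_n$ through each edge. Since $G_n$ is $3$-regular, a $4$-cycle through an edge $\{u,v\}$ is determined by a pair $c\in N(v)\setminus\{u\}$, $d\in N(u)\setminus\{v\}$ with $\{c,d\}\in E(G_n)$, so only a handful of cases arise. A short case analysis (carried out for $n\geq 4$) shows: every cycle edge $\{i,i+1\}$ lies in exactly one $4$-cycle, namely the ``ladder rung'' $(i,i+1,i+n+1,i+n)$; every diagonal $\{i,i+n\}$ lies in exactly two $4$-cycles, namely $(i-1,i,i+n,i+n-1)$ and $(i,i+1,i+n+1,i+n)$. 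Since $\varphi$ preserves the number of $4$-cycles incident to each edge, it must preserve the partition of $E(G_n)$ into cycle edges and diagonals.

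The main obstacle, though a mild one, is a careful case distinction in the $4$-cycle enumeration, especially at the boundary $n=4$ where diagonals and cycle edges are most ``entangled''. Two structural observations keep this manageable: the underlying $2n$-cycle has girth $2n\geq 8$, so no $4$-cycle uses only cycle edges; and the $n$ diagonals form a perfect matching, so no $4$-cycle uses three or four diagonals. Consequently every $4$-cycle of $G_n$ consists of exactly two cycle edges and two diagonals, which reduces the enumeration to the two families listed above. For comparison, one should note that the hypothesis $n\geq 4$ is essential: $G_3\cong K_{3,3}$ has an automorphism group of order $72$, far larger than $D_6$.
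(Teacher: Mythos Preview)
Your proof is correct and follows essentially the same approach as the paper: both distinguish diagonals from cycle edges by counting the $4$-cycles through each edge (one for cycle edges, two for diagonals), conclude that the set of diagonals is $\Aut(G_n)$-invariant, and then reduce to $\Aut(C_{2n})=D_{2n}$. Your write-up is in fact more explicit than the paper's, supplying the easy inclusion $D_{2n}\le\Aut(G_n)$, the girth/matching argument that forces every $4$-cycle to use exactly two diagonals, and the remark on why $n\ge 4$ is needed.
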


\begin{proof}
 Note that for~$n\geq4$ every 4-cycle of~$G_n$ contains at least one diagonal and consequently at least two diagonals. 
 We conclude that every diagonal of~$G_n$ is contained in two 4-cycles and that every non-diagonal is contained in only one 4-cycle.
 This implies that the set of diagonals is invariant under the automorphism group which implies that the automorphism group of~$G_n$ is the same as the automorphism group of the graph obtained by removing the diagonals, the cycle~$C_{2n}$.
\end{proof}

\begin{lemma}\label{lem:nr:of:involutions}
 Let $M_n$ be the set of those permutations of the set~$\{1,\ldots,n\}$ that are involutions. In other words, we set~$M_n := \{\varphi \in S_n\mid \varphi^{2} = 1\}$. Then $|M_n| \leq 2^{n-1} \cdot (n-1)!!$, where~$x!! = \prod_{i = 0}^{\lceil x/2\rceil -1}  (x-2i) = x(x-2)(x-4)\cdots$ is the double factorial.
\end{lemma}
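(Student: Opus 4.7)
The plan is to prove the bound by induction on $n$, based on the classical recurrence
$$|M_n| = |M_{n-1}| + (n-1)\,|M_{n-2}|.$$
I would derive this recurrence by case analysis on the image of $n$ under an involution $\varphi \in M_n$: either $\varphi(n) = n$, in which case $\varphi$ restricts to an arbitrary involution of $\{1,\dots,n-1\}$, contributing $|M_{n-1}|$; or $\varphi(n) = j$ for some $j \in \{1,\dots,n-1\}$, in which case $\varphi$ is determined by the choice of $j$ together with an arbitrary involution of the remaining $n-2$ elements, contributing $(n-1)\,|M_{n-2}|$.

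The base cases $n=1$ and $n=2$ are checked directly, giving $|M_1| = 1 = 2^{0}\cdot 0!!$ and $|M_2| = 2 = 2^{1}\cdot 1!!$ (using the standard convention $0!! = 1$). For the inductive step, I apply the hypothesis to both terms of the recurrence. The central bookkeeping observation is the identity $(n-1)\,(n-3)!! = (n-1)!!$, which is immediate from the definition of the double factorial since $(n-1)!! = (n-1)(n-3)(n-5)\cdots$. Substituting this into the recurrence yields
$$|M_n| \leq 2^{n-2}(n-2)!! + (n-1)\cdot 2^{n-3}(n-3)!! = 2^{n-2}(n-2)!! + 2^{n-3}(n-1)!!.$$

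After dividing by $2^{n-3}$, the desired inequality $|M_n| \leq 2^{n-1}(n-1)!!$ reduces to $2\,(n-2)!! \leq 3\,(n-1)!!$, which in turn is implied by the estimate $(n-1)!! \geq (n-2)!!$ for $n \geq 2$. This auxiliary estimate I would verify by a short secondary induction, handling the gap of $2$ between the two double factorials: the base cases $n = 2, 3$ are immediate, and for larger $n$ I apply $(n-1)!! = (n-1)(n-3)!!$ and $(n-2)!! = (n-2)(n-4)!!$ together with the induction hypothesis $(n-3)!! \geq (n-4)!!$ and the trivial $n-1 > n-2$.

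The whole argument is essentially an elementary calculation; the main point to be careful about is the parity-dependent behaviour of the double factorial, but this is neutralised by always stepping down by $2$, so one never needs to split into even and odd cases.
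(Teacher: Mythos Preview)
Your proof is correct. The recurrence $|M_n| = |M_{n-1}| + (n-1)|M_{n-2}|$ is standard, and your inductive bookkeeping, including the auxiliary inequality $(n-1)!! \geq (n-2)!!$, goes through without any parity issues.

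However, your route differs from the paper's. The paper does not use the recurrence at all; instead it partitions $M_n$ according to the support of the involution. For each even-sized subset $\mathcal{I} \subseteq \{1,\dots,n\}$ of size $k$, the involutions moving exactly the points in $\mathcal{I}$ are the perfect matchings on $\mathcal{I}$, of which there are $(k-1)!!$. Hence
\[
  |M_n| \;=\; \sum_{\substack{k=0\\ k\ \text{even}}}^{n} \binom{n}{k}(k-1)!!
  \;\leq\; (n-1)!!\sum_{\substack{k=0\\ k\ \text{even}}}^{n} \binom{n}{k}
  \;=\; 2^{n-1}(n-1)!!.
\]
This is a one-line combinatorial estimate: it exploits the well-known identity $\sum_{k\ \text{even}}\binom{n}{k}=2^{n-1}$ and the trivial monotonicity $(k-1)!!\leq (n-1)!!$. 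Compared with your argument, the paper's proof is shorter and more conceptual, explaining directly where the factor $2^{n-1}$ comes from (the choice of the support). Your approach, on the other hand, is entirely self-contained and does not require knowing the count of perfect matchings or the even-index binomial sum, so it may be preferable if one wants to avoid external identities.
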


\begin{proof}
 Let $\mathcal{I} \subseteq [n]$ be a set of even size $k$.
 Then there are at most $(k-1)!!$ elements in $M_n$ that move exactly those points in $\mathcal{I}$.
\end{proof}

\begin{lemma}
 \label{la:prob-rigid}
 The probability that $B(G_n, \sigma)$ is not rigid is in $\mathcal{O}(\frac{1}{n})$.
\end{lemma}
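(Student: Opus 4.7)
The plan is to enumerate all possible non-trivial automorphisms $\psi$ of $B(G_n,\sigma)$ and, via a union bound, estimate the probability that a random $\sigma$ admits one of them. Since vertices in $V_B$ have degree $3$ while those in $W_B$ have degree $4$, $\psi$ must preserve both sets. Moreover, on each side $V(G_n)\times\{i\}$, two vertices share a common $W_B$-neighbor exactly when the underlying vertices of $G_n$ are adjacent (via the identity for $i=0$, via $\sigma$ for $i=1$), and this common neighbor is unique as $G_n$ is simple. Both induced side-graphs are therefore isomorphic to $G_n$, so $\psi$ either preserves each side or swaps them, and its projection to each side is an automorphism of $G_n$ and hence belongs to $D_{2n}$ by Lemma \ref{le-aut}. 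Because every $W_B$-vertex is determined by its neighborhood, $\psi$ is then fully determined by the pair $(\alpha,\beta)\in D_{2n}^2$ of induced automorphisms on the two sides.

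In the side-preserving case, writing out the compatibility of $\psi$ with the edges of $B(G_n,\sigma)$ yields the equation $\sigma\alpha\sigma^{-1}=\beta$ as permutations of $E(G_n)$. For a fixed pair $(\alpha,\beta)\neq(1,1)$ this holds with probability $0$ unless $\alpha$ and $\beta$ are conjugate in $S_{3n}$, in which case the probability equals $|C_{S_{3n}}(\alpha)|/(3n)!$. I would make the cycle structure of each $\alpha\in D_{2n}$ on $E(G_n)$ explicit: for a rotation $\rho^k$ the outer edges split into $\gcd(k,2n)$ cycles of length $2n/\gcd(k,2n)$ and the diagonals into $\gcd(k,n)$ cycles of length $n/\gcd(k,n)$; a similar explicit computation handles reflections. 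For every non-trivial $\alpha\in D_{2n}$ the resulting centralizer is negligible compared with $(3n)!$, and summing over the at most $(4n)^2$ non-trivial pairs gives an exponentially small contribution, well within $O(1/n)$.

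In the side-swapping case the analogous compatibility condition becomes $\sigma\beta\sigma=\alpha$, which via the substitution $\rho=\sigma\beta$ transforms into $\rho^2=\alpha\beta$. The set of $\sigma$ admitting some side-swapping automorphism is thus in bijection with pairs $(\beta,\rho)\in D_{2n}\times S_{3n}$ with $\rho^2\in D_{2n}$, so its size is at most $|D_{2n}|\cdot\sum_{\gamma\in D_{2n}}|\{\rho\in S_{3n}:\rho^2=\gamma\}|$. For $\gamma=1$ this is the number of involutions, bounded by Lemma \ref{lem:nr:of:involutions}, and for $\gamma\neq 1$ an analogous combinatorial count (each $2$-cycle of $\gamma$ must come from a $4$-cycle of $\rho$, each fixed point from a $1$- or $2$-cycle, and similarly for longer cycles) yields a bound super-exponentially smaller than $(3n)!$. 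Summed over $D_{2n}$ and multiplied by $|D_{2n}|=4n$, this again contributes well below $1/n$, and combining with the first case closes the bound.

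The main technical obstacle is the cycle-type bookkeeping in the second case: since $\gamma\in D_{2n}$ admits a variety of cycle structures on $E(G_n)$, one must produce a uniform bound on the number of square roots of each such $\gamma$. The substitution $\rho=\sigma\beta$ is the key simplification, as it removes the $\beta$-dependence and reduces the whole analysis to counting square roots of a finite collection (of size $4n$) of explicit permutations, after which a combinatorial argument extending Lemma \ref{lem:nr:of:involutions} finishes the computation.
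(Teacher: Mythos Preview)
Your argument has a genuine gap at the very first reduction. You assert that any automorphism $\psi$ of $B(G_n,\sigma)$ either preserves each of the sets $V(G_n)\times\{0\}$ and $V(G_n)\times\{1\}$ or swaps them, and you justify this by observing that the common-neighbour graph on each side is a copy of $G_n$. But the common-neighbour graph on $V_B$ is \emph{not} the disjoint union of two copies of $G_n$: every $w\in W_B$ has four $V_B$-neighbours, two on each side, so the four of them form a $4$-clique in the common-neighbour graph, producing many cross-side edges. Knowing that each side induces $G_n$ does not force the partition to be $\psi$-invariant; an automorphism of $B(G_n,\sigma)$ could in principle send some vertices of side~$0$ into side~$1$ and others not. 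The paper does \emph{not} establish this dichotomy deterministically. Instead it devotes a separate claim (Claim~3) to a probabilistic argument: using the local subgraph $D^{(2)}$ (a subdivided $6$-cycle with a diagonal) and an $8$-cycle analysis, it shows that the probability of an automorphism failing to respect the partition $\{V(G_n)\times\{0\},V(G_n)\times\{1\}\}$ is $O(1/n)$. Without an argument of this type your case split is incomplete.

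Once the partition is known to be respected, your treatment of the two remaining cases is sound and close to the paper's. For the side-preserving case you arrive at exactly the paper's equation $\sigma\alpha^{E}\sigma^{-1}=\beta^{E}$ and the centraliser bound. For the side-swapping case your substitution $\rho=\sigma\beta$ reducing to $\rho^{2}\in D_{2n}^{E}$ is a nice variant: the paper instead first invokes the side-preserving case to assume $\psi^{2}=1$, which collapses the problem to counting involutions only (Lemma~\ref{lem:nr:of:involutions}). Your route avoids that dependence but in exchange requires bounding the number of square roots of every $\gamma\in D_{2n}^{E}$, which is doable but is extra work you have only sketched.
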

\begin{proof}
 Let $\sigma$ be a random permutation of $E(G_n)$.
 In order  to show the lemma we partition the set of permutations of the vertex set into three subsets and prove for each of them separately that it almost surely contains no automorphism.
 First note that vertices in~$V_B$ have degree~$3$ while vertices in~$W_B$ have degree four. Thus the sets~$V_B$ and~$W_B$ must be stabilized by every automorphism.

 \begin{claim}
 With probability $\mathcal{O}(\frac{1}{n})$, there is a non-trivial automorphism stabilizing each of the sets~$V(G_n) \times \{b\}$ for~$b\in\{0,1\}$.
 \vspace*{5pt}\\
 Let $\psi$ be such an automorphism and for~$b\in\{0,1\}$ denote by $\psi_b$ the restriction of $\psi$ to $V(G_n) \times \{b\}$.
 By the construction of $B(G_n, \sigma)$, each map $\psi_b$ has to be an automorphism of $G_n$ and $\psi(e) = \psi(\{v,w\}) = \{\psi_0(v), \psi_0(w)\}$ must hold for every edge $e = \{v,w\}\in E(G)$.
 For an automorphism~$\iota\in\Aut(G_n)$ let~$\iota^E$ be the permutation induced on the edges of~$G_n$, that is
 $\iota^E$ is the permutation on the edges of $G_n$ mapping $e = \{v,w\}$ to $\{\iota(v), \iota(w)\}$.
 Then $\psi_0^E(e) = \psi(e)$ for every edge $e \in E$.
 Furthermore $\psi_1^E(\sigma(e)) = \sigma(\psi(e))$.
 Together this means $\psi_1^E = {\sigma}\circ (\psi_0^E)\circ {\sigma}^{-1}$.
 
 We will now bound the probability for this equation to hold for a random~$\sigma$. We will use the following group theoretic fact: The number of permutations~$\sigma$ satisfying the equation is equal to the size of the centralizer~$C_{S_{3n}}(\psi_1^E)$ of~$\psi_1^E$ or~$0$ if no~$\sigma$ satisfies the equation. (Recall that this centralizer is the set of those permutations~$\sigma'$ with~$\sigma' \circ \psi_1^E \circ (\sigma')^{-1} = \psi_1^E$.)
 
 We will also use the fact that given a permutation with cycles length~$a_1,\ldots,a_t$ occurring with multiplicities~$n_1,\ldots,n_t$ the size of the centralizer is~$\prod_{i = 1}^t a_i^{n_i} (n_i)!$  (see~\cite[Subsection 2.2.3]{MR2129747} for basic information on centralizers).
 
 Let~$\Gamma\coloneqq \{\iota^E  \mid \iota\in \Aut(G) \} $ be the group of those permutations of the edges that are induced by automorphisms of~$G$.
 By Lemma \ref{le-aut}~$\Gamma$ is a dihedral group (in a non-standard action).
 Consider a rotation~$r$ of order~$t$. As a permutation of the edges, such a rotation has~$2n/t$ cycles of length~$t$ (on the non-diagonals) and either~$n/t$ cycles of order~$t$ if~$t$ is odd or~$2n/t$ cycles of order~$t/2$ if~$t$ is even (the diagonals).
 The centralizer has thus size~$|C_{S_{3n}}(r)| = t^{2n/t} (2n/t)! \cdot t^{n/t} (n/t)!$ or~$|C_{S_{3n}}(r)| = t^{2n/t} (2n/t)! \cdot (t/2)^{2n/t} (2n/t)!$.
 The second function is larger than the first (alternatively the argument given below for the second function directly translates to the first function).
 For the second function we get
 \[t^{2n/t} (2n/t)! \cdot (t/2)^{2n/t} (2n/t)! \leq \left(t^{1/t}\right)^{2n}n! \cdot \left((t/2)^{2/t}\right)^{n}n! \stackrel{(\ref{eq:sqrt})}{\leq} 2^{2n}n! \cdot 2^{n}n! \leq 8^{n}(2n)!.\]
 Next consider a reflection $s$.
 As a permutation of the edges, a reflection either consists of two cycles of length one and $(3n-2)/2$ two-cycles in case $n$ is even or $1$ one-cycle and $(3n-1)/2$ two-cycles or $3$ one-cycles and $(3n-3)/2$ two-cycles in case $n$ is odd.
 Thus the size of centralizer can be upper bounded by
 \[6 \cdot 2^{3n/2}\left(\frac{3n}{2}\right)! \leq 8^{n}(2n)!.\]
 Thus the probability that~$\psi_1^E = {\sigma}\circ (\psi_0^E)\circ {\sigma}^{-1}$ is at most~$\frac{8^{n} \cdot (2n)!}{(3n)!}$. Since~$|G| = 4n$, 
 the probability that this equation is fulfilled for some pair of automorphisms, that is $\Gamma \cap {\sigma}\Gamma{\sigma}^{-1} \neq \{1\}$, is at most $\frac{16n^{2}\cdot 8^{n} \cdot (2n)!}{(3n)!} \in \mathcal{O}(\frac{1}{n})$.
 \uend
 \end{claim}
 
 \begin{claim}[2]
 There is an automorphism mapping $V(G_n) \times \{0\}$ to~$V(G_n) \times \{1\}$ and vice versa with probability $\mathcal{O}(\frac{1}{n})$.
 \vspace*{5pt}\\
 Let $\psi$ be such an automorphism.
 Then $\psi^{2}$ stabilizes $V(G_n) \times \{b\}$, $b=0,1$, so using the first claim we can assume $\psi = \psi^{-1}$.
 Let $\psi_{b \rightarrow (1-b)}$ be the permutation on $V(G_n)$ that satisfies $\psi_{b \rightarrow (1-b)}(v) = v'$ whenever~$\psi(v,b)= (v',1-b)$.
 Then $\psi_{b \rightarrow (1-b)}$ is an automorphism of $G_n$ since for vertices~$u,v$ adjacent in~$G_n$ the vertices~$(u,b)$ and~$(v,b)$ have a common neighbor in~$B(G_n,\sigma)$ so their images~$\psi(u,b)$ and~$\psi(v,b)$ have a common neighbor in~$B(G_n,\sigma)$. Moreover $\psi_{0 \rightarrow 1} = \psi_{1 \rightarrow 0}^{-1}$.
 As before let~$\psi_{b \rightarrow (1-b)}^E$ be the permutation on the edges of $G_n$ mapping $e = \{v,w\}$ to $\{\psi_{b \rightarrow (1-b)}(v), \psi_{b \rightarrow (1-b)}(w)\}$.
 Then $\sigma(\psi(e)) = \{\psi_{0 \rightarrow 1}(v), \psi_{0 \rightarrow 1}(w)\} = \psi_{0 \rightarrow 1}^E(e)$ and $\psi(\sigma^{-1}(e)) = \psi_{1 \rightarrow 0}^E(e)$.
 Consequently we have $\sigma^{-1}(\psi_{0 \rightarrow 1}^E(e)) = \psi(e) = \psi_{1 \rightarrow 0}^E(\sigma(e))$ and $\sigma \cdot \psi_{1 \rightarrow 0}^E = (\sigma \cdot \psi_{1 \rightarrow 0}^E)^{-1}$.
 Note that for a random choice of~$\sigma$ the permutation~$\sigma \cdot \psi_{1 \rightarrow 0}^E$ is a random permutation drawn from~$S_{3n}$. However, the probability that such a random permutation is an involution is at most~$\frac{|M_{3m}|}{(3n)!}$.
 Overall, the probability that $\sigma \cdot \psi_{1 \rightarrow 0}^E = (\sigma \cdot \psi_{1 \rightarrow 0}^E)^{-1}$ is satisfied for some automorphism~$\psi_{1 \rightarrow 0}^{E}\in \Aut(G_n)$ is at most $\frac{4n \cdot |M_{3n}|}{(3n)!}$ which is in~$\mathcal{O}(\frac{1}{n})$ due to Lemma~\ref{lem:nr:of:involutions}.
 \uend
 \end{claim}
 
 \begin{claim}[3]
 There is an automorphism of other type (i.e, not respecting the partition~$\{V(G_n) \times \{0\}, V(G_n) \times \{1\}\}$) with probability $\mathcal{O}(\frac{1}{n})$.
 \vspace*{5pt}\\
 Let~$D$ be the graph obtained form a~$6$-cycle by adding a diagonal. Note first that every vertex~$v$ in~$G_n$ is contained in a subgraph of~$G_n$ isomorphic to~$D$ such that~$v$ is one of the two vertices of degree three in~$D$.
 
 This implies that every vertex~$(v,b)$ of~$B(G,\sigma)$ is contained in a subgraph isomorphic to~$D^{(2)}$, the graph obtained from~$D$ by subdividing every edge, such that~$(v,b)$ is one of the two vertices of degree three.
 
 Consequently, for every pair of vertices~$(v_0,0)$ and~$(v_1,1)$ there are two subgraphs~$D_0^{(2)}$ and~$D_1^{(2)}$ of~$B(G,\sigma)$ each isomorphic to~$D^{(2)}$ such that~$(v_i,i)$ is a vertex of degree~$3$ in~$D^{(2)}_i$ and such that~$D_i^{(2)}$ only contains vertices from~$V(G_n) \times \{i\} \cup E(G_n)$. In particular, the graphs~$D_0^{(2)}$ and~$D_1^{(2)}$ are vertex disjoint when it comes to vertices not being in~$E(G_n)$.
 
 We argue now that it is improbable that there are vertices~$v,u$ adjacent in~$G_n$ such that for~$(v,0)$ and~$(u,0)$ we can find such subgraphs~$D_0^{(2)}$ and~$D_1^{(2)}$.
 More precisely, let~$A(v,u)$ be the event that in~$G(B,\sigma)$ there are two graphs~$D_v^{(2)}$ and~$D_u^{(2)}$ each isomorphic to~$D^{(2)}$ such that~$V(D_0^{(2)})\setminus E(G_n)$ and~$V(D_1^{(2)}) \setminus E(G_n)$ are disjoint and such that~$v$ and~$u$ are vertices of degree~$3$ in~$D_v^{(2)}$ and~$D_u^{(2)}$, respectively. We argue that the probability of~$A(v,u)$ is at most~$\mathcal{O}(\frac{1}{n^2})$ for a given pair~$v,u$ of adjacent vertices of~$G_n$.
 
 Let~$E(v)$ be the event that~$(v,0)$ is contained in an~$8$-cycle of which a second neighbor of~$v$ lies in~$V(G_n) \times \{1\}$. Suppose that the event~$A(v,u)$ occurs. Then~$E(v)$ occurs due to the following argument. The vertex~$(v,0)$ is contained in two~$8$-cycles which share exactly one second neighbor of~$(v,0)$. However another second neighbor, namely~$(u,0)$, cannot be part of either of the cycles since the graphs~$D_0^{(2)}$ and~$D_1^{(2)}$ are disjoint when it comes to vertices in~$V(G_n) \times \{0\}$.
 
 We argue now that the probability of~$E(v)$ is in~$O(\frac{1}{n})$. 
 To see this, we consider several cases depending on how the vertices are distributed onto the two parts~$V(G_n) \times \{0\}$ and~$V(G_n) \times \{1\}$. Up to symmetry there are four options. The cycle has the form~$(0,0,1,1)$, $(0,1,0,1)$, $(0,1,1,0)$ or~$(0,1,1,1)$. (Here the notation describes the projection onto the second component of every second vertex on the cycle.)
 
 For the first case~$(0,0,1,1)$, it must be that~$v$ and a second neighbor of~$v$ both have second neighbors in Part~$1$ that are second neighbors. The probability for that is in~$\mathcal{O}(\frac{1}{n})$. 
 
 For the second case~$(0,1,0,1)$, it must be that two second neighbors~$x,x'$ of~$v$ in Part~$1$ have another common second neighbor $y$ in Part 0. This can happen in two ways. Either~$x$ and~$x'$ are second neighbors, the probability of which is~$\mathcal{O}(\frac{1}{n})$ or they are pairs with edges incident to the same vertex~$y$, the probability of which is also~$\mathcal{O}(\frac{1}{n})$.
 
 The third case is similar to the first.
 For the fourth case, the probability that~$v$ has two second neighbors that have a common second neighbor in Part~$1$ is~$\mathcal{O}(\frac{1}{n})$.  This proves that the probability of~$E(v)$ is in~$O(\frac{1}{n})$.
 
 To finish the proof we observe that the probability of~$E(u)$ under the condition that~$E(v)$ is also in~$\mathcal{O}(\frac{1}{n})$. Indeed, assuming one of the different possibilities in which~$E(v)$ can occur, the argument remains valid to show that~$E(u)$ has a probability of~$\mathcal{O}(\frac{1}{n})$.
 
 Since~$A(u,v)$ can only happen if both~$E(v)$ and~$E(u)$ happen we conclude that~$A(u,v)$ happens with probability~$\mathcal{O}(\frac{1}{n^2})$. Since~$G_n$ has only a linear number of edges, the probability that there exist adjacent~$u,v$ such that~$A(u,v)$ is in~$\mathcal{O}(\frac{1}{n})$.
 
 Finally, if~$\psi$ is an automorphism that does not respect the parts~$\{V(G_n) \times \{0\}, V(G_n) \times \{1\}\}$ then there is a pair of adjacent vertices~$u,v$ from the same part mapped to vertices in different parts. However, this implies the event~$A(u,v)$.
 \uend
 \end{claim}
\end{proof}

\begin{lemma}
 \label{la:prob-distance-two-twins}
 The probability that there are distinct vertices $w_1,w_2 \in W_B$ with $N_{B(G_n,\sigma)}^{2}(w_1) = N_{B(G_n,\sigma)}^{2}(w_2)$ is $\mathcal{O}(\frac{1}{n})$.
\end{lemma}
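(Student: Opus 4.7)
First, I would reformulate the second neighborhoods in $B(G_n,\sigma)$ in terms of adjacency in $G_n$. For $w = e \in W_B = E(G_n)$, let
\[S_0(e) := \{e' \in E(G_n) : e' \neq e,\ e \cap e' \neq \emptyset\} \quad\text{and}\quad S_1(e) := \sigma^{-1}(S_0(\sigma(e))).\]
Tracing through the definition of $B(G_n,\sigma)$ yields $N_{B(G_n,\sigma)}^{2}(w) = S_0(e) \cup S_1(e)$. Since $G_n$ is $3$-regular and triangle-free for $n \geq 3$, both $S_0(e)$ and $S_1(e)$ have exactly $4$ elements.

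Next, I would deduce structural consequences of the equality. Suppose $e_1 \neq e_2$ with $N^2(w_1) = N^2(w_2)$. If $e_1, e_2$ shared a vertex in $G_n$, then $e_2 \in S_0(e_1) \subseteq N^2(w_1) = N^2(w_2)$, contradicting $w_2 \notin N^2(w_2)$. The analogous argument applied to $S_1$ shows that $\sigma(e_1), \sigma(e_2)$ must also be vertex-disjoint in $G_n$. Writing $f_i := \sigma(e_i)$, the identity $S_0(e_1) \cup S_1(e_1) = S_0(e_2) \cup S_1(e_2)$ then forces
\[\sigma(S_0(e_1)\setminus S_0(e_2))\subseteq S_0(f_2)\quad\text{and}\quad\sigma(S_0(e_2)\setminus S_0(e_1))\subseteq S_0(f_1),\]
since every element of $S_0(e_1)\setminus S_0(e_2)$ must lie in $S_1(e_2)$ and vice versa.

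I would then partition admissible pairs by $k := |S_0(e_1)\cap S_0(e_2)|$. Every edge in the intersection has one endpoint in $\{u_1,v_1\}$ and one in $\{u_2,v_2\}$, so a short pigeonhole argument combined with triangle-freeness of $G_n$ bounds $k \leq 2$: any $k \geq 3$ would force one endpoint of $e_1$ to be adjacent to both endpoints of $e_2$, yielding a triangle with $e_2$. Direct inspection of $G_n$ (cycles with diagonals) shows that only $O(1)$ choices of $e_2$ per $e_1$ realize $k \geq 1$, so there are $O(n)$ such pairs in total while the count of all vertex-disjoint pairs is $O(n^2)$. For fixed $(e_1,e_2,f_1,f_2)$ meeting the disjointness constraints, the number of permutations $\sigma$ of $E(G_n)$ satisfying $\sigma(e_i) = f_i$ and the two inclusions above is at most $(4!/k!)^2 \cdot (3n-10+2k)!$, giving a probability $O(n^{-(10-2k)})$; summing over the $O(n^2)$ valid choices of $(f_1,f_2)$ then bounds the per-pair probability by $O(n^{-(8-2k)})$.

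Finally, taking a union bound over pairs gives contributions of $O(n^2) \cdot O(n^{-8}) = O(n^{-6})$ from $k=0$, $O(n) \cdot O(n^{-6}) = O(n^{-5})$ from $k=1$, and $O(n) \cdot O(n^{-4}) = O(n^{-3})$ from $k=2$, each well within the claimed bound $\mathcal{O}(1/n)$. The main obstacle is the combinatorial case analysis showing $k \leq 2$ and precisely enumerating the low-$k$ pairs in $G_n$ (essentially diagonally or cyclically ``parallel'' edge pairs), but this reduces to a finite check using only the triangle-freeness and $3$-regularity of $G_n$.
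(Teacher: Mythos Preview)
Your argument is correct, but it diverges from the paper's proof in several respects.

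The paper uses only the one-sided inclusion $S_0(e_1) \subseteq S_0(e_2) \cup S_1(e_2)$ rather than the full equality, and it never deduces that $e_1,e_2$ (or $\sigma(e_1),\sigma(e_2)$) are vertex-disjoint. Instead it splits directly on $|S_0(e_1)\cap S_0(e_2)|\in\{\le 1,\,2\}$, observes that at most $2|E(G_n)|$ ordered pairs realise the value $2$, and for each pair bounds the probability that the $4-k$ elements of $S_0(e_1)\setminus S_0(e_2)$ land in the random $4$-set $S_0(\sigma(e_2))$ without conditioning on the value of $\sigma(e_2)$. This gives per-pair probabilities of order $n^{-2}$ and $n^{-3}$ respectively, so the union bound lands exactly on $\mathcal{O}(1/n)$.

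Your route---deducing disjointness of both $\{e_1,e_2\}$ and $\{f_1,f_2\}$ from $w_i\notin N^2(w_i)$, then conditioning on $(f_1,f_2)$ and using both inclusions $\sigma(S_0(e_i)\setminus S_0(e_{3-i}))\subseteq S_0(f_{3-i})$---is more elaborate but yields a genuinely stronger bound, $\mathcal{O}(n^{-3})$. The extra work buys you that improvement; the paper's approach buys brevity. Two minor remarks: your bound $k\le 2$ and your $O(n)$ count of pairs with $k\ge 1$ follow already from $3$-regularity and triangle-freeness, so no ``direct inspection of $G_n$'' is needed; and in your permutation count the factor $(4!/k!)^2$ ignores possible overlap between $S_0(f_1)$ and $S_0(f_2)$, but since you are only after an upper bound this is harmless.
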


\begin{proof}
 Let $e_1,e_2 \in W_B = E(G_n)$ be distinct edges with $N_{B(G_n,\sigma)}^{2}(e_1) = N_{B(G_n,\sigma)}^{2}(e_2)$.
 For an edge $e \in E(G_n)$ we define the set $N(e) = \{e' \in E(G_n) \mid e \neq e' \wedge e \cap e' \neq \emptyset\}$.
 Observe that $N_{B(G_n,\sigma)}^{2}(e) = N(e) \cup \sigma^{-1}(N(\sigma(e)))$.
 We now estimate the probability that $N(e_1) \subseteq N(e_2) \cup \sigma^{-1}(N(\sigma(e_2)))$.
 
 For every two distinct edges $e, e' \in E(G_n)$ we have that $|N(e)| = 4$ and $|N(e) \cap N(e')| \leq 2$.
 Furthermore there are at most $2 \cdot |E(G_n)|$ many pairs of distinct edges $e,e' \in E(G_n)$ with $|N(e) \cap N(e')| = 2$.
 If $e$ is a diagonal edge then $e'$ has to be one of the two neighboring diagonal edges and otherwise $e'$ is the edge opposite to $e$.
 
 We now distinguish two cases.
 If $|N(e_1) \cap N(e_2)| = 2$ then $|N(e_1) \cap \sigma^{-1}(N(\sigma(e_2)))| \geq 2$.
 In other words $|N(e_1) \setminus N(e_2)| = 2$ and $\sigma(N(e_1)\setminus N(e_2)) \subseteq N(\sigma(e_2))$.
 The probability for this to happen for fixed edges $e_1,e_2 \in E(G_n)$ is $\frac{12}{(|E(G_n)|-1)\cdot(|E(G_n)|-2)}$.
 
 Otherwise $|N(e_1) \cap N(e_2)| \leq 1$.
 So $|N(e_1) \setminus N(e_2)| \geq 3$ and $\sigma(N(e_1)\setminus N(e_2)) \subseteq N(\sigma(e_2))$.
 The probability here is upper bounded by $\frac{24}{(|E(G_n)|-1)\cdot(|E(G_n)|-2)\cdot(|E(G_n)|-3)}$.
 
 Summing over all pairs of distinct edges $e_1,e_2 \in E(G_n)$ and using that the first case only occurs a linear number of times we obtain the desired bound.
\end{proof}

\begin{proof}[Proof of Theorem~\ref{thm:construction:is:rigid}]
 The theorem follows directly from Lemmas \ref{la:rigidity-cfi-graph}, \ref{la:prob-odd}, \ref{la:prob-rigid} and \ref{la:prob-distance-two-twins}.
\end{proof}

\end{appendix}

\end{document}